\documentclass[11pt]{article}
\usepackage{amssymb,amsmath,amsthm,mathtools}
\usepackage{array,geometry}  
\usepackage{setspace}
\usepackage{bm,paralist,dsfont,nicefrac}
\usepackage{etoolbox}
\usepackage[normalem]{ulem}
\usepackage{graphicx}
\usepackage{tabularx}
\usepackage{multicol}

\usepackage{tikz,float,booktabs,pifont,multirow,subcaption}
\usepackage{pgfplots}
\pgfplotsset{compat=1.18}
\usepackage{makecell}
\usetikzlibrary{fit,calc,math,shapes,shapes.multipart,decorations.text,arrows,decorations.markings,decorations.pathmorphing,shapes.geometric,positioning,decorations.pathreplacing, patterns}

\usepackage{natbib}
\defcitealias{DMM+21tight}{Dudycz, Manurangsi, Marcinkowski, and Sornat (2020)}

\usepackage[pagebackref,colorlinks]{hyperref}
\hypersetup{linkcolor=[rgb]{.7,0,0}}
\hypersetup{citecolor=[rgb]{.2,.7,.2}}
\hypersetup{urlcolor=[rgb]{.7,0,.7}}
\renewcommand*{\backref}[1]{}
\renewcommand*{\backrefalt}[4]{%
    \ifcase #1 (Not cited.)%
    \or        (Cited on page~#2)%
    \else      (Cited on pages~#2)%
    \fi}

\usepackage[capitalize, nameinlink, noabbrev]{cleveref}
\AddToHook{cmd/appendix/before}{%
	\crefalias{section}{appendix}%
	\crefalias{subsection}{appendix}
}

\usepackage{xcolor}

\usepackage{thmtools,thm-restate}
\usepackage{enumitem}
\usepackage{nicematrix}

\allowdisplaybreaks

\newcommand{\A}{\mathcal{A}}
\newcommand{\APXH}{\textrm{\textup{APX-hard}}}

\newcommand{\coNPC}{\textrm{\textup{coNP-complete}}}
\newcommand{\E}{\mathbb{E}}
\newcommand{\EJR}{\textrm{\textup{EJR}}}
\newcommand{\EJRplus}{\textrm{\textup{EJR$+$}}}
\newcommand{\eps}{\varepsilon}
\renewcommand{\epsilon}{\varepsilon}
\newcommand{\fPO}{\textrm{\textup{fPO}}}
\newcommand{\I}{\mathcal{I}}
\newcommand{\JR}{\textrm{\textup{JR}}}
\newcommand{\localPAV}[1]
{#1-\textup{local-PAV}}
\newcommand{\N}{\mathbb{N}}
\newcommand{\NP}{\textrm{\textup{NP}}}
\newcommand{\NPH}{\textrm{\textup{NP-hard}}}

\DeclareMathOperator{\OPT}{\textsf{\textup{OPT}}}
\DeclareMathOperator{\PAV}{\textsf{\textup{PAV}}}
\newcommand{\PO}{\textup{PO}}
\newcommand{\Poi}{\textsf{\textup{Poi}}}
\newcommand{\poly}{\textrm{\textup{Poly-time}}}

\DeclareMathOperator{\supp}{supp}

\newcommand{\w}{\mathbf{w}}

\Crefname{algocf}{Algorithm}{Algorithms}

\usepackage[linesnumbered,ruled,vlined]{algorithm2e}

\newtheorem{theorem}{Theorem}[section]
\newtheorem*{theorem*}{Theorem}

\newtheorem{definition}[theorem]{Definition}
\newtheorem*{definition*}{Definition}

\newtheorem{lemma}[theorem]{Lemma}
\newtheorem*{lemma*}{Lemma}

\newtheorem*{claim*}{Claim}

\newtheorem*{fact*}{Fact}

\newtheorem*{observation*}{Observation}

\newtheorem*{conjecture*}{Conjecture}

\newtheorem*{corollary*}{Corollary}

\newtheorem{remark}[theorem]{Remark}
\newtheorem*{remark*}{Remark}

\newtheorem{proposition}[theorem]{Proposition}
\newtheorem*{proposition*}{Proposition}

\newtheorem{example}[theorem]{Example}
\newtheorem*{example*}{Example}

\title{Finding Representative and Approximately Efficient Committees}
\usepackage{authblk}

\author[1]{Dominik Peters}
\author[2]{Rohit Vaish}
\author[2]{Jatin Yadav}
\affil[1]{CNRS, LAMSADE, Universit\'e Paris Dauphine - PSL; \texttt{mail@dominik-peters.de}}
\affil[2]{Indian Institute of Technology Delhi; \texttt{\{rvaish,csz237549\}@iitd.ac.in}} 
\date{\vspace{-1cm}}

\usepackage{colortbl}
\colorlet{myred}{red!25}
\colorlet{myblue}{blue!25}
\colorlet{mygreen}{green!25}

\usepackage{todonotes}

\begin{document}

\maketitle

\begin{abstract}
In approval-based committee voting, proportional approval voting (PAV) is a well-studied rule that combines proportional representation with Pareto efficiency. However, computing a PAV committee is \NPH{}, raising a natural question: Can the proportionality and efficiency properties of PAV be achieved via computationally efficient procedures?

We make two contributions toward answering this question. First, building on the known proportionality guarantees of the local-search-based variant of PAV (or local PAV), we systematically study its efficiency properties. We show that local PAV committees are weakly Pareto optimal, meaning that no other committee is strictly preferred by every voter. We also identify limitations: Local PAV guarantees only a $2$-approximation to fractional Pareto optimality ($2$-\fPO{}) and a $2/3$-approximation to the optimal PAV score, and both bounds are tight. In contrast, global PAV is Pareto optimal and satisfies the stronger $\alpha^\star$-\fPO{} guarantee, where $\alpha^\star \approx 1.346$ is the unique solution of $\int_0^{\alpha^\star} \frac{1-e^{-y}}{y} \, dy = 1$, and this approximation is tight.

Second, we design a polynomial-time algorithm that combines the best of these guarantees. The committee returned by our algorithm satisfies \EJRplus{} (a proportionality guarantee), $\alpha^\star$-\fPO{}, and weak Pareto optimality. It also achieves a $0.79$-approximation to the optimal PAV score, matching the best possible polynomial-time approximation assuming $\textrm{\textup{P}} \neq \NP{}$. Our algorithm works by pipage rounding a concave relaxation of the PAV objective and using that committee to initialize local PAV, thereby combining global approximation guarantees with local search stability.
\end{abstract}

\section{Introduction}
\label{sec:Introduction}

In approval-based committee elections \citep{LS23book}, our task is to select a committee $W \subseteq C$ of $k$ candidates from a set $C$ of candidates, based on approval preferences of a set $N$ of $n$ voters, where each voter $i \in N$ reports a subset $A_i \subseteq C$ of candidates that $i$ approves. One of the best-known voting rules in this setting was proposed by \citet{Thie1895}, and is now known as \emph{Proportional Approval Voting} (PAV). It is based on global optimization, and selects the committee $W \subseteq C$ with $|W| = k$ maximizing the objective function
\[
\sum_{i = 1}^n \left(1 + \frac12 + \frac13 + \dots + \frac{1}{|W \cap A_i|}\right).
\]
Thus, PAV optimizes the sum of the harmonic numbers induced by voter utilities $|W \cap A_i|$ corresponding to the number of winning candidates that they approve. This rule has been shown to satisfy formal properties that encode the idea of \emph{proportional representation}, where groups of voters with similar preferences should be represented in the outcome by a number of candidates proportional to the group's size. A classic formalization of this idea is \emph{Extended Justified Representation} (\EJR{}) \citep{ABC+17justified} and its recent strengthening \EJRplus{} \citep{BP23robust}.

The PAV rule also satisfies the efficiency criterion called \emph{Pareto optimality} (PO), which requires that there is no alternative committee $W'$ that all voters weakly prefer over $W$ (i.e., $|W' \cap A_i| \ge |W \cap A_i|$ for all $i \in N$), with at least one voter strictly preferring $W'$ (i.e., $|W' \cap A_i| > |W \cap A_i|$ for some $i \in N$). While \EJRplus{} is satisfied by several other voting rules (such as the Method of Equal Shares; \citealp{PS20proportionality}), PAV is the only known natural rule that satisfies both \EJRplus{} and \PO{}. However, computing the optimal PAV committee is \NPH{}~\citep{AGG+15computationalaspects}. Indeed, there is currently no known polynomial-time algorithm for computing a committee that satisfies both \EJRplus{} and \PO{}. Therefore, we will investigate which proportionality and efficiency properties of PAV can be achieved through an efficient algorithm.

\subsection*{Measuring Efficiency}

To compare different algorithms, we need a way of measuring their efficiency. Pareto optimality is one option, but we will also consider others. In particular, Pareto optimality can be relaxed to \emph{weak} Pareto optimality, which only requires that there not be a committee $W'$ that \emph{all} voters strictly prefer to $W$. One can also relax Pareto optimality using a multiplicative approximation, so that $W$ is $\alpha$-\PO{} ($\alpha \ge 1$) if there is no committee $W'$ such that $|W' \cap A_i| \ge \alpha \cdot |W \cap A_i|$ for all voters $i \in N$ with at least one inequality strict.

Instead of relaxing Pareto optimality, we can also \emph{strengthen} it by considering the concept of \emph{fractional Pareto optimality} (\fPO{}), recently studied in the context of approval-based committee elections by \citet{BBF+26fractional}. A committee satisfies \fPO{} if it is not even dominated by a \emph{fractional committee} (in which candidates can be fractionally part of the committee). Although it may seem counterintuitive to consider a stronger property when \PO{} is already challenging to guarantee, the motivation to consider \fPO{} lies in its better algorithmic behavior. Specifically, we can verify in polynomial time whether a committee is \fPO{} by using a linear program, whereas checking \PO{} is known to be \coNPC{}~\citep{AM20computing}. 

Additionally, it is interesting to consider a multiplicative approximation, $\alpha$-\fPO{}, which forbids the existence of a fractional committee that improves each voter's utility by at least an $\alpha$ factor. Notably, the PAV rule does not satisfy \fPO{}~\citep{BBF+26fractional}, but as we will show, it satisfies $1.346$-\fPO{}---indicating that we can use the approximation factor of an algorithm as a metric for its efficiency. Beyond its algorithmic appeal, $\alpha$-\fPO{} may also be normatively desirable in certain contexts. \PO{} can be seen not only as an efficiency notion but also as a stability notion \citep[cf.][]{EFF20pricepareto}, in that if a committee $W$ is Pareto dominated by $W'$, then voters have a justified complaint against $W$. However, voters may also complain if they can identify a lottery over committees that makes each of them better off ex ante compared to the chosen committee $W$. Especially in cases where the ex-ante utility goes up by a large factor for every voter, voters may be willing to bear the risk inherent in a random choice.\footnote{Note that each fractional committee can be represented as a lottery over integral committees~\citep{ALM+19strategyproof}.}
Another motivation for approximate \fPO{} comes from the \emph{best-of-both-worlds} literature on lotteries over deterministic committees, which seeks to combine ex-ante and ex-post guarantees~\citep{aziz2023bestofbothworlds,SV24maximumflowfairnetwork,jin2026limitationsbobw}. In order to combine approximate ex-ante efficiency with ex-post representation, it is necessary to be able to compute committees that satisfy $\alpha$-\fPO{}, not just $\alpha$-\PO{}.

A final measure of efficiency that we will consider is how well an algorithm approximates the optimal PAV score. While it is not clear that a committee with a higher PAV score is \emph{always} better than one with a lower score (especially since PAV's attractive properties like \EJRplus{} can fail even for committees with high PAV scores), often this will be the case: a higher PAV score generally indicates higher voter utilities, especially among voters with low utility. For a similar reason, approximations of the Nash product objective have been studied in the theory of fair allocation~\citep[e.g.,][]{CG18nashsocialwelfare,BKV18ef1}.

\subsection*{Our Contributions}

\newcommand{\newresult}{\cellcolor{blue!15}}
\begin{table}[t]
	\centering
	\small
	\begin{tabular}{ccccccc}
		\toprule
		Voting Rule & Running time & Fairness & $\alpha$-\fPO{} & $\beta$-\PO{} & Weak \PO{} & $\gamma$-$\PAV$ \\
		\midrule
		PAV & \APXH{} & \EJRplus{} & \newresult $\alpha^\star \approx 1.346$ & $1$ & Yes & $1$\\
		& & & & & & \\
		Exact local PAV ($\tau=0$) & Superpolynomial$^\dagger$ & \EJRplus{} & \newresult $2$ & \newresult $1+\eps$ & \newresult Yes & \newresult $2/3$\\
		& & & & & & \\
		\localPAV{$\tau$}, $\tau = \frac{1}{2k^2}$ & \poly{} & \EJRplus{} & \newresult $2$ & \newresult $1+\eps$ & \newresult Yes & \newresult $ 2/3$\\
		& & & & & & \\
		DMMS algorithm & \poly{} & \newresult Fails \JR{} & \newresult $1$ & \newresult $1$ & \newresult Yes & $\gamma^\star \approx 0.79$\\
		\citep{DMM+21tight} & & & & & & \\
		& & & & & & \\
		
		Our algorithm (Thm.~\ref{thm:Main-Algo}) & \poly{} & \newresult \EJRplus{} & \newresult $\alpha^\star \approx 1.346$ & \newresult $1+\eps$ & \newresult Yes & \newresult $\gamma^\star \approx 0.79$\\
		
		\bottomrule
	\end{tabular}
	\caption{Summary of known and contributed results (the latter are highlighted in shaded boxes). Each row corresponds to a voting rule, while the columns display the best-known bounds on the worst-case running time, representation axiom, weak \PO{}, and multiplicative approximation to \fPO{}, \PO{}, and the optimal PAV score. Here, $\eps > 0$ is arbitrary. The $\dagger$ entry corresponds to the $\Omega(k^{\log k})$ lower bound of~\citet{KE24lower} for lexicographic better-response dynamics.}
	\label{tab:Results}
\end{table}

The paper makes three main contributions: First, we analyze the efficiency properties of the global PAV rule and its local search-based variant called \emph{local PAV}~(\Cref{subsec:Results_local_PAV,subsec:Results_global_PAV}). Second, we present a polynomial-time \emph{round-and-swap} algorithm based on pipage rounding that meets the representation and efficiency guarantees of the PAV rule~(\Cref{subsec:Results_Our_Algorithm}). Third, we outline several natural proof strategies for combining representation (\JR{} or \EJRplus{}) with exact \fPO{} and explain their limitations~(\Cref{appendix:Limitations}). \Cref{tab:Results} provides an overview of our positive results.

\paragraph{Efficiency Properties of Local PAV.} 
A \localPAV{$\tau$} committee is one for which no swap of a winning candidate and a losing candidate increases the PAV objective by at least $\tau$. When $\tau \le \frac{n}{k^2}$, the \localPAV{$\tau$} rule satisfies essentially the same proportionality properties as global PAV, including \EJRplus{}. Additionally, for $\tau \geq \frac{1}{\textup{poly}{(n,k)}}$, a \localPAV{$\tau$} committee can be found in polynomial time through local search.

While \localPAV{$\tau$} committees can fail to satisfy \PO{} for any $\tau \ge 0$, we establish the surprising result that whenever $\tau \le \frac{n}{k^2}$, \localPAV{$\tau$} committees satisfy weak \PO{}, and for $\tau \le \frac{1}{k^2}$, they satisfy $(1+\epsilon)$-\PO{} for every $\epsilon > 0$~(\Cref{tab:Local-PAV-Results}). Thus, despite being local optima, local PAV committees satisfy global efficiency properties. However, local PAV only provides a $2$-\fPO{} guarantee in the worst case, compared to the $1.346$-\fPO{} guarantee of global PAV~(\Cref{thm:global_PAV_properties}). Furthermore, there exist instances where the PAV score of a local PAV committee is at most $\frac{2}{3}$ of the optimal PAV score, which is below the $0.79$-approximation that is known to be achievable in polynomial time~\citep{DMM+21tight}.

\paragraph{Rounding-Based Algorithm.}

In \Cref{subsec:Results_Our_Algorithm}, we develop a polynomial-time algorithm whose output is a committee satisfying \EJRplus{}, weak \PO{}, $1.346$-\fPO{}, and a $0.79$-approximation to the PAV score. Thus, our algorithm meets the fairness and efficiency properties of PAV while circumventing the associated computational intractability.

Our algorithm (\Cref{alg:Pipage-and-swap}) involves three main steps: First, it computes a fractional committee $\widehat{x}$ that approximately maximizes a concave function (to within an inverse-polynomial additive error) of voters' utilities $\Psi(x) \coloneqq \sum_i h(u_i(x))$, where $h(z) \coloneqq \int_0^z \frac{1-e^{-y}}{y} \, dy$. The function $\Psi(\cdot)$ provides a smooth lower bound to the multilinear extension of the PAV score $F_{\PAV}$. Second, it applies pipage rounding to $\widehat{x}$ (with respect to $F_{\PAV}$) to compute an integral committee $W_0$; doing so ensures that the PAV score of $W_0$ is at least $\Psi(\widehat{x})$. However, the committee $W_0$ may not satisfy \EJRplus{}. To fix this, the third part of the algorithm runs a local search phase, wherein, starting from the committee $W_0$, it swaps specific elected and unelected candidates to improve the PAV score. (Specifically, starting from $W_0$, our algorithm performs PAV improving swaps until it reaches a \localPAV{$\frac{1}{2k^2}$} committee.)
The final committee $W$ satisfies \EJRplus{} alongside the aforementioned efficiency guarantees.

\paragraph{Existence of \JR{} and \fPO{} Committees.}
The literature on discrete fair division has developed several efficient algorithms for finding fair and Pareto optimal allocations~\citep{BKV18ef1,GM24fPO,M25polynomial}. These algorithms predominantly produce \fPO{} allocations, as this class is more structured and can be verified tractably. In contrast, for approval-based committee voting, no efficient algorithms are known for finding committees that are \EJRplus{} (or even just \JR{}) and \PO{}. A natural approach to designing such an algorithm is to investigate the \emph{existence} of committees satisfying \EJRplus{} (or \JR{}) alongside \fPO{}. 

Shortly before publishing this paper as a preprint, \citet[Theorem 1]{jin2026limitationsbobw} discovered a sophisticated counterexample where no \JR{} committee satisfies \fPO{}.
While we were working on this paper, this existence question had not yet been solved, and we had tried many natural strategies to prove existence. In \Cref{appendix:Limitations}, we provide counterexamples for each. While these are now formally superseded, we include them in case they prove useful in future research (perhaps about relaxations) as test cases. Our attempted proof strategies include maximizing PAV score subject to \fPO{}, maximizing approval score subject to \JR{}, defining analogous algorithms to the ones used in fair allocation and fractional committee voting, looking for a committee that maximizes weighted utilitarian welfare where the weights satisfy certain sufficient conditions to guarantee \JR{}, as well as rounding efficient fractional committees.

\subsection*{Related Work}

The computational complexity of PAV, as well as the broader class of Thiele methods (which replace the harmonic numbers $1 + \frac12 + \dots + \frac{1}{|W \cap A_i|}$ in the objective by other values $w_1 + w_2 + \dots + w_{|W \cap A_i|}$) has been widely studied. Computing an optimal PAV committee is known to be \NPH{}~\citep[Corollary 2]{AGG+15computationalaspects}, and the same is true for all Thiele methods for which there exists some $t < k$ with $w_t > w_{t+1}$ \citep[Theorem 5]{SFL16finding}. On some restricted domains, the problem becomes polynomial-time solvable, including the candidate interval (CI) domain \citep[Corollary 14]{PL20spoc} and voter interval (VI) domain \citep{ALS+26computing,MS26polynomial}; however, the problem remains hard on the party approval domain where each candidate has $k$ copies \citep[Theorem 5.1]{BGPSW24approvalapportionment}. %
The problem is fixed-parameter tractable (\textrm{\textup{FPT}}) with respect to several parameters \citep{YW23parameterized}.

Pareto optimality in committee elections was studied from a computational point of view by \citet[Section 6.1]{AM20computing} who proved that checking whether a given committee is \PO{} is \coNPC{} and \textrm{\textup{W[2]-complete}} with respect to parameter $k$. \citet[Section 6]{LS20utilitarian} considered \PO{} as an axiom for committee voting rules, showing that PAV satisfies it but that a variety of other rules fail it (including Sequential Phragm\'en, Monroe's rule, and Sequential PAV); see \citet[Appendix A.1]{LS23book} for further examples, including the Method of Equal Shares. \citet{S26pareto} studied \PO{} in restricted domains including CI and VI, providing an efficient algorithm for finding a committee satisfying \EJRplus{} and \PO{} in these domains. He also studied the connection between \PO{} and committee monotonicity and reconfiguration. \citet{BBF+26fractional} study \fPO{} for approval-based committee elections, providing equivalent definitions and verification algorithms and showing that PAV violates \fPO{}.

At a technical level, the work of \citetalias{DMM+21tight} is closest to ours. They present a $0.79$-approximation algorithm for the PAV score by optimizing a piecewise-linear relaxation and applying pipage rounding, and show that obtaining a better approximation is \NPH{}. Their analysis applies more generally to \emph{geometrically dominant} Thiele rules, including approval voting, Chamberlain-Courant, and PAV. These rules are defined by a nonnegative, nonincreasing weight vector $\mathbf{w}$, normalized by $w_1 = 1$, that satisfies $w_i \cdot w_{i+2} \geq w_{i+1}^2$ for all $i \in \mathbb{N}$. The authors provide a $\gamma_{\w}$-approximation algorithm for any geometrically dominant Thiele rule $\mathbf{w}$, where $\gamma_{\w} \coloneqq \E_{t \sim \Poi(1)}[w_1 + \dots + w_t]$ and $\Poi(1)$ denotes a Poisson random variable with mean $1$. Additionally, if $\lim_{i\to\infty} w_i = 0$, they show that for any $\eps > 0$, it is NP-hard to compute a solution that is $(\gamma_{\w} + \eps)$-approximate. The $\gamma_{\w}$ value for Chamberlain-Courant is $1-1/e \approx 0.632$, and that for PAV is $\approx 0.796$. We will use $\gamma^\star$ to denote the inapproximability threshold for PAV.

However, their algorithm, which we will refer to as the DMMS algorithm, differs from ours in two important ways. The first difference is the choice of relaxation, plotted in~\Cref{fig:harmonic}. Their algorithm optimizes a piecewise linear interpolation of the PAV function, while we use a different objective explained in the next paragraph. Second, their algorithm does not involve a local search phase, and therefore its output committee can fail to satisfy \JR{}~(\Cref{appendix:DMMS_Fails_JR}). Importantly, augmenting their algorithm with local search can ensure justified representation but the output might not satisfy $\alpha^\star$-\fPO{} (see~\Cref{lem:dmms_not_JR}). Our proof of $\alpha^\star$-\fPO{} (briefly sketched in the next three paragraphs) requires a concave surrogate that lower bounds the multilinear extension of PAV -- a property that their piecewise linear objective does not satisfy.

Our algorithm optimizes the objective $ \Psi(x)=\sum_i h(u_i(x))$, where $h(z)=\int_0^z \frac{1-e^{-y}}{y}\,dy$. The $\Psi(\cdot)$ function serves as a lower surrogate for the multilinear extension of PAV, i.e., $F_{\PAV}(x) \geq \Psi(x)$ for every fractional committee $x$. Furthermore, if $\Psi(x) \geq 1/4$, we can show a stronger bound $F_{\PAV}(x) \geq \Psi(x) + 2\eta$, with $\eta = \frac{1}{1024n^2mk^2}$. To avoid real-arithmetic issues with computing an exact optimum, our algorithm finds an approximate $\Psi$-optimum $\widehat{x}$; specifically, $\Psi(\widehat{x}) \geq \Psi^\star - \eta$ where $\Psi^\star = \max_x \Psi(x)$. It can be shown that, for any instance where at least one voter approves at least one candidate, the optimum objective value is $\Psi^\star > 3/4$, and by the choice of $\eta$, $\Psi(\widehat{x}) \geq 1/4$. Hence, the slack of $2\eta$ holds without loss of generality.

After finding the approximate $\Psi$-optimum $\widehat{x}$, pipage rounding with respect to $F_{\PAV}$ 
returns an integral committee $W_0$ with
$$\PAV(W_0) \geq F_{\PAV}(\widehat{x}) \geq \Psi(\widehat{x}) + 2\eta \geq \Psi^\star + \eta > \Psi^\star,$$
and subsequent local search steps only increase the PAV score. Hence, the final committee $W$ satisfies $\PAV(W) > \Psi^\star$.

If $W$ were fractionally $\alpha^\star$-Pareto dominated by some fractional committee $y$, then the inequality $h(\alpha^\star t)\geq H_t$ for all integers $t\geq 0$ would imply $\Psi(y)>\PAV(W)$, contradicting $\PAV(W) > \Psi^\star$. 
Thus, the central distinction is that our proof uses a lower surrogate objective that is preserved without loss by rounding, whereas DMMS optimizes a linearized relaxation whose value is preserved only up to the approximation factor $\gamma^\star$.

Finally, the work of \citet{BFF21tight} is relevant to ours. They study approximation algorithms for concave coverage objectives, of which PAV is a special case. The pointwise inequality $F_{\PAV}(x)\geq\Psi(x)$ used in our analysis is the PAV specialization of their Bernoulli--Poisson inequality~(see Lemma 2.3 in their paper).

\section{Preliminaries}
\label{sec:Preliminaries}

Let $\N \coloneqq \{1,2,\dots\}$. Given any $r \in \mathbb{N}$, let $[r] \coloneqq \{1,2,\dots,r\}$.

\paragraph{Problem instance.} An \emph{instance} $\I$ of the approval-based committee voting problem is given by a tuple $\I = \langle N, C, \A, k \rangle$, where $N \coloneqq [n]$ is the set of $n$ \emph{voters}, $C \coloneqq \{c_1,c_2,\dots,c_m\}$ is the set of $m$ candidates, $\A = (A_1,\dots,A_n)$ is the \emph{preference profile}, where $A_i \subseteq C$ denotes the set of candidates approved by voter $i \in N$, and $2 \le k \le m - 1$ is the \emph{committee size}. We will assume that each voter approves at least one candidate; however, there might be a candidate that no voter approves.

\paragraph{Integral and fractional committees.} A feasible integral committee for the instance $\I = \langle N, C, \A, k \rangle$, denoted by $W \subseteq C$, is a subset of exactly $k$ candidates. A feasible fractional committee, denoted by $x \in [0,1]^C$, is an entrywise nonnegative vector of length $m$ whose entries sum to exactly $k$, i.e., $\sum_{c\in C} x_c = k$.
We will use the term ``committee'' to refer to a feasible integral committee, and ``fractional committee'' to denote a feasible fractional committee. 

\paragraph{Utility of a voter.} Given an instance $\I = \langle N, C, \A, k \rangle$, the utility derived by voter $i \in N$ from a committee $W$ is defined as the number of candidates in $W$ approved by voter $i$. That is, 
\[u_i(W) \coloneqq |W \cap A_i|.\]
For a fractional committee $x\in[0,1]^C$, define $u_i(x) \coloneqq \sum_{c \in A_i} x_c$. A voter $i \in N$ is said to be \emph{represented} by a committee $W$ if $u_i(W) > 0$ (equivalently, $W \cap A_i \neq \emptyset$); otherwise, the voter is deemed \emph{unrepresented}. The \emph{support} of a candidate $c \in C$ is defined as the set of voters who approve it, i.e., $\supp(c) \coloneqq \{i \in N : c \in A_i\}$.

\paragraph{Pareto optimality (\PO{}).} 
For $\alpha \geq 1$, an integral committee $W'$ is said to $\alpha$-dominate another integral committee $W$ if $u_i(W') \geq \alpha \cdot u_i(W)$ for every voter $i \in N$ and the inequality is strict for at least one voter. A committee $W$ is \emph{$\alpha$-Pareto optimal} ($\alpha$-\PO{}) if no other integral committee $\alpha$-dominates it~\citep{PY00approximability,ILW+17approximate}. It is \emph{Pareto optimal} if it is $1$-\PO{}. The committee $W$ is \emph{weakly Pareto optimal} (weak PO) if no other integral committee $W'$ strictly dominates it in the sense that $u_i(W') > u_i(W)$ for all $i \in N$.

\paragraph{Fractional Pareto optimality (\fPO{}).} 
For $\alpha \geq 1$, a fractional committee $x$ is said to $\alpha$-dominate an integral committee $W$ if $u_i(x) \geq \alpha \cdot u_i(W)$ for every voter $i \in N$ and the inequality is strict for at least one voter. A committee $W$ is \emph{$\alpha$-fractionally Pareto optimal} ($\alpha$-\fPO{}) if no fractional committee $\alpha$-dominates it. It is \emph{fractionally Pareto optimal} (\fPO{}) if it is 1-\fPO{}.
Note that for any $\alpha \geq 1$, $\alpha$-\fPO{} implies $\alpha$-\PO{}.

\paragraph{Representation notions.} A committee $W$ is said to satisfy
\begin{itemize}
    \item \emph{justified representation} (\JR{})~\citep{ABC+17justified} if no group of at least $n/k$ unrepresented voters commonly approves some candidate. That is, for every candidate $c\in C$, we have that $\left| \{i \in \supp(c) : u_i(W)=0\} \right| < \frac{n}{k}$.
    \item \emph{extended justified representation} (\EJR{})~\citep{ABC+17justified} if for every $\ell \in [k]$ and every group $S \subseteq N$ with $|S| \ge \ell \cdot \frac{n}{k}$ and $|\bigcap_{i \in S} A_i| \ge \ell$, there exists a voter $i \in S$ with $u_i(W) \ge \ell$.
    \item \EJRplus{}~\citep{BP23robust} if for every $\ell \in [k]$ and every group $S \subseteq N$ with $|S| \ge \ell \cdot \frac{n}{k}$ such that there is some $c \in C \setminus W$ with $c \in \bigcap_{i \in S} A_i$, there exists a voter $i \in S$ with $u_i(W) \ge \ell$.
\end{itemize}

Observe that $\EJRplus{} \Rightarrow \EJR{} \Rightarrow \JR{}$ and these implications are strict.

\paragraph{Voting rules.} An approval-based committee voting rule takes an election instance $\I = \langle N, C, \A, k \rangle$ as input and outputs a feasible and integral winning committee $W$ for the instance $\I$. Below, we define some of the voting rules considered in our study:
\begin{itemize}
    \item \emph{Thiele Methods}: A Thiele method is an approval-based committee voting rule specified by a weight sequence $\mathbf{w}=(w_1,w_2,\dots)$, where $w_j \geq 0$ for every $j \geq 1$. The $\mathbf{w}$-Thiele score of a committee $W \subseteq C$ is $\operatorname{score}_{\mathbf{w}}(W) \coloneqq \sum_{i \in N} \sum_{j=1}^{u_i(W)} w_j$. The corresponding Thiele rule returns a size-$k$ committee that maximizes this score. Observe that every Thiele committee is \PO{} when $w_j > 0$ for every $j \ge 1$.
    \item \emph{Approval Voting} (AV): The approval voting rule is a Thiele method with the all-ones weight sequence $\mathbf{w} = (1,1,\dots,1)$. Equivalently, any committee $W$ chosen by the approval voting rule maximizes the utilitarian social welfare $\sum_{i \in N} u_i(W)$. It is known that an AV committee can fail \JR{}~\citep{ABC+17justified}. However, due to the utilitarian welfare maximization property, any AV committee satisfies \fPO{}.
    \item \emph{Proportional Approval Voting} (PAV): For $t \in \N$, denote the $t^\textup{th}$ harmonic number by 
    \[H_t \coloneqq 1 + \frac{1}{2} + \cdots + \frac{1}{t},\]
    and let $H_0 \coloneqq 0$. The PAV score of a committee $W$ is defined as 
    \[\PAV(W) \coloneqq \sum_{i \in N} H_{u_i(W)}.\]
    The PAV rule returns a committee maximizing the PAV score. In other words, PAV corresponds to a Thiele method with weights $w_j = 1/j$ for all $j \in [m]$. 
    Every PAV committee satisfies \EJRplus{}~\citep{BP23robust}. However, for any constant $\eps > 0$, it is \NPH{} to compute a committee that approximates the optimal PAV score to within a factor $(\gamma^\star + \eps)$, 
    where $\gamma^\star \coloneqq \E_{t \sim \Poi(1)}[H_t] \approx 0.79$~\citep{DMM+21tight}. In other words, $\gamma^\star$ is the expected contribution of a voter to the PAV score when the number of its approved candidates chosen in the committee is distributed as a Poisson random variable with mean $1$.
    \item \emph{$\tau$-Local Search PAV rule} (\localPAV{$\tau$}): Given any $\tau > 0$, a committee $W$ is said to be a $\tau$-\emph{local PAV committee} if its PAV score cannot be increased by at least $\tau$ by replacing a single candidate in $W$ with a candidate outside of $W$. That is, for all $c \in W$ and $c' \in C \setminus W$,
    \[\PAV(W \setminus \{c\} \cup \{c'\}) - \PAV(W) < \tau.\]

    A \localPAV{$\tau$} committee can be computed by the natural local search procedure in polynomial time whenever $\tau$ is at least an inverse polynomial in the input size, i.e., $\tau \geq \frac{1}{\textup{poly}{(n,k)}}$. For exact local PAV, however, \citet{KE24lower}, construst a lexicographically chosen better-response sequence of length $\Omega(k^{\log k})$; thus, in particular, no polynomial bound is known for exact local search. Additionally, every \localPAV{$\tau$} committee satisfies \EJRplus{} if 
    
    $\tau \leq \frac{n}{k^2}$~\citep{AEH+18complexity,BP23robust}. 
    
    For $\tau=0$, we use the convention that \localPAV{$0$} means \emph{exact local PAV}, i.e., no swap strictly improves the PAV score. Note that for any $\tau < \tau'$, any \localPAV{$\tau$} committee is also a \localPAV{$\tau'$} committee. Thus, for any instance and any $\tau > 0$, if a property is satisfied by all \localPAV{$\tau$} committees, then the same is also satisfied by all exact local PAV committees.
\end{itemize}

\paragraph{PAV score approximation.} Given an instance $\I$, let $\OPT(\I)$ denote the highest possible PAV score of any committee in $\I$. Given any $\gamma \geq 0$, a committee $W$ is said to provide a $\gamma$-approximation to the PAV score if $\PAV(W) \geq \gamma \OPT(\I)$. It is known that unless $\textrm{\textup{P}} = \NP{}$, no polynomial-time algorithm can provide better than a $\gamma^\star \approx 0.79$ approximation to the PAV score~\citep{DMM+21tight}.

\begin{figure}[t]
    \centering
    \definecolor{harmonicblue}{RGB}{0,92,171}
\definecolor{harmonicorange}{RGB}{230,126,34}
\pgfmathsetmacro{\alphastar}{1.345016617}
\pgfmathsetmacro{\gammastar}{0.796600}

\begin{tikzpicture}
\begin{axis}[
    width=0.82\textwidth,
    height=0.46\textwidth,
    xmin=0, xmax=10.1,
    ymin=0, ymax=3,
    axis x line*=bottom,
    axis y line*=left,
    axis line style={black!65},
    enlargelimits=false,
    xlabel={$z$ or $t$},
    ylabel={value},
    xtick={0,1,...,10},
    ytick={0,0.5,...,3},
    extra x ticks={\alphastar},
    extra x tick labels={$\alpha^\star$},
    extra x tick style={
        tick label style={
            text=magenta!60!black,
        },
    },
    extra y ticks={\gammastar},
    extra y tick labels={$\gamma^\star$},
    extra y tick style={
        tick label style={
            text=magenta!60!black,
        },
    },
    minor tick num=1,
    tick align=outside,
    tick style={black!60},
    major grid style={draw=black!12, line width=0.25pt},
    grid=major,
    clip=true,
    clip marker paths=true,
    label style={font=\small},
    tick label style={font=\small},
    legend style={
        draw=none,
        fill=none,
        font=\small,
        at={(0.03,0.97)},
        anchor=north west,
        cells={anchor=west},
    },
]

\pgfmathsetmacro{\alphastar}{1.345}
\pgfmathsetmacro{\gammastar}{0.796}

\addplot[
    harmonicblue,
    line width=1.1pt,
    smooth,
    mark=none,
] coordinates {
    (0.00,0.000000)
    (0.25,0.235204)
    (0.50,0.443842)
    (0.75,0.629874)
    (1.00,0.796600)
    (1.25,0.946773)
    (1.50,1.082700)
    (1.75,1.206320)
    (2.00,1.319263)
    (2.25,1.422908)
    (2.50,1.518421)
    (2.75,1.606796)
    (3.00,1.688876)
    (3.25,1.765387)
    (3.50,1.836949)
    (3.75,1.904096)
    (4.00,1.967289)
    (4.25,2.026930)
    (4.50,2.083366)
    (4.75,2.136902)
    (5.00,2.187802)
    (5.25,2.236301)
    (5.50,2.282605)
    (5.75,2.326896)
    (6.00,2.369335)
    (6.25,2.410068)
    (6.50,2.449221)
    (6.75,2.486911)
    (7.00,2.523241)
    (7.25,2.558304)
    (7.50,2.592185)
    (7.75,2.624958)
    (8.00,2.656695)
    (8.25,2.687457)
    (8.50,2.717303)
    (8.75,2.746286)
    (9.00,2.774453)
    (9.25,2.801849)
    (9.50,2.828515)
    (9.75,2.854488)
    (10.00,2.879805)
    (10.25,2.904497)
    (10.50,2.928593)
    (10.75,2.952123)
    (11.00,2.975112)
    (11.25,2.997585)
    (11.50,3.019564)
    (11.75,3.041070)
    (12.00,3.062123)
    (12.25,3.082742)
    (12.50,3.102945)
    (12.75,3.122747)
    (13.00,3.142165)
    (13.25,3.161213)
    (13.50,3.179905)
    (13.75,3.198255)
    (14.00,3.216273)
    (14.25,3.233973)
    (14.50,3.251364)
    (14.75,3.268459)
    (15.00,3.285266)
    (15.25,3.301795)
    (15.50,3.318056)
    (15.75,3.334056)
    (16.00,3.349804)
    (16.25,3.365309)
    (16.50,3.380576)
    (16.75,3.395614)
    (17.00,3.410429)
    (17.25,3.425028)
    (17.50,3.439417)
    (17.75,3.453601)
    (18.00,3.467587)
    (18.25,3.481381)
    (18.50,3.494986)
    (18.75,3.508409)
    (19.00,3.521655)
    (19.25,3.534727)
    (19.50,3.547630)
    (19.75,3.560369)
    (20.00,3.572948)
};
\addlegendentry{$h(z)$}

\addplot[
    harmonicorange,
    line width=1.0pt,
    mark=none,
] coordinates {
    (0,0.000000)
    (1,1.000000)
    (2,1.500000)
    (3,1.833333)
    (4,2.083333)
    (5,2.283333)
    (6,2.450000)
    (7,2.592857)
    (8,2.717857)
    (9,2.828968)
    (10,2.928968)
    (11,3.019877)
    (12,3.103211)
    (13,3.180134)
    (14,3.251562)
    (15,3.318229)
    (16,3.380729)
    (17,3.439553)
    (18,3.495108)
    (19,3.547740)
    (20,3.597740)
};
\addlegendentry{linear interpolation of $H_t$}

\addplot[
    only marks,
    mark=*,
    mark size=1.65pt,
    mark options={draw=black, fill=black},
] coordinates {
    (0,0.000000)
    (1,1.000000)
    (2,1.500000)
    (3,1.833333)
    (4,2.083333)
    (5,2.283333)
    (6,2.450000)
    (7,2.592857)
    (8,2.717857)
    (9,2.828968)
    (10,2.928968)
    (11,3.019877)
    (12,3.103211)
    (13,3.180134)
    (14,3.251562)
    (15,3.318229)
    (16,3.380729)
    (17,3.439553)
    (18,3.495108)
    (19,3.547740)
    (20,3.597740)
};
\addlegendentry{$H_t$}

\addplot[
    harmonicblue!60!black,
    dashed,
    line width=0.7pt,
    forget plot,
] coordinates {(\alphastar,0) (\alphastar,1)};
\addplot[
    harmonicblue!60!black,
    dashed,
    line width=0.7pt,
    forget plot,
] coordinates {(1,0) (1,\gammastar)};
\addplot[
    harmonicblue!60!black,
    dashed,
    line width=0.7pt,
    forget plot,
] coordinates {(0,\gammastar) (1,\gammastar)};
\addplot[
    harmonicblue!60!black,
    dashed,
    line width=0.7pt,
    forget plot,
] coordinates {(0,1) (\alphastar,1)};
\addplot[
    only marks,
    mark=*,
    mark size=1.9pt,
    mark options={draw=harmonicblue!60!black, fill=white, line width=0.7pt},
    forget plot,
] coordinates {(\alphastar,1) (1,\gammastar)};
\end{axis}
\end{tikzpicture}
    \caption{Comparison of $h(z)$ with the harmonic numbers $H_t$ and their piecewise linear interpolation. We show the points $\alpha^\star \approx 1.346$, where $h(\alpha^\star) = 1$, and $\gamma^\star = h(1) \approx 0.79$.}
    \label{fig:harmonic}
\end{figure}

\paragraph{The function $h(\cdot)$ and the constant $\alpha^\star$.} 
Define the function
\[h(z) \coloneqq \int_0^z \frac{1-e^{-y}}{y} \, dy, \qquad z \geq 0,\]
with the integrand interpreted as $1$ at $y=0$. Observe that $h(\cdot)$ is strictly increasing and concave.\footnote{Indeed, the derivative of $h(\cdot)$ is $h'(z) = \frac{1-e^{-z}}{z}$, with $h'(0)=1$. Also, $h''(z) = \frac{(z+1)e^{-z}-1}{z^2} \leq 0$ because $(z+1) e^{-z} \leq 1$. Thus, $h$ is increasing and concave.} 

The function $h(z)$ is also known as the \emph{entire exponential integral}, usually denoted by $\operatorname{Ein}(z)$; see, for example, the DLMF entry on exponential integrals~\citep{DLMF}. In the multiwinner voting literature, this function appears implicitly in the analysis of \citet{DMM+21tight}.
Their approximation factor for a Thiele weight sequence $\mathbf{w}=(w_1,w_2,\ldots)$ is expressed through the expected contribution of a voter to the Thiele score when the number of approved elected candidates is Poisson-distributed. As we show in \Cref{lem:Poisson-smoothening} (see \Cref{appendix:h_and_harmonics} for a proof), $h(\cdot)$ is a Poisson-smoothed version of the harmonic utility function. 

\begin{restatable}[]{lemma}{PoissonSmoothening}
For every $\mu \geq 0$, if $Z$ is a Poisson random variable with mean $\mu$, then $h(\mu) = \E[H_Z]$.

\label{lem:Poisson-smoothening}
\end{restatable}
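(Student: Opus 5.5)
The plan is to define $g(\mu) \coloneqq \E[H_Z]$ for $Z \sim \Poi(\mu)$ and show that $g(0) = 0 = h(0)$ and $g'(\mu) = \frac{1-e^{-\mu}}{\mu} = h'(\mu)$ for all $\mu > 0$. The identity then follows from the fundamental theorem of calculus. At $\mu = 0$ we have $Z = 0$ almost surely, so $g(0) = H_0 = 0$.

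To compute the derivative, I will write
\[
g(\mu) = \sum_{t \ge 0} e^{-\mu}\frac{\mu^t}{t!} H_t
\]
and differentiate termwise. This is justified because $H_t \le t$, so the series is dominated by an entire power series in $\mu$. The standard Poisson identity $\frac{d}{d\mu} \Pr[Z=t] = \Pr[Z = t-1] - \Pr[Z=t]$ then gives
\[
g'(\mu) = \E[H_{Z+1} - H_Z] = \E\!\left[\frac{1}{Z+1}\right].
\]
So the main computation reduces to the classical moment
\[
\E\!\left[\frac{1}{Z+1}\right] = \sum_{t\ge0} e^{-\mu}\frac{\mu^t}{(t+1)!} = \frac{e^{-\mu}}{\mu}\sum_{s\ge1}\frac{\mu^s}{s!} = \frac{e^{-\mu}(e^{\mu}-1)}{\mu} = \frac{1-e^{-\mu}}{\mu},
\]
which is exactly the integrand defining $h$. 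Both $g'$ and $h'$ extend continuously to the value $1$ at $\mu = 0$.

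Integrating from $0$ to $\mu$ then yields $g(\mu) = h(\mu)$. An alternative route that avoids differentiation is to write $H_t = \int_0^1 \frac{1-s^t}{1-s}\,ds$ and use the Poisson generating function $\E[s^Z] = e^{-\mu(1-s)}$. By Tonelli this gives
\[
\E[H_Z] = \int_0^1 \frac{1 - e^{-\mu(1-s)}}{1-s}\,ds,
\]
and the substitution $y = \mu(1-s)$ turns this into $\int_0^\mu \frac{1-e^{-y}}{y}\,dy = h(\mu)$. I would likely present this second argument, since it is shorter.

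The only step needing care is interchanging the expectation with the integral, or with the derivative in the first approach. Since all integrands are nonnegative, Tonelli settles the interchange immediately, so I do not expect a real obstacle.
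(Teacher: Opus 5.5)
Your second argument is exactly the paper's proof: the integral representation $H_s = \int_0^1 \frac{1-t^s}{1-t}\,dt$, the Poisson generating function $\E[t^Z] = e^{-\mu(1-t)}$, and the substitution $y = \mu(1-t)$. Your Tonelli remark supplies the interchange justification the paper leaves implicit, and the case $\mu = 0$, where the substitution degenerates, is trivial. Your first, derivative-based route, via $g'(\mu) = \E[1/(Z+1)] = \frac{1-e^{-\mu}}{\mu}$, is also correct, and the paper uses the same identity later in its tightness proof to evaluate $\E[H_{1+P_\lambda}]$.
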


For every positive integer $t$, the $t^\textup{th}$ harmonic number $H_t$ is lower bounded by $h(t)$ and upper bounded by $\frac{1}{\gamma^\star} h(t)$, where $\gamma^\star \coloneqq h(1) \approx 0.79$~(\Cref{lem:h_Lower_and_Upper_Bound_Harmonics}; see \Cref{appendix:h_and_harmonics} for a proof). \Cref{fig:harmonic} illustrates how $h(\cdot)$ lower bounds the piecewise-linear interpolation of harmonic numbers.

\begin{restatable}[]{lemma}{hLowerAndUpperBoundHarmonics}
For every integer $t \geq 1$, 
\[\gamma^\star H_t \le h(t) < H_t,\]
where $\gamma^\star \coloneqq h(1)=\int_0^1\frac{1-e^{-y}}{y}\,dy
  \approx 0.79$.
\label{lem:h_Lower_and_Upper_Bound_Harmonics}
\end{restatable}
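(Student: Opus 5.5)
Both inequalities will come from comparing increments of $h$ at integers with the increments $1/j$ of $H_t$. Write
\[h(t) = \sum_{j=1}^{t} \bigl(h(j)-h(j-1)\bigr), \qquad H_t = \sum_{j=1}^{t} \frac1j,\]
and set $d_j \coloneqq h(j)-h(j-1)=\int_{j-1}^{j}\frac{1-e^{-y}}{y}\,dy$.

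\textbf{Upper bound $h(t)<H_t$.} For $j\ge 2$ the integrand satisfies $\frac{1-e^{-y}}{y}<\frac1y$. Hence
\[d_j<\int_{j-1}^j \frac{dy}{y}=\log\frac{j}{j-1}.\]
This is not directly $\le 1/j$, so a cleaner route is needed. I would use \Cref{lem:Poisson-smoothening}: $h(t)=\E[H_Z]$ with $Z\sim\Poi(t)$. The map $s\mapsto H_s$ on integers is strictly concave, since its increments $1/s$ are strictly decreasing. A discrete Jensen-type argument, passing through the piecewise-linear interpolation $\bar H$ of $H$ (which is concave), gives
\[\E[H_Z]=\E[\bar H(Z)]\le \bar H(\E Z)=\bar H(t)=H_t.\]
Strictness holds because $Z$ is nondegenerate and puts positive mass on both sides of $t$ at points where $\bar H$ is not affine across $t$. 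Here $t$ is a kink of $\bar H$ whenever $t\ge1$, and $Z$ charges $t-1$ and $t+1$.

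\textbf{Lower bound $\gamma^\star H_t\le h(t)$.} I would show $d_j\ge \gamma^\star/j$ for every $j\ge1$; summing then gives the claim. For $j=1$ this is an equality, since $d_1=h(1)=\gamma^\star$. For $j\ge2$, on $[j-1,j]$ we have $y\le j$ and $1-e^{-y}\ge 1-e^{-(j-1)}\ge 1-e^{-1}$. Therefore
\[d_j\ge \frac{1-e^{-1}}{j}\approx \frac{0.632}{j},\]
which is too weak because $\gamma^\star\approx0.796$. I would refine using $\frac{1-e^{-y}}{y}\ge \frac{1-e^{-(j-1)}}{y}$, which gives $d_j\ge (1-e^{-(j-1)})\log\frac{j}{j-1}$. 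Since $\log\frac{j}{j-1}\ge \frac1{j}\cdot\frac{j}{j-1}\cdot\bigl(1-\frac{1}{2j}\bigr)$, for $j=2$ this yields about $0.632\cdot0.693=0.438\ge 0.398=\gamma^\star/2$. For $j\ge3$, the factor $(1-e^{-2})\cdot j\log\frac{j}{j-1}\ge 0.86>\gamma^\star$ suffices, because $j\log\frac{j}{j-1}\ge1$. So each increment bound reduces to a one-line numeric check at $j=2$ together with a uniform bound for $j\ge3$.

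\textbf{Expected obstacle.} The main difficulty is the numerics of the lower bound, in particular certifying the value of $\gamma^\star$ via the series
\[\gamma^\star=\sum_{r\ge1}\frac{(-1)^{r+1}}{r\cdot r!}.\]
Bounding the partial sums $1-\tfrac14+\tfrac1{18}$ gives $0.7966>\gamma^\star$, which is all that is needed.
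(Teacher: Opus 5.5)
Your proof is correct and follows the paper's argument. The upper bound comes from $h(t)=\E[H_Z]$ and Jensen on the concave interpolation $\overline H$, and the lower bound from the termwise comparison $h(j)-h(j-1)\ge \gamma^\star/j$. The only difference is how you bound the increments for $j\ge 2$. The paper uses that $\phi(y)=(1-e^{-y})/y$ is decreasing, which gives $h(j)-h(j-1)\ge (1-e^{-j})/j\ge (1-e^{-2})/j$ uniformly and avoids your separate numerical check at $j=2$.

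Two slips are harmless but should be fixed. First, the displayed bound $\log\frac{j}{j-1}\ge\frac{2j-1}{2j(j-1)}$ is false for every $j\ge 2$, since $\log x<\frac12(x-1/x)$ for $x>1$. You never actually use it, because your $j=2$ check plugs in $\log 2$ directly. Second, $1-\frac14+\frac1{18}=\frac{29}{36}\approx 0.806$, not $0.7966$. This still suffices, because you only need $\gamma^\star<1-e^{-2}\approx 0.865$.
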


Let $\alpha^\star$ be the unique positive solution of $h(\alpha^\star)=1$. Numerically, $\alpha^\star = 1.345016617\ldots$. The following inequality relating a scaled version of the function $h(\cdot)$ to the harmonic function $H_t$ will be useful in our analysis. Its proof can be found in \Cref{appendix:h_and_harmonics} in the appendix.

\begin{restatable}[]{lemma}{MultiplicativeShiftLemma}
For every integer $t \geq 0$, 
\[h(\alpha^\star t)\ge H_t,\]
where $\alpha^\star \approx 1.346$ is the unique positive solution of $h(\alpha^\star)=1$.
\label{lem:Multiplicative-Shift-ineq}
\end{restatable}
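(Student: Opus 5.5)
Our plan is to argue by induction on $t$, comparing increments. The case $t=0$ is trivial since $h(0)=0=H_0$, and $t=1$ holds with equality by the definition of $\alpha^\star$. For the induction step it suffices to show
\[
h(\alpha^\star (t+1)) - h(\alpha^\star t) \;\ge\; \frac{1}{t+1} \qquad \text{for all integers } t \ge 1,
\]
or at least for all $t$ beyond some small threshold $T$, with the finitely many cases $t\le T$ checked directly from numerical values of $h$ (for instance via the series $h(z)=\sum_{j\ge1}(-1)^{j+1}z^j/(j\cdot j!)$, or the identity $h(z)=\ln z+\gamma_E+E_1(z)$).

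To bound the increment we write it as $\int_{\alpha^\star t}^{\alpha^\star(t+1)} \frac{1-e^{-y}}{y}\,dy$. Since $h'$ is decreasing, this is at least $\alpha^\star\cdot h'(\alpha^\star(t+1)) = \frac{1-e^{-\alpha^\star(t+1)}}{t+1}$. This falls just short of $\frac1{t+1}$, so the crude bound needs refinement. A better approach is to use $\int_{a}^{b}\frac{dy}{y}=\ln\frac{t+1}{t}$ minus the exponential tail $\int_a^b \frac{e^{-y}}{y}dy \le \frac{e^{-\alpha^\star t}}{\alpha^\star t}\cdot\alpha^\star = e^{-\alpha^\star t}/t$. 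It then suffices to show
\[
\ln\!\Big(1+\tfrac1t\Big) - \frac{e^{-\alpha^\star t}}{t} \;\ge\; \frac{1}{t+1}.
\]
Using $\ln(1+1/t)\ge \frac1{t+1}+\frac{1}{2(t+1)^2}$ (from $\ln(1+u)\ge u-u^2/2$ applied to $\ln\frac{t+1}{t}=-\ln(1-\frac1{t+1})$, which in fact gives $\ge \frac1{t+1}+\frac1{2(t+1)^2}$), we reduce to $e^{-\alpha^\star t}\le \frac{t}{2(t+1)^2}$. This holds once $t$ is moderately large, say $t\ge 3$, since $e^{-1.345\cdot 3}\approx 0.018 \le 3/32$. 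In fact, at $t=2$ we have $e^{-2.69}\approx 0.068 \le 2/18$, and at $t=1$ we have $e^{-1.345}\approx 0.26 > 1/8$. So the inductive increment inequality holds for all $t\ge2$, and only the step $t=1\to2$ remains.

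The main obstacle is this base step, showing $h(2\alpha^\star)\ge 3/2$, where the crude estimate fails. We plan to handle it with a sharper bound, either by numerically evaluating $h(2\alpha^\star)=h(2.690\ldots)\approx 1.58$ (consistent with the figure, where $h(2.75)\approx1.61$) with rigorous truncation error from the alternating series, or by bounding $\int_{\alpha^\star}^{2\alpha^\star}\frac{1-e^{-y}}{y}dy$ directly using $\ln 2 - \int_{\alpha^\star}^{2\alpha^\star} e^{-y}/y\,dy$ with the tail estimated more carefully, e.g.\ by $\frac{e^{-\alpha^\star}-e^{-2\alpha^\star}}{\alpha^\star}\approx 0.143$, giving $\approx 0.693-0.143=0.55\ge 1/2$. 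We would use the latter since it needs only elementary numerics.
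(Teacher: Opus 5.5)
Your proposal is correct and follows essentially the paper's route: telescoping from $h(\alpha^\star)=1$ via the increment bound $h(\alpha^\star(r+1))-h(\alpha^\star r)\ge \frac{1}{r+1}$, with the step $r=1$ handled separately. The only difference in the estimate is that you subtract the tail $\int e^{-y}/y\,dy$ and use $\ln(1+\frac1t)\ge\frac1{t+1}+\frac1{2(t+1)^2}$, whereas the paper factors out $1-e^{-4r/3}$ and uses $\ln(1+\frac1r)\ge\frac{2}{2r+1}$.

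Two things should be made explicit. First, you need a certified lower bound on $\alpha^\star$; the paper shows $\alpha^\star>4/3$ from $h(4/3)<1$ via the alternating Taylor series. Without it your checks at $t=1$ and $t=2$ break down if one only knows the trivial $\alpha^\star\ge 1$. Second, you need a one-line monotonicity argument that $e^{\alpha^\star t}\ge 2(t+1)^2/t$ holds for all $t\ge 2$, rather than spot-checking $t=2,3$.
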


\paragraph{Submodular function.}
Let $\Omega = \{\omega_1,\dots,\omega_m\}$ be a finite set. A set function $f: 2^\Omega \to \mathbb{R}$ is said to be monotone non-decreasing (or simply \emph{monotone}) if for every pair of subsets $S,T \subseteq \Omega$ such that $S \subseteq T$, we have $f(S) \leq f(T)$. The function $f$ is said to be \emph{submodular} if for every pair of subsets $S,T \subseteq \Omega$ such that $S \subseteq T$ and any $\omega \notin T$, we have $f(S \cup \{\omega\}) - f(S) \geq f(T \cup \{\omega\}) - f(T)$. That is, a submodular function satisfies the ``diminishing marginal returns'' property.

\paragraph{PAV as a submodular function and pipage rounding.}

The set function $\PAV:2^C \to \mathbb{R}_{\geq 0}$ is monotone and submodular:
Indeed, for each voter $i$, the marginal contribution of adding a candidate
approved by $i$ is $1/(u_i(W)+1)$, which weakly decreases as $W$ grows.
Its \emph{multilinear extension} is the function $F_{\PAV}:[0,1]^C \to \mathbb{R}_{\geq 0}$
defined by
\[
    F_{\PAV}(x)
    \coloneqq
    \mathbb{E}[\PAV(R(x))]
    =
    \sum_{i \in N} \mathbb{E}\!\left[H_{u_i(R(x))}\right],
\]
where the random set $R(x) \subseteq C$ includes each candidate $c \in C$ independently with probability $x_c$. 

A useful property of $F_{\PAV}$ is that, for every pair of distinct elements $i,j \in C$, the univariate function $z \mapsto F_{\PAV}(x+z(\mathbf{1}_i-\mathbf{1}_j))$ is convex on its feasible interval; this follows from submodularity of PAV and multilinearity of $F_{\PAV}$. This convexity is the basis of \emph{pipage rounding}, introduced by \citet{AS04pipage} and extended to the multilinear-extension framework for matroid constraints by \citet{CCP+11maximizing}. \Cref{prop:Pipage_for_PAV} below adapts the standard pipage rounding guarantee to the PAV setting. It is stated in terms of the fractional relaxation of the committee-size constraint, denoted by $P_k \coloneqq \left\{x \in [0,1]^C : \sum_{c \in C} x_c = k \right\}$, which is the base polytope of the rank-$k$ uniform matroid on $C$.

\begin{proposition}[Pipage rounding for PAV;~\citealp{CCP+11maximizing}] 
\label{prop:Pipage_for_PAV}
There is a polynomial-time randomized algorithm that, given any fractional solution $x \in P_k$, returns a random committee $W \subseteq C$ of size $k$ such that 
\[\mathbb{E}[\PAV(W)] \geq F_{\PAV}(x).\]
Moreover, the algorithm preserves marginals, i.e., for every candidate $c \in C$, $\Pr[c \in W] = x_c.$ In addition, since $F_{\PAV}$ can be evaluated in polynomial time, this rounding can be derandomized. That is, there is a polynomial-time deterministic algorithm that, given any $x \in P_k$, returns a committee $W \subseteq C$ of size $k$ such that 
\[\PAV(W) \geq F_{\PAV}(x).\]
\end{proposition}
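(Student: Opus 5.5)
We prove \Cref{prop:Pipage_for_PAV} by the standard pipage rounding argument, specialized to the uniform matroid base polytope $P_k$. The proof has two parts: a randomized procedure that preserves marginals and the expected multilinear value, and its derandomization.

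\textbf{Convexity along exchange directions.} Fix $x \in P_k$ and two distinct candidates $c, c'$, and set $g(z) \coloneqq F_{\PAV}(x + z(\mathbf{1}_c - \mathbf{1}_{c'}))$. Since $F_{\PAV}$ is multilinear, $g$ is a polynomial of degree at most $2$ in $z$, with leading coefficient
\[
-\frac{\partial^2 F_{\PAV}}{\partial x_c \partial x_{c'}}(x).
\]
For a multilinear extension, this mixed partial derivative equals
\[
\E\bigl[\PAV(R\cup\{c,c'\})-\PAV(R\cup\{c\})-\PAV(R\cup\{c'\})+\PAV(R)\bigr],
\]
where $R$ is drawn on $C\setminus\{c,c'\}$. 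This quantity is $\le 0$ by submodularity of $\PAV$, which was established above. Hence $g$ is convex on its feasible interval $[-z^-, z^+]$, whose endpoints are determined by keeping $x_c, x_{c'} \in [0,1]$.

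\textbf{Randomized rounding step.} While $x$ has at least one fractional coordinate, it has at least two, because $\sum_c x_c = k$ is an integer. Pick two fractional coordinates $c, c'$. With probability $z^-/(z^+ + z^-)$, move to $x + z^+(\mathbf{1}_c - \mathbf{1}_{c'})$; otherwise, move to $x - z^-(\mathbf{1}_c - \mathbf{1}_{c'})$. This choice makes the expected displacement zero, so the marginals $x_c$ are preserved in expectation, and the sum remains $k$. By convexity of $g$, the expected new value of $F_{\PAV}$ is at least $g(0) = F_{\PAV}(x)$. Each step makes at least one more coordinate integral, so after at most $m$ steps $x$ is integral. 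It is then the indicator vector of a size-$k$ set $W$, with $F_{\PAV}(\mathbf{1}_W) = \PAV(W)$.

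Chaining these steps, the iterated expectations give $\E[\PAV(W)] \ge F_{\PAV}(x)$, and the martingale property of each coordinate gives $\Pr[c \in W] = x_c$.

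\textbf{Derandomization.} Instead of randomizing, at each step move to whichever endpoint has the larger $F_{\PAV}$ value. By convexity, $\max(g(z^+), g(-z^-)) \ge g(0)$, so the objective never decreases, and the final set $W$ satisfies $\PAV(W) \ge F_{\PAV}(x)$.

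This requires evaluating $F_{\PAV}$ exactly in polynomial time. For each voter $i$, the quantity $u_i(R(x))$ follows a Poisson-binomial distribution over the candidates in $A_i$. Its probability mass function can be computed by the standard $O(m^2)$ dynamic program, which adds one candidate at a time and updates $\Pr[u = j]$. From this we obtain $\E[H_{u_i}]$, and summing over voters gives $F_{\PAV}(x)$. With rational inputs, all these quantities are rationals of polynomial bit-length.

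The only nontrivial point is the convexity claim, together with making sure $F_{\PAV}$ is exactly evaluable so that the deterministic comparisons are valid. Both are handled above.
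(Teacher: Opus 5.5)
Your proposal is correct and follows the same approach as the paper. The paper gives no proof of its own: it cites the pipage-rounding result and justifies it by the convexity of $F_{\PAV}$ along directions $\mathbf{1}_c-\mathbf{1}_{c'}$, which follows from submodularity and multilinearity, and by the polynomial-time evaluability of $F_{\PAV}$ for derandomization. Your specialization to the uniform-matroid polytope $P_k$ (pairing two fractional coordinates, the mean-zero step, the martingale argument for marginals, and the Poisson-binomial dynamic program) fills in that standard argument correctly.
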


\section{Results}
\label{sec:Results}

We will now present our positive results, starting with the analysis of the global and local versions of the PAV rule (\Cref{subsec:Results_local_PAV,subsec:Results_global_PAV}), followed by our round-and-swap algorithm~(\Cref{alg:Pipage-and-swap}).

\subsection{Results for \localPAV{$\tau$}}
\label{subsec:Results_local_PAV}

Let us start by analyzing the \localPAV{$\tau$} family of voting rules. For any fixed committee size $k$ and number of voters $n$, we say that a property is \emph{guaranteed} by the \localPAV{$\tau$} rule if \emph{every} \localPAV{$\tau$} committee satisfies that property. Conversely, if some \localPAV{$\tau$} committee does not satisfy some property, we say that the \localPAV{$\tau$} rule fails that property.

\Cref{tab:Local-PAV-Results} presents the range of values for $\tau$ for which \localPAV{$\tau$} satisfies various binary properties, such as \PO{}, weak \PO{}, \EJRplus{},%
\footnote{For \EJRplus{}, it is a classic result that $\tau \le n/k^2$ suffices to guarantee it \citep{AEH+18complexity}.}
$(1+\eps)$-\PO{} for all $\eps>0$, $2$-\fPO{}, and a $2/3$-approximation to the PAV score. The last two approximations represent the best possible guarantees achievable by any \localPAV{$\tau$} committee.

\begin{table}[ht]
\renewcommand{\arraystretch}{1.35}
\begin{tabularx}{\textwidth}{>{\raggedright\arraybackslash}p{0.17\textwidth} >{\raggedright\arraybackslash}p{0.29\textwidth} >{\raggedright\arraybackslash}X}
\toprule
Property & Guaranteed range for every \localPAV{$\tau$} committee & Counterexamples / tightness comments \\
\midrule
\PO{} & Fails for all $\tau \geq 0$. & Exact local PAV can return a Pareto-dominated committee. \\
\midrule[0.01pt]
Weak \PO{} & $\tau\le n/k^2$. & Tight: For every $k\ge 2$, there is an instance with a non-weakly-\PO{} $\tau$-local committee for all $\tau>n/k^2$. \\
\midrule[0.01pt]
\EJRplus{} & $\tau\le n/k^2$. & Tight, using the same family as for weak \PO{}. \\
\midrule[0.01pt]
$(1+\varepsilon)$-\PO{} for all $\varepsilon>0$ & $\tau\le 1/k^2$. & Counterexamples exist for every $\tau>1/k$. 
\\
\midrule[0.01pt]
$2$-\fPO{} & $0 \leq \tau \leq \frac{1}{2k^2}$ & The factor $2$ is asymptotically tight already for exact local PAV. The property is not guaranteed whenever $\tau>1/\lceil k/2\rceil$.\\
\midrule[0.01pt]
$2/3$ PAV-score approximation & $0\le\tau\le \frac{1}{2k^2}$ & The factor $2/3$ is asymptotically tight already for exact local PAV. The property is not guaranteed whenever $\tau>1/(\sigma_k+1)$. \\
\bottomrule
\end{tabularx}
\caption{Results for \localPAV{$\tau$} rule. Positive results pertain to satisfaction of a property by all \localPAV{$\tau$} committees, while counterexamples imply that the property is violated by some \localPAV{$\tau$} committee. Here, $\sigma_k \coloneqq \max\left\{s\in\{0,1,\dots,k-1\}: H_s < \frac{2}{3}H_k\right\} = \Theta(k^{2/3})$.
}
\label{tab:Local-PAV-Results}
\end{table}

\begin{restatable}[Properties of local PAV]{theorem}{LocalPAVProperties}
Fix an approval-based multiwinner instance with $n$ voters and committee size $k$. The instance-wise guarantees and counterexamples in \Cref{tab:Local-PAV-Results} hold.
\label{thm:local_PAV_properties}
\end{restatable}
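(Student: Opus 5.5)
The plan is to derive every positive entry of \Cref{tab:Local-PAV-Results} from one \emph{swap-averaging inequality}, and to give explicit instance families for the negative entries.

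\textbf{The swap-averaging inequality.} Fix a \localPAV{$\tau$} committee $W$ and write $u_i = u_i(W)$.
\begin{itemize}
\item For $c \in W$ and $c' \notin W$, swapping $c$ for $c'$ changes the PAV score by
\[
\sum_{i \in \supp(c')\setminus \supp(c)} \frac{1}{u_i+1} \;-\; \sum_{i \in \supp(c)\setminus\supp(c')} \frac{1}{u_i}.
\]
\item Sum this over all $c \in W$. A voter in $\supp(c')$ contributes $\frac{k-u_i}{u_i+1}$. A voter $i \notin \supp(c')$ with $u_i \ge 1$ contributes $-1$.
\item Since each swap gains less than $\tau$, we get
\[
(k+1)\sum_{i\in\supp(c')}\frac{1}{u_i+1} \;<\; \bigl|\{i : u_i\ge 1\}\bigr| + \bigl|\supp(c')\bigr| + k\tau \;\le\; n + |\supp(c')| + k\tau .
\]
\end{itemize}
I will use the sharper form in which the $|\supp(c')|$ term only counts voters with $u_i\ge1$. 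This gives, for every $c'\notin W$, a bound of the shape $\sum_{i\in\supp(c')} \frac{1}{u_i+1} \lesssim \frac{n+k\tau}{k+1}$ up to correction terms. For $c \in W$ we have the trivial identity $\sum_{c\in W}\sum_{i\in\supp(c)} \frac1{u_i} = |\{i: u_i\ge1\}|$.

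\textbf{Positive results (\EJRplus{}, weak \PO{}, $(1+\eps)$-\PO{}, $2$-\fPO{}).}
\begin{itemize}
\item \emph{\EJRplus{}.} This is the classical argument of \citet{AEH+18complexity,BP23robust} and I will simply cite it.
\item \emph{Weak \PO{}.} Suppose $W'$ gives every voter utility at least $u_i+1$. Weight voters by $\frac{1}{u_i+1}$ and compare $\sum_i \frac{u_i(W')}{u_i+1} \ge \sum_i \frac{u_i+1}{u_i+1} = n$ with the upper bound obtained by splitting $W'$ into $W'\cap W$ (identity bound) and $W'\setminus W$ (averaging bound with $\tau\le n/k^2$). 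The resulting total stays strictly below $n$, which is a contradiction. The slack $n/k^2$ is exactly what the $k\tau$ term can absorb.
\item \emph{$2$-\fPO{}.} Take a fractional $y$ with $u_i(y)\ge 2u_i$. I will bound the dual-weighted sum $\sum_i u_i(y)\,w_i$ with $w_i = \frac{1}{u_i}$ for represented voters and a suitable weight for unrepresented voters, using $\sum_c y_c = k$. The factor $2$ should arise from $\frac{1}{u_i+1}\ge \frac{1}{2u_i}$. Strictness comes from $\tau\le \frac{1}{2k^2}$.
\item \emph{$(1+\eps)$-\PO{}.} The argument is the same, but uses integrality of $W'$: any voter whose utility strictly increases gains at least $1$. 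This lets the gap be charged to $\tau \le 1/k^2$.
\end{itemize}

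\textbf{$2/3$-approximation.} I will run the averaging with respect to an optimal committee $W^\ast$, summing swap gains over pairs in $(W\setminus W^\ast)\times(W^\ast\setminus W)$. This yields
\[
\sum_i \frac{u_i(W^\ast)}{u_i+1} \;\le\; \sum_i [u_i\ge1] + \text{small},
\]
and I then need a pointwise inequality of the form
\[
H_a \;\le\; \tfrac32 H_b + \lambda\Bigl(\frac{a}{b+1} - [b\ge1]\Bigr) \quad \text{for integers } a,b\ge0,
\]
with a suitable $\lambda$. Choosing $\lambda$ and verifying this two-variable inequality (small cases by hand, large $a$ via $H_a\le 1+\ln a$) is the step I expect to be the main obstacle. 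The value $\sigma_k$ in the table suggests the extremal configuration is where $a=k$ and $b=\sigma_k$.

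\textbf{Counterexamples.}
\begin{itemize}
\item \emph{\PO{} fails for every $\tau$.} Take a small instance with a tie-free exact local optimum dominated by swapping two candidates at once.
\item \emph{Weak \PO{} and \EJRplus{} for $\tau>n/k^2$.} Use $k$ parties where one cohesive group of size about $n/k$ is unrepresented. Every single swap gains exactly $n/k^2$, yet the group's candidate improves everyone.
\item \emph{$2$-\fPO{} and the $2/3$ ratio.} Use a "disjoint blocks versus one big shared set" construction. There $W$ gives $\sigma_k$ shared candidates to many voters, while $W^\ast$ gives $k$. Exact local optimality then holds with ratio tending to $2$, respectively $2/3$, and it breaks only when $\tau$ exceeds $1/\lceil k/2\rceil$, respectively $1/(\sigma_k+1)$.
\end{itemize}
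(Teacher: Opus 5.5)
Your positive results for weak \PO{}, $(1+\eps)$-\PO{} and $2$-\fPO{} rest on the inequality you get by fixing $c'\notin W$ and summing swap gains over \emph{all} $c\in W$. That inequality is too weak to imply any of them.

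Here is an example. Take $k=3$ and six voters who all approve $c$; voters $1,2,3$ also approve $e_2$, and voters $4,5,6$ also approve $e_3$. Let $W=\{c,d_2,d_3\}$ with $d_2,d_3$ unapproved. For $c'\in\{e_2,e_3\}$ the averaged sum is $-3+\frac32+\frac32=0$, so your inequality holds for every $\tau>0$. Yet $W'=\{c,e_2,e_3\}$ raises every voter from utility $1$ to $2$. Even the exact values in your split ($3$ for $c$, $\frac32$ each for $e_2,e_3$) sum to $n$, so no contradiction is possible; analogous instances defeat $2$-\fPO{}. Averaging over all removals charges the large loss from removing $c$, a candidate $W'$ keeps, and so hides that removing a dummy is free.

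The missing idea, which is what the paper does, is to sum only over swaps in $(W\setminus W')\times(W'\setminus W)$. Let $r=|W'\setminus W|$, and let $x_i,y_i$ be the numbers of voter $i$'s approved candidates in $W'\setminus W$ and $W\setminus W'$. Voter $i$ then contributes $\Gamma_i=\frac{x_i(r-y_i)}{u_i+1}-\frac{y_i(r-x_i)}{u_i}$. This is increasing in $x_i$. After setting $x_i=y_i+1$, the bound $\Gamma_i\ge r/k$ becomes a polynomial with nonnegative coefficients in $k-r-(u_i-y_i)\ge0$ and $r-y_i-1\ge0$; this is where $|W\cap W'|=k-r$ enters. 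Summing gives at least $nr/k\ge nr^2/k^2$, contradicting $<r^2\tau$. The same per-voter bound plus one strictly improved voter gives $(1+\eps)$-\PO{} for $\tau\le 1/k^2$.

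For $2$-\fPO{}, the analogue is to weight the swap $(c,c')$ by $(1-y_c)\,y_{c'}$. Each represented voter then contributes at least $\frac{u_i^2}{u_i+1}\ge\frac12$, while locality keeps the total strictly below $\frac12$. Note that even the restricted sum is too lossy if you route it through $\sum_i\frac{u_i(W')}{u_i+1}$, as in your $2/3$ plan. The resulting inequality $\sum_i\frac{u_i(W')}{u_i+1}<|\{i:u_i\ge1\}|+r\tau$ is satisfied in the example above, where both sums equal $n$.

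For the $2/3$ bound you do use the restricted sum, and that route is sound. Because $\frac{u_i-y_i}{u_i+1}+\frac{y_i}{u_i}\le 1$, it gives $\sum_i\frac{u_i(W^\ast)}{u_i+1}<|\{i:u_i\ge1\}|+r\tau$. Your pointwise inequality holds with $\lambda=1$: for $b\ge1$, $H_a-\frac{a}{b+1}$ is maximized at $a=b+1$, where it equals $H_{b+1}-1\le\frac32H_b-1$, and for $b=0$ it reads $H_a\le a$. But this yields only $\PAV(W^\ast)<\frac32\PAV(W)+r\tau$, which gives the exact factor $2/3$ only when $\tau=0$. The pointwise inequality is tight at $(a,b)\in\{(1,1),(2,1)\}$. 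So for $\tau=\frac{1}{2k^2}$ you must keep slack you discarded, e.g.\ the term $\frac{x_iy_i}{ru_i(u_i+1)}$. The paper achieves this with its factor $\frac{3r-1}{2r}$ together with $\PAV(W)\ge1$.

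The counterexample sketches do not establish the table's entries.
\begin{itemize}
\item Weak \PO{} and \EJRplus{} tightness: a single unrepresented group whose candidate ``improves everyone'' does not give a committee that strictly improves \emph{every} voter. The paper's family has one voter for each baseline $w_j$ and each pair of auxiliary candidates. All voters then have utility $1$ under $W$ and $2$ under the auxiliary committee, every single swap gains exactly $n/k^2$, and \EJRplus{} fails with $\ell=2$.
\item $2$-\fPO{} and $2/3$: your third sketch is internally inconsistent. An exact local committee is $\tau$-local for every $\tau>0$, so it cannot ``break'' at a threshold. The table needs two different families. The first is symmetric instances in which $W$ is exact local with all swap gains $0$ (domination factor $\frac{2k}{k+1}$, resp.\ score ratio $\frac{2k}{3k-1}$). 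The second is one-voter instances with $s$ approved candidates plus $k-s$ dummies, which are $\tau$-local exactly when $\tau>\frac{1}{s+1}$ (with $s=\lfloor\frac{k-1}{2}\rfloor$, resp.\ $s=\sigma_k$).
\item The $(1+\eps)$-\PO{} counterexample for $\tau>1/k$ is missing altogether.
\item The \PO{} counterexample is only asserted, not constructed.
\end{itemize}
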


Towards establishing \Cref{thm:local_PAV_properties}, we will present the proofs for weak Pareto optimality, $2$-\fPO{}, and $2/3$-PAV guarantees in this section. The remaining claims and tightness results are proved in \Cref{appendix:Proof_local_PAV_properties}.

\medskip
\noindent\emph{Weak Pareto optimality.}
Let $W$ be a $\tau$-local PAV committee with $\tau\le n/k^2$. Suppose, for contradiction, that $W$ is not weakly Pareto optimal. Then there exists an integral committee $W'$ of size $k$ such that
$u_i(W') > u_i(W)$ for every voter $i$. 

Let $X \coloneqq W'\setminus W$, $Y \coloneqq W\setminus W'$, and $r \coloneqq |X|=|Y|$. Since $W' \neq W$, we have $r \geq 1$. For every $y \in Y$ and $x \in X$, define the change in the PAV score upon swapping $y$ and $x$ as
\[\Delta_{y,x} \coloneqq \PAV(W \setminus \{y\} \cup \{x\})-\PAV(W).\]
Since $W$ is a \localPAV{$\tau$} committee, we have
$$\sum_{y\in Y}\sum_{x\in X}\Delta_{y,x}<r^2\tau.$$
We will now lower-bound the double summation. Fix a voter $i$, and write
\[u_i\coloneqq |A_i\cap W|, \qquad x_i\coloneqq |X\cap A_i|, \qquad y_i\coloneqq |Y\cap A_i|.\]
Then, $u_i(W') = u_i - y_i + x_i$. Since $W'$ strongly improves voter $i$ and utilities are integral, we have that $x_i \geq y_i + 1$.

Let $\Gamma_i$ denote voter $i$'s total contribution to $\sum_{y\in Y}\sum_{x\in X}\Delta_{y,x}$. If $u_i=0$, then $y_i=0$, and
$$\Gamma_i=rx_i\ge r\ge \frac{r}{k}.$$
Now suppose $u_i\ge 1$. The voter contributes a PAV score of $1/(u_i+1)$ for swaps where the added candidate is approved and the removed candidate is not approved, and reduces the PAV score by $1/u_i$ for swaps where the removed candidate is approved and the added candidate is not approved. Hence,
\[\Gamma_i = \frac{x_i (r - y_i)}{u_i + 1} - \frac{y_i(r - x_i)}{u_i}.\]
This expression is increasing in $x_i$, so it is minimized when $x_i=y_i+1$. That is,
\[\Gamma_i \geq \frac{(y_i + 1) (r - y_i)}{u_i + 1} - \frac{y_i(r - y_i - 1)}{u_i}.\]
Substituting $u_i = y_i + z_i$, where $z_i \coloneqq u_i - y_i$, gives
\[\Gamma_i \geq \frac{rz_i + y_i(y_i + 1)}{(y_i + z_i)(y_i + z_i + 1)}.\]
Note that the denominator is positive since $y_i + z_i = u_i \geq 1$. 

We will now show that the right-hand side is at least $r/k$, i.e., $k \bigl( rz_i + y_i(y_i + 1) \bigr) - r(y_i + z_i)(y_i + z_i + 1) \geq 0$. To show this, define
\[a \coloneqq k - r - z_i,
\qquad b \coloneqq r - y_i - 1.\]
Observe that $a \geq 0$ since $z_i = u_i - y_i \leq |W\cap W'| = k - r$, and $b \geq 0$ since $r = |X| \geq x_i \geq y_i + 1$. A direct expansion gives
\[
\begin{aligned}
&k\bigl(rz_i+y_i(y_i+1)\bigr)-r(y_i+z_i)(y_i+z_i+1) \\
&\qquad =
 a b z_i+a z_i y_i+a z_i+a y_i^2+a y_i+b^2 z_i+bz_i\ge 0.
\end{aligned}
\]
Therefore, $\Gamma_i\ge r/k$ for every voter. Summing over voters,
\[
\sum_{y\in Y}\sum_{x\in X}\Delta_{y,x}
=
\sum_{i\in N}\Gamma_i
\ge n\frac{r}{k}.
\]
Since $r\le k$, we have $nr/k\ge nr^2/k^2$. Hence,
\[
n\frac{r^2}{k^2}
\le
n\frac{r}{k}
\le
\sum_{y\in Y}\sum_{x\in X}\Delta_{y,x}
<
r^2\tau
\le
r^2\frac{n}{k^2},
\]
a contradiction. Thus, $W$ is weakly Pareto optimal.

\par\medskip
\paragraph{$2$-\fPO{} guarantee.}
We will prove that every $\tau$-local PAV committee is $2$-\fPO{} whenever $0 \leq \tau \le \frac{1}{2k^2}$.

Let $W$ be a $\tau$-local PAV committee. Suppose, for contradiction, that $W$ fails $2$-\fPO{}. Then, there must exist a fractional committee $x$ such that $u_i(x)\ge 2u_i(W)$ for every voter $i$ with strict inequality for at least one voter. For any candidate $a\in W$, define
$\lambda_a\coloneqq 1-x_a$, and for any candidate $b \notin W$, define $\mu_b\coloneqq x_b$. Since both $W$ and $x$ have total mass $k$,
\[
\theta\coloneqq \sum_{a\in W}\lambda_a=\sum_{b\notin W}\mu_b\le k.
\]
For $a\in W$ and $b\notin W$, let $\Delta_{a,b}\coloneqq \PAV(W \setminus \{a\} \cup \{b\})-\PAV(W)$. If $\tau>0$, $\tau$-locality gives
\[
\sum_{a\in W}\sum_{b\notin W}\lambda_a\mu_b\Delta_{a,b}
<
\tau\sum_{a\in W}\sum_{b\notin W}\lambda_a\mu_b
=
\tau\theta^2
\le
\tau k^2
\le
\frac12.
\]
If $\tau=0$, exact locality gives the even stronger upper bound
\[
\sum_{a\in W}\sum_{b\notin W}\lambda_a\mu_b\Delta_{a,b}\le 0.
\]

We will now show that the same weighted sum is at least $1/2$, giving a contradiction. First, consider the case in which $u_i(W)=0$ for every voter $i$. Since $x$ strictly improves some voter, there must exist a voter who approves a candidate $b\notin W$ with $x_b>0$. Replacing any candidate of $W$ by $b$ increases the PAV score by at least $1$, contradicting $\tau\le 1/(2k^2)\le 1/2$. Hence, some voter must have positive utility under $W$.

Fix a voter $i$, and define $r_i\coloneqq u_i(W)$, $p_i\coloneqq x(A_i\cap W)$, and $q_i\coloneqq x(A_i\setminus W)$. If $r_i=0$, then the voter never loses from removing a candidate in $W$, and its contribution to the weighted swap sum is $\Gamma_i=q_i\theta\ge 0$.

Now suppose $r_i\ge 1$. Let $z_i\coloneqq r_i-p_i$, which is the total $\lambda$-weight of approved candidates in $W$. The voter gains PAV value $1/(r_i+1)$ exactly when the added candidate is approved and the removed candidate is not approved; the total $\lambda$-weight of such pairs is $q_i(\theta-z_i)$. Similarly, the voter loses PAV value $1/r_i$ exactly when the removed candidate is approved and the added candidate is not approved; the total $\lambda$-weight of such pairs is $z_i(\theta-q_i)$. Therefore, the voter's contribution is
\[ \Gamma_i = \frac{q_i(\theta-z_i)}{r_i+1} - \frac{z_i(\theta-q_i)}{r_i}.\]
Since $x$ $2$-dominates $W$, we have $p_i+q_i\ge 2r_i$, or equivalently $q_i\ge r_i+z_i$. Multiplying by $r_i(r_i+1)$ gives
\[
r_i(r_i+1)\Gamma_i
=
\theta\bigl(r_iq_i-z_i(r_i+1)\bigr)+z_iq_i.
\]
Moreover,
\[
r_iq_i-z_i(r_i+1)
\ge
r_i(r_i+z_i)-z_i(r_i+1)
=
r_i^2-z_i
\ge 0.
\]
Since $\theta\ge q_i$, it follows that
\[
r_i(r_i+1)\Gamma_i
\ge
q_i\bigl(r_iq_i-z_i(r_i+1)\bigr)+z_iq_i
=
r_iq_i(q_i-z_i).
\]
Using $q_i\ge r_i+z_i$, we obtain
\[
\Gamma_i
\ge
\frac{q_i(q_i-z_i)}{r_i+1}
\ge
\frac{(r_i+z_i)r_i}{r_i+1}
\ge
\frac{r_i^2}{r_i+1}
\ge
\frac12,
\]
where the second-last inequality uses $z_i \geq 0$ and the last inequality holds for all $r_i \geq 1$.
Thus, every voter with $r_i\ge 1$ contributes at least $1/2$, while every voter with $r_i=0$ contributes nonnegatively. Since at least one voter has $r_i\ge 1$, the total weighted swap sum is at least $1/2$, contradicting the upper bound above. Therefore, $W$ is $2$-\fPO{}.

\par\medskip
\paragraph{$2/3$-PAV approximation guarantee.}

As in the proof of weak Pareto optimality, consider a $\tau$-local-PAV committee $W$ and any other committee $W'$. Let $X \coloneqq W'\setminus W$, $Y \coloneqq W\setminus W'$, and $r \coloneqq |X|=|Y|$. Since $W' \neq W$, we have $r \geq 1$. For any $i \in N$, let $x_i = |X \cap A_i|, y_i = |Y \cap A_i|$ and $u_i = |W \cap A_i|$. For each $x \in X, y \in Y$, swapping $y$ with $x$ increases the $\PAV{}$ score by $\Delta_{y, x} < \tau$. Considering the voter-wise contributions to $\sum_{y \in Y}\sum_{x \in X} \Delta_{y, x}$, we get:
\begin{equation}
\label{eq:sum_Gamma_bound}
    \sum_{i \in N} \Gamma_i \leq \tau r^2,
\end{equation}
where $\Gamma_i = \frac{x_i(r - y_i)}{u_i + 1} - \frac{y_i (r - x_i)}{u_i}$ for all $i \in N$ and $\frac{y_i}{u_i}$ is interpreted as $0$ when $u_i$ equals $0$.
We now need the following elementary lemma about harmonic numbers, which we prove in \Cref{appendix:Proof_local_PAV_properties}.
\begin{restatable}[]{lemma}{HarmonicHelperLemma}
	For any nonnegative integers $x, y, r, u$ such that $r \ge 1$, $x, y \le r$ and $y \le u$, we have:
	\[H(u + x - y) \le \frac{3r-1}{2r}H(u) + \frac{\Gamma}{r},\]
	where $\Gamma = \frac{x(r-y)}{u + 1} - \frac{y(r - x)}{u}$ if $u > 0$ and $r x$ if $u = 0$.
	\label{lem:harmonic_numbers_lemma}
\end{restatable}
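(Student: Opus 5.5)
The plan is to reduce the inequality to a comparison between the exact change $H(u+x-y)-H(u)$ and the first-order terms hidden in $\Gamma$. The degenerate case $u=0$ is handled separately. Then $y\le u$ forces $y=0$, and $\Gamma=rx$, so the claim reads $H(x)\le 0+x$. This holds because $H_t\le t$ for every integer $t\ge 0$.

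\textbf{Rewriting $\Gamma$ for $u\ge 1$.} Expand the two fractions in $\Gamma$:
\[
\Gamma=\frac{rx}{u+1}-\frac{ry}{u}+xy\Bigl(\frac1u-\frac1{u+1}\Bigr)
= r\Bigl(\frac{x}{u+1}-\frac{y}{u}\Bigr)+\frac{xy}{u(u+1)}.
\]
Multiply the target inequality by $r$ and move $H(u)$ to the left. The claim becomes
\[
r\Bigl[H(u+x-y)-H(u)-\frac{x}{u+1}+\frac{y}{u}\Bigr]\le \frac{r-1}{2}H(u)+\frac{xy}{u(u+1)}. \qquad (\ast)
\]

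\textbf{Bounding the harmonic increment.} I would split according to the sign of $x-y$ and use the fact that harmonic increments are monotone.

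If $x\ge y$, each of the $x-y$ added terms is at most $\frac1{u+1}$, so $H(u+x-y)-H(u)\le \frac{x-y}{u+1}$. The bracket in $(\ast)$ is then at most $\frac{y}{u(u+1)}$. It therefore suffices to show
\[
\frac{y(r-x)}{u(u+1)}\le \frac{r-1}{2}H(u).
\]
This is trivial when $y=0$. When $y\ge1$, we have $x\ge1$, so $r-x\le r-1$. Also $y/u\le 1$, and $\frac1{u+1}\le\frac12\le \frac{H(u)}2$. Multiplying these three bounds gives the needed inequality.

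If $x<y$, each of the $y-x$ removed terms is at least $\frac1u$, so $H(u)-H(u-(y-x))\ge \frac{y-x}{u}$. The bracket is then at most $\frac{x}{u(u+1)}$. It suffices to show
\[
\frac{x(r-y)}{u(u+1)}\le\frac{r-1}{2}H(u).
\]
This is trivial when $x=0$. Otherwise $x<y\le u$ gives $x/u<1$, and $y\ge 1$ gives $r-y\le r-1$. Together with $\frac1{u+1}\le \frac{H(u)}2$ for $u\ge1$, this finishes the case.

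\textbf{Expected difficulty.} I expect no serious obstacle. The only delicate point is the algebraic regrouping of $\Gamma$ that isolates the cross term $\frac{xy}{u(u+1)}$. Once that is done, the factor $\frac{r-1}{2}$ is exactly what the crude bounds $r-\max(x,y)\le r-1$ and $\frac{1}{u+1}\le \frac{H(u)}2$ produce. The point to double-check is that the boundary cases ($x=0$, $y=0$, $u=1$) are covered by the trivial branches.
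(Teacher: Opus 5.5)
Your proposal is correct and is essentially the paper's proof: the paper also disposes of $u=0$ via $H(x)\le x$, rewrites $\Gamma$ (your regrouping $r\bigl(\frac{x}{u+1}-\frac{y}{u}\bigr)+\frac{xy}{u(u+1)}$ is algebraically the same as the paper's two forms), splits on $x\ge y$ versus $x<y$, and bounds the harmonic increment by its extreme term. This reduces the claim to $\frac{y(r-x)}{u(u+1)}\le\frac{r-1}{2}H(u)$, respectively $\frac{x(r-y)}{u(u+1)}\le\frac{r-1}{2}H(u)$, which both you and the paper close with the same crude bounds ($y\le u$; $r-x\le r-1$ or $r-y\le r-1$ when the relevant variable is positive; and $H(u)\ge 1\ge\frac{2}{u+1}$); your treatment of the boundary cases is fine.
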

From \Cref{lem:harmonic_numbers_lemma}, we obtain that, for each voter $i \in N$,
\[H(u_i(W')) = H(u_i + x_i - y_i) \leq \frac{3r-1}{2r} H(u_i) + \frac{\Gamma_i}{r}.\]
Summing this inequality over all $i \in N$, we get:
\[
    \PAV(W') \leq \frac{3r-1}{2r}\PAV(W) + \sum_{i \in N} \frac{\Gamma_i}{r} \leq \frac{3r-1}{2r} \PAV(W) + \tau \cdot r,
\]
where the second inequality follows from~\eqref{eq:sum_Gamma_bound}. Set $\tau = \frac{1}{2k^2}$. We have $\PAV(W) \geq 1$: otherwise, no voter is represented, and adding an approved candidate in place of any member of $W$ would increase the PAV score by at least $1$, contradicting $\tau$-local optimality. Using $r \leq k$, we obtain $\tau r = \frac{r}{2k^2} \leq \frac{1}{2r} \leq \frac{\PAV(W)}{2r}$. Substituting this bound above gives $\PAV(W') \leq \frac{3}{2} \PAV(W)$ for any \localPAV{$\frac{1}{2k^2}$} committee $W$ and any arbitrary committee $W'$ of size $k$. Thus, $W$ provides a $\frac{2}{3}$-approximation to the optimal PAV score.

\begin{remark}[Tight bounds for $\tau$]
For weak Pareto optimality and \EJRplus{}, \Cref{tab:Local-PAV-Results} presents tight bounds on the parameter $\tau$ that differentiate between the existence and non-existence regimes. However, for other properties, specifically $(1+\epsilon)$-\PO{}, $2$-\fPO{}, and the $2/3$ PAV-score approximation, there is a gap in the thresholds where existence and non-existence can be established. Determining tight thresholds on $\tau$ for these properties remains an open problem.
\end{remark}

\begin{remark}[Parameter dependence]
The bounds presented in \Cref{thm:local_PAV_properties} hold for any fixed $n$ and $k$. Bounds that hold in the worst case over all $n$ and all $k$ would be overly pessimistic. In particular, a one-voter construction shows that for any fixed $\tau>0$, if $k$ is allowed to grow, no $k$-independent approximation to \fPO{} and no positive $k$-independent  approximation to the PAV score can be guaranteed (see \Cref{prop:One_Voter_Example_Instance-Dependent_Justification} in \Cref{appendix:Justifying_Instance-Dependent_Bounds}).
\label{rem:Instance-Dependence-Local-PAV}
\end{remark}

\begin{remark}[Connection to the core]
	The stability/fairness property of the \emph{core} \citep{ABC+17justified} implies weak \PO{} as a special case. \citet[Remark 4.2]{Peters25core} showed that $\tau$-local PAV satisfies the core for $k \le 7$ whenever $\tau \le 0.1 \frac{n}{k^2}$, but that it can fail the core when $k \ge 8$. The fact that local PAV tends to satisfy core-like properties might be an explanation for why it satisfies weak \PO{}.
\end{remark}

\subsection{Results for Global PAV}
\label{subsec:Results_global_PAV}

We will now analyze the efficiency properties of the global PAV (or simply PAV) rule. Every PAV committee is Pareto optimal, and is therefore also weakly Pareto optimal and $(1+\eps)$-Pareto optimal for all $\eps > 0$.  However, PAV does not satisfy \fPO{} \citep[Theorem 9]{BBF+26fractional}.

Our main result (\Cref{thm:global_PAV_properties}) concerning the PAV rule shows that every PAV committee satisfies $\alpha^\star$-\fPO{}, where $\alpha^\star \approx 1.346$ is the unique solution to $h(z) = 1$ where $h(z) = \int_{0}^z \frac{1 - e^{-y}}{y} \, dy$. We also demonstrate that this bound is tight by constructing a family of instances in which, for any $\alpha < \alpha^\star$, every PAV committee fails to satisfy $\alpha$-\fPO{}. Therefore, while PAV committees guarantee exact Pareto optimality, they only guarantee approximate fractional Pareto optimality.

\begin{restatable}[PAV is $\alpha^\star$-\fPO{}]{theorem}{GlobalPAVProperties}
Every global PAV committee is $\alpha^\star$-approximately fractionally Pareto optimal, where $\alpha^\star \approx 1.346$ is the unique solution of $\int_0^{\alpha^\star} \frac{1-e^{-y}}{y} \, dy = 1$. Moreover, this factor is asymptotically tight: For every $\alpha < \alpha^\star$, there is a family of instances with global PAV committees that are fractionally $\alpha$-Pareto dominated.
\label{thm:global_PAV_properties}
\end{restatable}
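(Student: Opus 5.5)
The upper bound will be proved through the concave surrogate $\Psi(x) = \sum_i h(u_i(x))$, following the outline sketched in the introduction. Let $W$ be a global PAV committee, and suppose for contradiction that a fractional committee $y \in P_k$ $\alpha^\star$-dominates it. That is, $u_i(y) \ge \alpha^\star u_i(W)$ for all $i$, with strict inequality for some voter $i_0$. Since $h$ is strictly increasing, \Cref{lem:Multiplicative-Shift-ineq} gives $h(u_i(y)) \ge h(\alpha^\star u_i(W)) \ge H_{u_i(W)}$ for every voter, with strict inequality for $i_0$. Summing over voters yields $\Psi(y) > \PAV(W)$.

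Next I would use the pointwise bound $F_{\PAV}(y) \ge \Psi(y)$, which is the Bernoulli--Poisson inequality referred to in the introduction. If it is not yet available as a lemma, I would prove it voter by voter. For voter $i$, $u_i(R(y))$ is a sum of independent Bernoullis with mean $\mu = u_i(y)$. The function $t \mapsto H_t$ is concave on the integers, and a Bernoulli sum is dominated in the convex order by a Poisson with the same mean. Hence $\E[H_{u_i(R(y))}] \ge \E[H_{\Poi(\mu)}] = h(\mu)$, where the last equality is \Cref{lem:Poisson-smoothening}. (One can check the convex-order claim by replacing each Bernoulli$(p)$ with a Poisson$(p)$ one at a time, using that $\Poi(p)$ is a mean-preserving spread of Bernoulli$(p)$ relative to concave functions of the sum.)

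Then \Cref{prop:Pipage_for_PAV} applied to $y$ gives an integral committee $W'$ with $\PAV(W') \ge F_{\PAV}(y) \ge \Psi(y) > \PAV(W)$. This contradicts the optimality of $W$, so $W$ is $\alpha^\star$-\fPO{}.

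For tightness, fix $\alpha < \alpha^\star$. I would build a symmetric instance in which PAV puts exactly one approved candidate into each voter's ballot, while a uniform fractional committee gives each voter about $\alpha^\star$ in expectation at no loss in PAV-favourability. The construction has $k$ ``private'' candidates, each approved by its own block of voters (one per voter group). It also has a large pool of $M$ ``shared'' candidates, where each voter approves a random-like (design-based) set of shared candidates so that a uniform choice of $k$ shared candidates gives each voter a number of approved candidates that is approximately $\Poi(\lambda)$.

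The parameters are tuned so that the PAV score of the shared committee is about $n \cdot \E[H_{\Poi(\lambda)}] = n\,h(\lambda)$. Against this we compare the private committee (each voter utility 1, score $n$). Choosing $\lambda$ slightly below $\alpha^\star$ makes $h(\lambda) < 1$, so the private committee is the global PAV optimum. Yet the fractional committee spreading mass $k$ uniformly over the shared pool gives every voter utility $\lambda > \alpha$, so it $\alpha$-dominates. Since utilities must be integral in the design, I would use a combinatorial design (for example, lines of a projective or affine geometry, or a large symmetric random construction with a concentration argument) so that every $k$-subset of shared candidates induces a nearly Poisson utility profile. This makes the PAV score of every mixed committee at most about $n \max(1, h(\lambda))$ up to $o(n)$.

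The main obstacle is this tightness construction: one must control all mixed committees (some private, some shared), not just the two extremes, and make the Poisson approximation exact enough that the private committee is strictly optimal. I would first handle mixed committees using concavity of $h$, bounding the score by a convex combination of the two extremes, and then pass to a limit in the family.
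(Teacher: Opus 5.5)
Your proof of the $\alpha^\star$-\fPO{} guarantee is correct and is the paper's argument. Strict monotonicity of $h$ and \Cref{lem:Multiplicative-Shift-ineq} give $\Psi(y)>\PAV(W)$, while pipage rounding and $F_{\PAV}\ge\Psi$ give $\PAV(W)\ge\Psi(y)$. Your convex-order derivation of $F_{\PAV}\ge\Psi$ is a valid alternative to the generating-function proof of \Cref{lem:Psi_Lower_Bounds_Multilinear_Extension}: Bernoulli$(p)$ lies below $\Poi(p)$ in convex order, this is preserved under independent sums, and the piecewise-linear interpolation of $t\mapsto H_t$ is concave.

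The tightness half has a genuine gap, exactly at the step you flag. Write $\gamma$ for your $\lambda$, and $\theta$ for the fraction of private candidates replaced. There are three problems.

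\emph{The design.} It must give \emph{every} set of $\ell$ shared candidates the same utility profile. Lines of a finite geometry do not: choosing all points of one line gives a profile far from Poisson. A random design would need a union bound over exponentially many subsets. The paper uses the complete design, with one voter approving $\{w_j\}\cup T$ for each $j\in[k]$ and each $d$-subset $T$ of the $ck$ auxiliary candidates, where $d/c=\gamma\in(\alpha,\alpha^\star)$. Then the shared utility $Y$ is exactly hypergeometric for every committee.

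\emph{The mixed committees.} Concavity of $h$ yields $h(\gamma\theta)\ge\theta h(\gamma)$, which is a \emph{lower} bound, so it cannot bound mixed committees from above. You must compute their score: kept blocks get $1+Y$ and dropped blocks get $Y$. The normalized score therefore tends to $g_\gamma(\theta)=h(\gamma\theta)+(1-\theta)\phi(\gamma\theta)$, with $\phi(\mu)=\frac{1-e^{-\mu}}{\mu}$. The real content is $g_\gamma<1$ on $(0,1]$, which is \Cref{lem:Calculus_Inequality}. Your chord bound is in fact true for this limit, but for a different reason. Since $\phi(\mu)=\int_0^1 e^{-s\mu}\,ds$, one has $g_\gamma''(\theta)=\gamma(\gamma-2)\phi'(\gamma\theta)+(1-\theta)\gamma^2\phi''(\gamma\theta)>0$ for $\gamma<2$. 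Hence $g_\gamma$ is convex and $g_\gamma(\theta)\le 1-\theta\,(1-h(\gamma))$. You would have to supply this argument.

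\emph{The error term.} An additive $o(n)$ error does not suffice. For committees with few shared candidates, the margin $n(1-g_\gamma(\theta))$ is only $\Theta(\theta n)$, with $\theta$ as small as $1/k$. The paper treats small $\theta$ with a bound that is uniform in all parameters: $H_y\le y$ and $H_{1+y}\le 1+y/2$ give a normalized score of at most $1-(1-\frac{\gamma}{2})\theta+\frac{\gamma}{2}\theta^2$. It invokes uniform convergence to $g_\gamma$ only on $[\delta,1]$.
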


\paragraph{Proof of $\alpha^\star$-\fPO{} Guarantee for PAV.}
\label{appendix:PAV_fPO_UpperBound}

We will first show that $\Psi(x) \coloneqq \sum_{i\in N} h(u_i(x))$ lower-bounds the multilinear extension of PAV $F_{\PAV}(\cdot)$. This observation will be used to establish the upper bound in \Cref{thm:global_PAV_properties}. Note that for the global-PAV guarantee, $h$ is needed only through this pointwise lower bound; maximizing $\Psi$ becomes essential only for the polynomial-time algorithm in \Cref{subsec:Results_Our_Algorithm}.

\begin{restatable}[]{lemma}{PsiLowerBoundsMultilinearExtension}
For every fractional committee $x\in[0,1]^C$, 
$$F_{\PAV}(x) \geq \Psi(x).$$
Moreover, if $\Psi(x) \geq 1/4$, then
$$F_{\PAV}(x)-\Psi(x)\geq \frac{1}{512n^2mk^2}.$$
\label{lem:Psi_Lower_Bounds_Multilinear_Extension}
\end{restatable}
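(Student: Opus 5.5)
The plan is to prove the inequality voter by voter. Since both $F_{\PAV}(x)=\sum_i \E[H_{u_i(R(x))}]$ and $\Psi(x)=\sum_i h(u_i(x))$ decompose over voters, it suffices to show $\E[H_{S_i}] \ge h(\mu_i)$ for each $i$. Here $S_i=\sum_{c\in A_i} B_c$ is a sum of independent Bernoulli$(x_c)$ variables and $\mu_i=u_i(x)=\sum_{c\in A_i} x_c$. By \Cref{lem:Poisson-smoothening}, $h(\mu_i)=\E[H_{Z}]$ with $Z\sim\Poi(\mu_i)$. A Poisson$(\mu_i)$ variable is itself a sum of independent Poisson$(x_c)$ variables $P_c$, $c\in A_i$. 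So the claim reduces to a Bernoulli--Poisson comparison: replacing each $B_c$ by $P_c$ can only decrease $\E[H_{\cdot}]$.

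I will do this replacement one coordinate at a time. Fix $c$, and let $T$ be the sum of the remaining (already mixed Bernoulli/Poisson) independent variables. Define $g(j)\coloneqq\E[H_{T+j}]$. Since $H$ has decreasing increments $1/(t+1)$, the function $g$ is concave and increasing on $\N\cup\{0\}$. I then need $\E[g(B)]\ge \E[g(P)]$ for $B\sim\mathrm{Ber}(p)$ and $P\sim\Poi(p)$, which have the same mean $p$. This follows from concavity: $B$ takes values in $\{0,1\}$, and $P$ is a mean-preserving spread of $B$ (convex order). Concretely, I will use that $g(j)\le g(0)+j\,(g(1)-g(0))$ for all integers $j\ge 0$, by concavity. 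Taking expectations gives $\E[g(P)]\le g(0)+p(g(1)-g(0))=\E[g(B)]$. Iterating over all $c\in A_i$ yields $\E[H_{S_i}]\ge\E[H_Z]=h(\mu_i)$, and summing over voters gives $F_{\PAV}(x)\ge\Psi(x)$. This is the Bernoulli--Poisson inequality of \citet{BFF21tight}.

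For the quantitative slack, I will track the gap from a single coordinate. The loss is $\E[g(B)]-\E[g(P)]=\sum_{j\ge 2}\Pr[P=j]\bigl(g(0)+j(g(1)-g(0))-g(j)\bigr)$. Its $j=2$ term is $\frac{p^2e^{-p}}{2}\,\bigl(2g(1)-g(0)-g(2)\bigr)$. The second difference is $\E\bigl[\frac{1}{T+1}-\frac{1}{T+2}\bigr]=\E\bigl[\frac{1}{(T+1)(T+2)}\bigr]$. Since $T$ has mean at most $\mu_i$, Jensen's inequality (convexity of $t\mapsto 1/((t+1)(t+2))$) bounds this below by $1/((\mu_i+1)(\mu_i+2))\ge 1/((k+1)(k+2))$. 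So each voter $i$ and candidate $c\in A_i$ contribute a gap of at least roughly $x_c^2/(2e\,(k+1)(k+2))$, using $x_c\le 1$.

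Next I use the hypothesis $\Psi(x)\ge 1/4$. Because $h(z)\le z$, some voter has $\mu_i\ge 1/(4n)$. Since $|A_i|\le m$, that voter has a candidate with $x_c\ge 1/(4nm)$. This gives a gap of order $1/(n^2m^2k^2)$, but the claimed bound is $1/(512n^2mk^2)$, with only one factor of $m$. To get it, I will sum the gap over all $c\in A_i$ and apply Cauchy--Schwarz: $\sum_{c} x_c^2\ge \mu_i^2/|A_i|\ge 1/(16n^2m)$. This needs the per-coordinate gaps to add up across the sequential replacements, which they do, since the losses telescope over coordinates and each step's loss is nonnegative with the lower bound above valid at that step. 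The constant then comes out as $\frac{1}{16n^2m}\cdot\frac{1}{2e(k+1)(k+2)}$. Because $k\ge 2$, we have $(k+1)(k+2)\le 3k^2$, so this is at least $1/(96e\,n^2mk^2)\ge 1/(512n^2mk^2)$. I expect the main obstacle to be this bookkeeping: getting a usable lower bound on the second difference of $g$ at every intermediate step, where $T$ mixes Bernoullis and Poissons, and assembling the constants. The Jensen bound on $\E[1/((T+1)(T+2))]$ handles the first part, because the mean of $T$ is unchanged by the replacements.
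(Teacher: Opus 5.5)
Your proof is correct, but it takes a different route from the paper. The paper never replaces coordinates one at a time. It compares generating functions, $\E[t^{S_i}]=\prod_{c\in A_i}(1-x_c(1-t))\le e^{-(1-t)u_i}=\E[t^{Z_i}]$ via $1-y\le e^{-y}$, and plugs this into the integral representation $\E[H_S]=\int_0^1\frac{1-\E[t^S]}{1-t}\,dt$ from the proof of \Cref{lem:Poisson-smoothening}. For the slack, it substitutes $s=1-t$ and sharpens to $\log(1-z)\le -z-z^2/2$ on the window $s\in[0,1/k]$. This gives $\E[H_{S_i}]-h(u_i)\ge Q_i/(8ek^2)$ with $Q_i=\sum_{c\in A_i}x_c^2$. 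The paper then sums over all voters and applies Cauchy--Schwarz over the approved candidates $C_+$, using $q=\sum_{c\in C_+}x_c\ge 1/(4n)$, to reach $1/(128e\,n^2mk^2)$. Your hybrid Bernoulli-to-Poisson swapping argument uses only concavity of $t\mapsto H_t$. Its first part therefore proves $\E[f(S_i)]\ge\E[f(Z_i)]$ for every concave $f$, which is the general Bernoulli--Poisson inequality and applies to any Thiele rule with nonincreasing weights. It also gives an exact formula for each step's loss. Your slack comes from the $j=2$ term plus Jensen on $\E[1/((T+1)(T+2))]$, which implicitly also uses that this function is decreasing to pass from $\E[T]\le u_i(x)\le k$ to the bound $1/((k+1)(k+2))$. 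You extract the slack from a single voter with $u_i(x)\ge 1/(4n)$ rather than from all voters, and you obtain the slightly better constant $1/(96e\,n^2mk^2)$. The paper's route is shorter given the integral identity it already has, and it handles all coordinates at once. However, it relies on the specific identity $H_s=\int_0^1\frac{1-t^s}{1-t}\,dt$. Both routes yield the same $\Theta(1/(n^2mk^2))$ slack.
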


The first inequality in \Cref{lem:Psi_Lower_Bounds_Multilinear_Extension} is the PAV specialization of the Bernoulli--Poisson inequality of \citet[Lemma~2.3]{BFF21tight}. The second inequality, to our knowledge, does not immediately follow from their result. The inverse polynomial slack term in this inequality will be useful in \Cref{subsec:Results_Our_Algorithm}, as it will allow us to work with an approximately optimal solution to a convex program instead of solving it exactly, thus avoiding precision issues related to real-valued numbers.

\begin{proof}[Proof of \Cref{lem:Psi_Lower_Bounds_Multilinear_Extension}]
Fix a voter $i$ and let $u_i \coloneqq \sum_{c \in A_i} x_c$. Let $R(x)$ be the random set of candidates obtained by selecting each candidate $c$ independently with probability $x_c$. For each candidate $c \in A_i$, let $B_c$ be the
indicator random variable for the event that $c\in R(x)$. Define $S_i \coloneqq \sum_{c\in A_i} B_c$. Observe that $S_i$ is a Poisson-Binomial random variable with $\E[S_i] = u_i$.

Let $Z_i \sim \Poi(u_i)$ be a Poisson random variable with the same mean $u_i$. For every $t \in [0,1]$, it follows from independence and the inequality
$1-y \leq e^{-y}$ that
\begin{align*}
    \E[t^{S_i}]
    &= \prod_{c\in A_i}\E[t^{B_c}]\\
    &= \prod_{c\in A_i}\bigl(1-x_c(1-t)\bigr)\\
    &\leq
    \exp\left(-(1-t)\sum_{c\in A_i}x_c\right)\\
    &= e^{-(1-t)u_i}
     = \E[t^{Z_i}].
\end{align*}
From the proof of \Cref{lem:Poisson-smoothening}, we have the following integral representation of harmonic numbers: 
$$\E[H_{S_i}] = \int_0^1 \frac{1-\E[t^{S_i}]}{1-t}\,dt.$$
Thus,
\[
    \E[H_{S_i}] \geq \int_0^1 \frac{1-\E[t^{Z_i}]}{1-t}\,dt = \E[H_{Z_i}]
    = h(u_i),
\]
where the final equality follows from \Cref{lem:Poisson-smoothening}. 
After the substitution $s = 1-t$, we get
$$\E[H_{S_i}]-h(u_i) = \int_0^1 \frac{e^{-su_i}-\prod_{c\in A_i}(1-sx_c)}{s}\,ds.$$

Let $Q_i \coloneqq \sum_{c \in A_i} x_c^2$. For $0 \leq s \leq 1/k$, we have $sx_c \leq 1/2$ because $k \geq 2$. The inequality $\log(1-z) \leq -z-z^2/2$ for $0 \leq z<1$ implies
\[
 \prod_{c\in A_i}(1-sx_c)
 \leq
 \exp\!\left(-su_i-\frac{s^2Q_i}{2}\right).
\]
Since $u_i\leq k$ and $Q_i\leq u_i\leq k$, we have $e^{-su_i}\geq e^{-1}$ and $s^2Q_i/2\leq1$. Using $1-e^{-a}\geq a/2$ for $0\leq a\leq1$, we obtain
\[
\E[H_{S_i}]-h(u_i)
\geq
\int_0^{1/k}\frac{e^{-1}s^2Q_i/4}{s}\,ds
=
\frac{Q_i}{8ek^2}.
\]
Let $C_+\coloneqq\{c\in C:\supp(c)\neq\emptyset\}$ and $q\coloneqq\sum_{c\in C_+}x_c$. Summing the preceding inequality and applying Cauchy--Schwarz yields
\[
F_{\PAV}(x)-\Psi(x)
\geq
\frac{1}{8ek^2}\sum_{c\in C_+}|\supp(c)|x_c^2
\geq
\frac{q^2}{8emk^2}.
\]
Finally, $h(z)\leq z$ for $z\geq0$, and therefore
\[
\frac14\leq\Psi(x)
\leq\sum_{i\in N}u_i(x)
=\sum_{c\in C_+}|\supp(c)|x_c
\leq nq.
\]
Hence $q\geq1/(4n)$, and
\[
F_{\PAV}(x)-\Psi(x)
\geq\frac{1}{128e\,n^2mk^2}
>\frac{1}{512n^2mk^2}.\qedhere
\]
\end{proof}

We will now prove the upper bound in \Cref{thm:global_PAV_properties}, specifically that any PAV committee is $\alpha^\star$-\fPO{}.

\begin{proof} (Approximation guarantee in \Cref{thm:global_PAV_properties})
Let $W$ be a PAV committee. Suppose, for contradiction, that a feasible fractional committee $x$ $\alpha^\star$-Pareto dominates $W$. Let
$r_i\coloneqq u_i(W)$. Then $u_i(x)\ge \alpha^\star r_i$ for every voter $i$, with strict inequality for at least one voter. If $r_i\ge 1$, then
\[
h(u_i(x))\ge h(\alpha^\star r_i)\ge H_{r_i},
\]
where the first inequality follows from monotonicity of $h(\cdot)$ and the second inequality follows from \Cref{lem:Multiplicative-Shift-ineq}. If $r_i=0$, then $h(u_i(x))\ge 0=H_{r_i}$. Moreover, the corresponding inequality is strict for the voter whose fractional utility is strictly improved. Consequently,
\[
\sum_{i\in N} h(u_i(x))>\sum_{i\in N}H_{r_i}=\PAV(W).
\]

To demonstrate the contradiction, we will show that $\PAV(W) \geq \sum_{i\in N} h(u_i(x))$. 
From \Cref{prop:Pipage_for_PAV}, we know that pipage rounding any fractional committee $x$ gives a distribution over integral committees $T$ of size $k$ such that
\[
\mathbb{E}[\PAV(T)]\ge F_{\PAV}(x),
\]
where $F_{\PAV}$ is the multilinear extension of $\PAV$. Since $W$ maximizes the PAV score over all integral committees of size $k$, we have that
\[
\PAV(W)\ge \mathbb{E}[\PAV(T)].
\]
From \Cref{lem:Psi_Lower_Bounds_Multilinear_Extension}, we know that for any fractional committee $x$, \[F_{\PAV}(x) \geq \Psi(x) = \sum_{i\in N} h(u_i(x)).\]
Therefore,
\[
\PAV(W) \geq \sum_{i\in N} h(u_i(x)).
\]
This contradicts our earlier observation that $\PAV(W) < \sum_i h(u_i(x))$. Thus, no such fractional committee exists, and $W$ is $\alpha^\star$-approximately fractionally Pareto optimal.
\end{proof}

The tightness construction (see \Cref{appendix:PAV_fPO_LowerBound}) shows that the above argument is essentially the best possible. Fix any $\alpha<\alpha^\star$. We construct an instance with $k$ \emph{baseline} candidates and a large \emph{auxiliary} set. Each voter approves one baseline candidate together with a prescribed subset of auxiliary candidates. The baseline committee $W$ gives every voter utility $1$. On the other hand, a fractional committee that spreads its entire budget uniformly over the auxiliary set gives every voter utility nearly $\alpha$. Thus, $W$ is fractionally dominated by a factor arbitrarily close to $\alpha$. The key idea is to choose the auxiliary set in a complete and symmetric way so that, despite this fractional domination, $W$ remains a global PAV optimum. Indeed, if an integral committee replaces a $\lambda$ fraction of baseline candidates by auxiliary candidates, then its normalized PAV score (i.e., as a fraction of the maximum possible PAV score) converges to $h(\alpha\lambda) + (1-\lambda)\frac{1-e^{-\alpha\lambda}}{\alpha\lambda}$,  which is strictly below $1$ for every $\lambda>0$ when $\alpha<\alpha^\star$. Hence, no integral committee beats the baseline committee, even though a fractional auxiliary committee improves every voter by a factor approaching $\alpha$. Since $\alpha$ can be chosen arbitrarily close to $\alpha^\star$, the $\alpha^\star$-\fPO{} guarantee for global PAV is asymptotically tight.

\subsection{Results for the DMMS Algorithm}
\label{subsec:Results_DMMS_Algorithm}

\citetalias{DMM+21tight} prove that their algorithm (which we call the DMMS algorithm) runs in polynomial time and provides a $\gamma^\star \approx 0.79$ approximation to the optimal PAV score.
We analyze their algorithm further in \Cref{appendix:DMMS_Fails_JR}, where we also recall its formal definition. In \Cref{lem:dmms_fPO} we show that it satisfies \fPO{} and hence \PO{} and weak \PO{}. We also give an example in \Cref{lem:dmms_not_JR} where a particular execution of the DMMS algorithm fails \JR{}.

\subsection{Our Algorithm}
\label{subsec:Results_Our_Algorithm}

We will now present our main algorithmic contribution: A polynomial-time round-and-swap-based algorithm that achieves the efficiency properties of the PAV rule (\Cref{alg:Pipage-and-swap}).

\begin{algorithm}[t]
\caption{Round and Swap Algorithm}
\DontPrintSemicolon

\KwIn{An approval-based committee election instance $\I = \langle N, C, \A, k \rangle$.}
\KwOut{A committee $W\subseteq C$ with $|W|=k$.}

\BlankLine
\tcp{Concave program}
Compute an approximately optimal solution $\widehat{x}$ to the following concave program
\begin{equation}
    \max_{x\in[0,1]^C,\ \sum_{c\in C}x_c=k} \Psi(x)\coloneqq \sum_{i\in N} h(u_i(x)),
\label{eq:smooth-program}
\end{equation}
where $\Psi(\widehat{x}) \geq \Psi^\star - \eta$, $\Psi^\star \coloneqq \max_{x}\Psi(x)$ and $\eta \coloneqq \frac{1}{1024n^2mk^2}$.

Let $F_{\PAV}$ denote the multilinear extension of the $\PAV(\cdot)$ set function.\;

\BlankLine
\tcp{Rounding step}
$W\gets \textsc{PipageRound}_{F_{\PAV}}(\widehat{x})$\tcp*{$W$ is an integral committee of size $k$.}

\BlankLine
\tcp{Local search phase}
\While{swapping $c \in W$ and $c' \notin W$ improves PAV score by at least $\frac{1}{2k^2}$}{
    $W \leftarrow W \setminus \{c\} \cup \{c'\}$
}

\Return{$W$}\;
\label{alg:Pipage-and-swap}
\end{algorithm}

\begin{restatable}[\EJRplus{}, $1.346$-\fPO{}, and $0.79$-$\PAV$ in poly time]{theorem}{MainAlgo}
Given any instance of approval-based committee voting, \Cref{alg:Pipage-and-swap} runs in polynomial time and returns a committee $W$ satisfying \EJRplus{} and $\alpha^\star$-\fPO{}, where $\alpha^\star \approx 1.346$ is the unique solution of $h(\alpha^\star) = \int_0^{\alpha^\star}\frac{1-e^{-y}}{y}\,dy=1$. The same committee is weakly Pareto optimal, is $(1+\eps)$-\PO{} for every $\eps>0$, and achieves a $\gamma^\star \approx 0.79$ approximation to the optimal PAV score, where
$\gamma^\star \coloneqq h(1) = \int_0^1\frac{1-e^{-y}}{y}\,dy$.
\label{thm:Main-Algo}
\end{restatable}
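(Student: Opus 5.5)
The plan is to split the proof into a running-time part and a part for each guarantee. Almost everything will follow from the inequality $\PAV(W) > \Psi^\star$ for the final committee $W$, plus the fact that $W$ is a \localPAV{$\frac{1}{2k^2}$} committee.

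\emph{Running time.} The program \eqref{eq:smooth-program} maximizes a concave function over the polytope $P_k$. Concavity holds because $h$ is concave and each $u_i(x)$ is linear. The gradient $h'(z)=(1-e^{-z})/z$ is bounded by $1$, so $\Psi$ is Lipschitz with constant at most $n$ in each coordinate. A standard convex-optimization method (ellipsoid, or projected gradient ascent with an $\eta$-accurate stopping rule) therefore finds $\widehat{x}$ with $\Psi(\widehat{x})\ge \Psi^\star-\eta$ in time polynomial in $n,m,k$ and $\log(1/\eta)$. The evaluation of $h$ only needs to be accurate to within an inverse polynomial, which should be absorbed into $\eta$.

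Derandomized pipage rounding runs in polynomial time by \Cref{prop:Pipage_for_PAV}. For the local search, each swap raises the PAV score by at least $\frac{1}{2k^2}$, and the score is always at most $nH_k$. So there are at most $2k^2 nH_k$ swaps, each found by checking $O(km)$ pairs.

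\emph{The key inequality $\PAV(W)>\Psi^\star$.} First I show $\Psi^\star>3/4$ whenever some voter approves a candidate. Take some $c$ approved by some voter and put $x_c=1$, spreading the remaining mass $k-1$ arbitrarily. Since $k\ge 2$, I would check that some feasible $x$ attains $\Psi(x)\ge h(1)\approx 0.796>3/4$. With $\eta<1/2$ this gives $\Psi(\widehat{x})\ge 1/4$.

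\Cref{lem:Psi_Lower_Bounds_Multilinear_Extension} then gives $F_{\PAV}(\widehat{x})\ge\Psi(\widehat{x})+2\eta$, because $\frac{1}{512n^2mk^2}=2\eta$. Pipage rounding gives $\PAV(W_0)\ge F_{\PAV}(\widehat{x})$, and local search only increases the score. Chaining these,
\[
\PAV(W)\ \ge\ \PAV(W_0)\ \ge\ \Psi(\widehat{x})+2\eta\ \ge\ \Psi^\star+\eta\ >\ \Psi^\star .
\]
I expect this chain, together with the precision and running-time argument for the concave program, to be the main technical obstacle. The remaining guarantees are corollaries.

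\emph{The individual guarantees.}
\begin{itemize}
\item \textbf{$\alpha^\star$-\fPO{}.} Suppose a fractional $y$ $\alpha^\star$-dominates $W$. Exactly as in the proof of \Cref{thm:global_PAV_properties}, \Cref{lem:Multiplicative-Shift-ineq} and the strict monotonicity of $h$ give $\Psi(y)>\sum_i H_{u_i(W)}=\PAV(W)>\Psi^\star$. This contradicts the maximality of $\Psi^\star$.
\item \textbf{PAV-score approximation.} Let $W^\ast$ be optimal. Its indicator vector is feasible, so by \Cref{lem:h_Lower_and_Upper_Bound_Harmonics},
\[
\Psi^\star\ \ge\ \sum_i h(u_i(W^\ast))\ \ge\ \gamma^\star\sum_i H_{u_i(W^\ast)}\ =\ \gamma^\star\OPT .
\]
Hence $\PAV(W)>\gamma^\star\OPT$.
\item \textbf{Local PAV consequences.} $W$ is a \localPAV{$\frac{1}{2k^2}$} committee, and $\frac{1}{2k^2}\le\frac1{k^2}\le\frac{n}{k^2}$. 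So \Cref{thm:local_PAV_properties} and the classical result \citep{AEH+18complexity,BP23robust} give \EJRplus{}, weak \PO{}, and $(1+\eps)$-\PO{} for all $\eps>0$.
\end{itemize}

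The degenerate instance in which no voter approves any candidate is excluded, since the preliminaries assume every voter approves at least one candidate. In that case every claim is trivial anyway.
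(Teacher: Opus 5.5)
Your proof is correct and follows the paper's own argument essentially step for step: the chain $\PAV(W)\ge\PAV(W_0)\ge F_{\PAV}(\widehat{x})\ge\Psi(\widehat{x})+2\eta\ge\Psi^\star+\eta$ (with $\Psi^\star\ge h(1)>3/4$ used to activate the slack in \Cref{lem:Psi_Lower_Bounds_Multilinear_Extension}), the $\alpha^\star$-\fPO{} and $\gamma^\star$-approximation corollaries, and the guarantees inherited from \localPAV{$\frac{1}{2k^2}$} are exactly how \Cref{lem:swap-terminates,lem:alpha-fpo,lem:pav-approx} proceed. One harmless imprecision: projected gradient ascent needs a number of iterations polynomial in $1/\eta$ rather than $\log(1/\eta)$, which still gives polynomial time because $\eta$ is inverse-polynomial.
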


Our algorithm (\Cref{alg:Pipage-and-swap}) consists of three phases:
\begin{enumerate}
    \item \emph{Approximate concave optimization.} The algorithm starts by computing a fractional committee $\widehat{x}$ that approximately optimizes a smooth concave surrogate $\Psi(x)=\sum_{i\in N} h(u_i(x))$; specifically, $\Psi(\widehat{x}) \geq \Psi^\star - \eta$ where $\Psi^\star \coloneqq \max_x \Psi(x)$. The function $h$ is chosen to interact well with both fractional Pareto optimality and the PAV objective; in particular, $h(\alpha^\star t)\geq H_t$ for every integer $t\geq 0$, where $h(\alpha^\star)=1$. It can be shown that the multilinear extension $F_{\PAV}$ at $\widehat{x}$ is strictly larger than the corresponding surrogate, i.e., $F_{\PAV}(\widehat{x}) \geq \Psi(\widehat{x}) + 2\eta$. Because $\eta$ has polynomial bit complexity and the function $h$ and its derivative $h'$ can be evaluated to polynomial accuracy, standard convex-optimization methods compute the required $\widehat{x}$ in polynomial time.
    \item \emph{Pipage rounding.} After computing an approximately optimal solution $\widehat{x}$ to the convex program (to within an inverse polynomial additive accuracy $\eta$), the algorithm uses pipage rounding with respect to $F_{\PAV}$ to compute an integral committee $W$ of size $k$. The PAV score of this committee is guaranteed to be at least $F_{\PAV}(\widehat{x})$, which, by the aforementioned property of $\Psi$, is at least $\Psi^\star$. 
    \item \emph{Local search.} The third phase of the algorithm involves a local search strategy in which it performs PAV score-improving candidate swaps. Specifically, starting with the integral committee $W$ returned by the rounding phase, the algorithm performs swaps between elected and unelected candidates as long as the PAV score improves by at least $\frac{1}{2k^2}$. Since the maximum possible PAV score is at most $nH_k$, the number of swaps is polynomial, and the process terminates with a \localPAV{$\frac{1}{2k^2}$} committee.
\end{enumerate}

Due to \Cref{thm:local_PAV_properties}, the committee returned by the algorithm inherits the positive guarantees of local PAV outcomes; in particular, it satisfies \EJRplus{}, weak \PO{}, and $(1+\eps)$-\PO{} for all $\eps > 0$. The same theorem also readily gives the $2$-\fPO{} and $2/3$ $\PAV{}$-score approximation guarantees. However, due to our careful initialization of the local search phase, we can obtain strictly better approximation factors, as we discuss next.

Note that the monotonicity of PAV score during the local search phase preserves the PAV approximation obtained after pipage rounding. Specifically, for an optimal PAV committee $O$, the integral point $\mathbf{1}_O$ is feasible for the smooth program and the inequality $h(t)\geq \gamma^\star H_t$ for all integers $t\geq 0$, with $\gamma^\star=h(1)$, gives $\Psi^\star \geq \Psi(\mathbf{1}_O) \geq \gamma^\star \PAV(O)$. Therefore, the final committee, whose PAV score is at least $\Psi^\star$, achieves a $\gamma^\star$-approximation to the optimal PAV score.

Using similar reasoning, we can show that the algorithm's outcome $W$ satisfies $\alpha^\star$-\fPO{}. Indeed, if $W$ were fractionally $\alpha^\star$-Pareto dominated by some fractional committee $y$, then the inequality $h(\alpha^\star t)\geq H_t$ would imply $\Psi(y)>\PAV(W)$. However, this contradicts our earlier observation that $\PAV(W) \geq \Psi^\star$,
yielding the desired $\alpha^\star$-\fPO{} guarantee.

Let us now formalize the above arguments. We will prove \Cref{thm:Main-Algo} in three parts: \Cref{lem:swap-terminates} will establish \EJRplus{}, \Cref{lem:alpha-fpo} will show the $\alpha^\star$-\fPO{} guarantee, and \Cref{lem:pav-approx} will prove the $\gamma^\star$-approximation of PAV score.

\begin{lemma}[\EJRplus{} guarantee]
The local search phase of \Cref{alg:Pipage-and-swap} terminates after at most $2nk^2 H_k$ swaps. The final committee $W$ satisfies \EJRplus{} and $\PAV(W)\ge \Psi(x^\star)$, where $x^\star$ is the fractional committee maximizing $\Psi(x)$.
\label{lem:swap-terminates}
\end{lemma}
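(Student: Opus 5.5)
The proof has three parts: bound the number of swaps, carry the PAV score through rounding and local search, and read off \EJRplus{} from the final local-optimality condition.

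\emph{Termination.} The PAV score of any committee is nonnegative. Each voter's harmonic utility is at most $H_k$, so the score is at most $nH_k$. Every iteration of the while-loop raises the PAV score by at least $\frac{1}{2k^2}$, and the score never decreases. Hence, starting from the rounded committee $W_0$ (score at least $0$), at most $nH_k\cdot 2k^2 = 2nk^2H_k$ swaps can occur before the loop condition fails. Each iteration only requires checking $k(m-k)$ swaps, and each PAV evaluation takes polynomial time, so the phase is polynomial.

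\emph{Score bound.} Let $\widehat{x}$ be the computed approximate optimum with $\Psi(\widehat{x})\ge \Psi^\star-\eta$, where $\Psi^\star=\Psi(x^\star)$.
\begin{itemize}
\item First show $\Psi^\star\ge 1/2$, which gives $\Psi(\widehat{x})\ge 1/4$. By the standing assumption, some voter approves some candidate $c$. Any integral committee containing $c$ gives that voter utility at least $1$. Since $h$ is increasing, $\Psi^\star\ge h(1)=\gamma^\star\approx 0.79>3/4$. Because $\eta<1/2$, we get $\Psi(\widehat{x})\ge 1/4$.
\item Apply the second part of \Cref{lem:Psi_Lower_Bounds_Multilinear_Extension} to obtain $F_{\PAV}(\widehat{x})\ge \Psi(\widehat{x})+\frac{1}{512n^2mk^2}=\Psi(\widehat{x})+2\eta$.
\item The derandomized pipage rounding of \Cref{prop:Pipage_for_PAV} yields $W_0$ with $\PAV(W_0)\ge F_{\PAV}(\widehat{x})\ge \Psi^\star-\eta+2\eta>\Psi^\star$.
\item Local search only increases the score, so the final $W$ satisfies $\PAV(W)\ge\PAV(W_0)>\Psi(x^\star)$.
\end{itemize}

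\emph{\EJRplus{}.} When the loop stops, no swap improves the score by at least $\frac{1}{2k^2}$, so $W$ is a \localPAV{$\frac{1}{2k^2}$} committee. Since $n\ge 1$, we have $\frac{1}{2k^2}\le \frac{n}{k^2}$. The classical result quoted in the preliminaries (every \localPAV{$\tau$} committee with $\tau\le n/k^2$ satisfies \EJRplus{}; \citealp{AEH+18complexity,BP23robust}) then applies directly. Equivalently, one can cite the \EJRplus{} row of \Cref{thm:local_PAV_properties}.

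The only delicate step is the score chain. It needs the slack $2\eta$ from \Cref{lem:Psi_Lower_Bounds_Multilinear_Extension}, which in turn needs the hypothesis $\Psi(\widehat{x})\ge 1/4$. I expect the lower bound $\Psi^\star\ge h(1)$, coming from the nonempty-approval assumption, to be the point requiring care. The remaining steps are bookkeeping.
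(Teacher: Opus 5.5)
Your proposal is correct and follows the paper's proof essentially step for step. The score bound $nH_k$ gives at most $2nk^2H_k$ swaps, termination yields a $\frac{1}{2k^2}$-local PAV committee and hence \EJRplus{} via the local-PAV theorem, and the chain $\PAV(W)\ge\PAV(W_0)\ge F_{\PAV}(\widehat{x})\ge\Psi(\widehat{x})+2\eta\ge\Psi^\star+\eta$ gives the score bound. The paper only asserts $\Psi^\star>3/4$, whereas you justify it by evaluating $\Psi$ at an integral committee containing an approved candidate, which gives $\Psi^\star\ge h(1)=\gamma^\star$.
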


\begin{proof}[Proof of \Cref{lem:swap-terminates}]
Every swap step increases PAV by at least $1/2k^2$. Also $\PAV(W)\le nH_k$ for every committee $W$. Thus, there can be at most $2nk^2H_k$ swap steps. At termination, the resulting committee $W$ is \localPAV{$\frac{1}{2k^2}$}, and therefore satisfies \EJRplus{}~(\Cref{thm:local_PAV_properties}).

Let $W_0$ denote the committee returned by the (deterministic) pipage rounding step. Then, by \Cref{prop:Pipage_for_PAV}, we know that $\PAV(W_0) \geq F_{\PAV}(\widehat{x})$. Additionally, from \Cref{lem:Psi_Lower_Bounds_Multilinear_Extension}, we know that if $\Psi(x) \geq 1/4$, then $F_{\PAV}(x) \geq \Psi(x) + 2\eta$, with $\eta = \frac{1}{1024n^2mk^2}$. The assumption $\Psi(\widehat{x}) \geq 1/4$ holds without loss of generality since, for any instance where at least one voter approves at least one candidate, the optimum objective value is $\Psi^\star = \Psi(x^\star) > 3/4$, and by the choice of $\eta$, $\Psi(\widehat{x}) \geq 1/4$. Therefore, we have $\PAV(W_0) \geq F_{\PAV}(\widehat{x}) \geq \Psi(\widehat{x}) + 2\eta \geq \Psi^\star + \eta > \Psi^\star$. Finally, since the PAV score only increases during the local search phase, we get that $\PAV(W)\ge \PAV(W_0)\ge \Psi(x^\star)$.
\end{proof}

We will now show that the output of \Cref{alg:Pipage-and-swap} satisfies the $\alpha^\star$-\fPO{} guarantee. The argument is similar to that in the proof of~\Cref{thm:global_PAV_properties}.

\begin{lemma}[$\alpha^\star$-\fPO{} guarantee]
The committee $W$ output by \Cref{alg:Pipage-and-swap} is $\alpha^\star$-\fPO{}.
\label{lem:alpha-fpo}
\end{lemma}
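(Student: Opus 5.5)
We argue by contradiction, mirroring the proof of the $\alpha^\star$-\fPO{} part of \Cref{thm:global_PAV_properties}, with global optimality of $W$ replaced by the score bound from \Cref{lem:swap-terminates}. Let $W$ be the output of \Cref{alg:Pipage-and-swap} and set $r_i \coloneqq u_i(W)$. Suppose some fractional committee $y$ (so $y\in[0,1]^C$ with $\sum_c y_c = k$) $\alpha^\star$-dominates $W$. Then $u_i(y) \geq \alpha^\star r_i$ for every voter $i$, with strict inequality for at least one voter $j$.

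\emph{Step 1: lower bound on $\Psi(y)$.} Take a voter with $r_i \geq 1$. By monotonicity of $h$ and \Cref{lem:Multiplicative-Shift-ineq}, $h(u_i(y)) \geq h(\alpha^\star r_i) \geq H_{r_i}$. If $r_i = 0$, then $h(u_i(y)) \geq 0 = H_0$. For the strictly improved voter $j$, strict monotonicity of $h$ gives $h(u_j(y)) > h(\alpha^\star r_j) \geq H_{r_j}$. Summing over voters,
\[
\Psi(y) = \sum_{i\in N} h(u_i(y)) > \sum_{i \in N} H_{r_i} = \PAV(W).
\]

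\emph{Step 2: upper bound on $\Psi(y)$.} Since $y$ is feasible for the concave program~\eqref{eq:smooth-program}, $\Psi(y) \leq \Psi^\star = \Psi(x^\star)$. By \Cref{lem:swap-terminates}, $\PAV(W) \geq \Psi(x^\star)$. Hence $\Psi(y) \leq \PAV(W)$, which contradicts Step 1. So no such $y$ exists and $W$ is $\alpha^\star$-\fPO{}.

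The argument rests on two ingredients. The first is the chain $\PAV(W) \geq \PAV(W_0) \geq F_{\PAV}(\widehat{x}) \geq \Psi(\widehat{x}) + 2\eta > \Psi^\star$. This needs $\Psi(\widehat{x}) \geq 1/4$ so that the slack in \Cref{lem:Psi_Lower_Bounds_Multilinear_Extension} applies. That holds because every voter approves some candidate, so the uniform or any integral committee gives $\Psi^\star > 3/4$. This step is already packaged in \Cref{lem:swap-terminates}, so I would simply cite it. The second ingredient is strictness in Step 1, which needs $h$ to be strictly increasing; this holds since $h'(z) = (1-e^{-z})/z > 0$. No other obstacle is expected.
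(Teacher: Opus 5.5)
Your proof is correct and is essentially the paper's own argument: $\alpha^\star$-domination together with \Cref{lem:Multiplicative-Shift-ineq} and strict monotonicity of $h$ gives $\Psi(y) > \PAV(W)$, while \Cref{lem:swap-terminates} gives $\PAV(W) \geq \Psi(x^\star) \geq \Psi(y)$, a contradiction. Your single line $h(u_j(y)) > h(\alpha^\star r_j) \geq H_{r_j}$ handles the strictly improved voter in one step, including the case $r_j = 0$ that the paper treats separately; one aside on your side remark is that the uniform fractional committee need not give $\Psi > 3/4$, whereas an integral committee containing some approved candidate does yield $\Psi^\star \geq h(1) = \gamma^\star > 3/4$.
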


\begin{proof}[Proof of \Cref{lem:alpha-fpo}]
Suppose, for contradiction, that some fractional committee $y$ $\alpha^\star$-dominates $W$. Thus $u_i(y)\ge \alpha^\star u_i(W)$ for all voters $i$ and the inequality is strict for at least one voter. By monotonicity of $h$ and \Cref{lem:Multiplicative-Shift-ineq}, $h(u_i(y)) \geq h(\alpha^\star u_i(W)) \geq H_{u_i(W)}$ for all voters $i$. Moreover, the sum is strict. If the strict domination occurs for a voter with $u_i(W)>0$, then the strict monotonicity of $h$ gives strictness. If it occurs for a voter with $u_i(W)=0$, then $u_i(y)>0$, so $h(u_i(y))>0=H_0$. Therefore
\[
  \Psi(y)=\sum_i h(u_i(y))>
  \sum_i H_{u_i(W)}=\PAV(W).
\]
From \Cref{lem:swap-terminates}, $\PAV(W)\ge \Psi(x^\star)$.
Combining the two inequalities gives
\[
  \Psi(y) > \PAV(W) \geq \Psi(x^\star),
\]
which contradicts the optimality of $x^\star$. Thus, no fractional committee $\alpha^\star$-dominates $W$.
\end{proof}

\begin{lemma}[$\gamma^\star$-PAV guarantee]
Let $W^{\mathrm{opt}}$ be any PAV committee. The committee $W$ returned by \Cref{alg:Pipage-and-swap} satisfies \[\PAV(W)\ge \gamma^\star\,\PAV(W^{\mathrm{opt}}),\] where $\gamma^\star \approx 0.79$ is such that $\gamma^\star = h(1)$.
\label{lem:pav-approx}
\end{lemma}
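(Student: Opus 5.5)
The plan is to chain three inequalities: the output's PAV score is at least the optimal value of the concave program, that optimum is at least the surrogate value of the integral optimum, and the surrogate value of the integral optimum is at least $\gamma^\star$ times its PAV score.

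\textbf{Step 1.} By \Cref{lem:swap-terminates}, the final committee satisfies $\PAV(W)\ge \Psi(x^\star)=\Psi^\star$. This already accounts for the $2\eta$ slack from \Cref{lem:Psi_Lower_Bounds_Multilinear_Extension} and for the fact that the local search phase never decreases the score, so no further work is needed on the algorithmic side.

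\textbf{Step 2.} Take the indicator vector $\mathbf{1}_{W^{\mathrm{opt}}}\in[0,1]^C$. Its coordinates sum to $|W^{\mathrm{opt}}|=k$, so it is feasible for program~\eqref{eq:smooth-program}. Hence $\Psi^\star\ge \Psi(\mathbf{1}_{W^{\mathrm{opt}}})=\sum_{i\in N} h\bigl(u_i(W^{\mathrm{opt}})\bigr)$, because $u_i(\mathbf{1}_{W^{\mathrm{opt}}})=|A_i\cap W^{\mathrm{opt}}|$.

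\textbf{Step 3.} Apply the pointwise bound $h(t)\ge\gamma^\star H_t$ voter by voter. For $t\ge1$ this is the left inequality of \Cref{lem:h_Lower_and_Upper_Bound_Harmonics}. For $t=0$ both sides are zero, since $h(0)=0=H_0$. Summing over voters gives $\Psi(\mathbf{1}_{W^{\mathrm{opt}}})\ge \gamma^\star\sum_{i\in N} H_{u_i(W^{\mathrm{opt}})}=\gamma^\star\PAV(W^{\mathrm{opt}})$.

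Chaining Steps 1--3 gives $\PAV(W)\ge\Psi^\star\ge\gamma^\star\,\PAV(W^{\mathrm{opt}})$, as required. The argument has no real obstacle. The only points needing care are the $t=0$ case, which \Cref{lem:h_Lower_and_Upper_Bound_Harmonics} states only for $t\ge1$, and checking that an integral committee is feasible for the relaxation.
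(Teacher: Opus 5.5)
Your proposal is correct and is essentially the paper's proof: feasibility of $\mathbf{1}_{W^{\mathrm{opt}}}$ gives $\Psi(x^\star)\ge\Psi(\mathbf{1}_{W^{\mathrm{opt}}})$, the pointwise bound $h(t)\ge\gamma^\star H_t$ supplies the factor $\gamma^\star$, and \Cref{lem:swap-terminates} supplies $\PAV(W)\ge\Psi(x^\star)$. Your separate handling of voters with $u_i(W^{\mathrm{opt}})=0$, using $h(0)=0=H_0$, covers a case the paper leaves implicit, since \Cref{lem:h_Lower_and_Upper_Bound_Harmonics} is stated only for $t\ge 1$.
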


\begin{proof}[Proof of \Cref{lem:pav-approx}]
By definition of $\Psi(x^\star)$, we have that $\Psi(x^\star) \geq \Psi(\mathbf{1}_{W^{\mathrm{opt}}})$, where $\mathbf{1}_{W^{\mathrm{opt}}}$ denotes the characteristic vector corresponding to the integral committee $W^{\mathrm{opt}}$. By \Cref{lem:h_Lower_and_Upper_Bound_Harmonics},
\[
  \Psi(\mathbf{1}_{W^{\mathrm{opt}}})
  =\sum_i h(u_i(W^{\mathrm{opt}}))
  \ge \gamma^\star\sum_i H_{u_i(W^{\mathrm{opt}})}
  =\gamma^\star\,\PAV(W^{\mathrm{opt}}).
\]
Finally, by \Cref{lem:swap-terminates}, $\PAV(W) \geq \Psi(x^\star)$. Combining the above inequalities yields the claimed approximation guarantee.
\end{proof}

Compared to the DMMS algorithm, our algorithm rounds a different fractional committee, and it adds a local search phase. It is natural to ask whether keeping the DMMS fractional committee and adding the local search phase would also work. Certainly the resulting committee will satisfy EJR+. However, in \Cref{lem:dmms_not_JR}, we give an example where there is a run in which the resulting committee may fail 1.346-\fPO{}, and indeed satisfy no better than 2-\fPO{}. Since the DMMS algorithm without local search satisfies \fPO{}, this also shows that adding a local search phase can ``damage'' \fPO{} by a factor of at least 2.

\section{Conclusion and Future Directions}

We studied the interplay between representation and efficiency in approval-based committee voting, with a particular focus on local and global variants of PAV. We used multiplicative approximate fractional Pareto optimality  ($\alpha$-fPO) as a quantitative efficiency measure, established the tight $\alpha^\star$-fPO threshold for global PAV, and showed that local PAV has weaker but still nontrivial global efficiency guarantees.
Our round-and-swap algorithm combines approximate concave optimization, pipage rounding, and PAV-improving local search. The resulting committee simultaneously satisfies \EJRplus{}, weak \PO{}, $(1+\eps)$-\PO{} for every $\eps > 0$, the $\alpha^\star$-fPO guarantee of global PAV, and the optimal polynomial-time approximation to PAV score achievable unless $\textrm{\textup{P}}=\NP{}$.

In light of the recent impossibility result of \citet{jin2026limitationsbobw} concerning the compatibility of \JR{} and \fPO{}, a natural direction is to determine the sharp threshold $\alpha$ for approximate fractional Pareto optimality: namely, the value of $\alpha > 1$ such that every approval-based multiwinner voting instance admits a committee satisfying both \JR{} and $\alpha$-\fPO{}, whereas, for every fixed $\eps>0$, some instance admits no committee satisfying both \JR{} and $(\alpha-\eps)$-\fPO{}. Our algorithm gives an upper bound of $\alpha^\star \approx 1.346$ even for \EJRplus{}, where the recent impossibility rules out $\alpha = 1$ even with the weaker requirement of \JR{}. The computational aspects of this question also remain largely unexplored. Whenever existence is guaranteed, can such a committee be computed efficiently? More broadly, do stronger representation and efficiency guarantees entail inherent computational barriers?

A complementary direction is to revisit the failed approaches in (\Cref{appendix:Limitations}). Although none can guarantee exact \JR{}+fPO in general, some may yield \JR{} or \EJRplus{} together with an $\alpha$-fPO factor on general or restricted domains, potentially improving on $\alpha^\star$ when a weaker representation axiom is used or when additional structure is present.

\section*{Usage of Artificial Intelligence}
Many of the proofs and examples in this paper were obtained by or with the help of LLMs. 
We see our own technical contributions in the questions we asked.
For example, we asked whether local PAV satisfies weak PO, a question that was prompted by our difficulty in finding counterexamples; GPT-5 provided the (in retrospect simple) proof in October 2025.
Another conceptual contribution is the multiplicative relaxation of \fPO{} and asking about the approximation factor that PAV achieves. We also obtained the method to achieve the same guarantee in polynomial time.
All of the results and techniques in \Cref{appendix:Limitations} were also obtained and designed by us (but not all of the counterexamples).
The other results were found with help from GPT-5.5 and GPT-5.6 Sol.
The writing and presentation of the results was almost entirely done by us.

\section*{Acknowledgments}
We are grateful to Jannik Peters for helpful discussions and pointers to the literature. RV thanks Harshal Singh Sindal, Aditya Prakash, and Pravar Kataria for their input during early stages of this work. Additionally, RV acknowledges support from ANRF grant no. CRG/2022/002621, DST
INSPIRE grant no.\ DST/INSPIRE/04/2020/000107, and Mr.\ D.P. Gupta Chair Professorship.
DP was funded in part by the Agence Nationale de la Recherche as part of the France 2030 program under grant ANR-23-IACL-0008 (PR[AI]RIE-PSAI).
Parts of this work benefited from work done at the Dagstuhl seminars ``Fair Division: Algorithms, Solution Concepts, and Applications'' (24401) and ``Randomized Rounding in Algorithms, Statistics, and Economics and Computation'' (26242).

\bibliographystyle{ACM-Reference-Format}
\providecommand\showeprint[2][arxiv]{%
	arXiv:\href{https://arxiv.org/abs/#2}{#2}%
}
\bibliography{References}

@misc{jin2026limitationsbobw,
    title={Limitations of Best-of-Both-Worlds Solutions in Approval-Based Multiwinner Elections}, 
    author={Jiarong Jin and Xuanxuan Liu and Biaoshuai Tao},
    year={2026},
    eprint={2608.01830},
    archivePrefix={arXiv},
    primaryClass={cs.GT},
    url={https://arxiv.org/abs/2608.01830}, 
}

@inproceedings{aziz2023bestofbothworlds,
  author = {Haris Aziz and Xinhang Lu and Mashbat Suzuki and Jeremy Vollen and Toby Walsh},
  title = {{Best-of-Both-Worlds Fairness in Committee Voting}},
  booktitle = {Proceedings of the 19th International Conference on Web and Internet Economics (WINE)},
  pages = {676},
  year = {2023},
  primaryclass = {cs.GT},
  eprint = {2303.03642},
  archiveprefix = {arXiv},
}

@article{AFSV23bobw,
  author = {Aziz, Haris and Freeman, Rupert and Shah, Nisarg and Vaish, Rohit},
  title = {Best of Both Worlds: Ex Ante and Ex Post Fairness in Resource Allocation},
  journal = {Operations Research},
  volume = {72},
  number = {4},
  pages = {1674--1688},
  year = {2023},
  doi = {10.1287/opre.2022.2432},
}

@article{CG18nashsocialwelfare,
  author = {Cole, Richard and Gkatzelis, Vasilis},
  title = {Approximating the {Nash} Social Welfare with Indivisible Items},
  journal = {SIAM Journal on Computing},
  volume = {47},
  number = {3},
  pages = {1211--1236},
  year = {2018},
  doi = {10.1137/15M1053682},
}

@inproceedings{KP25lindahl,
  author = {Christian Kroer and Dominik Peters},
  title = {Computing {L}indahl Equilibrium for Public Goods with and without Funding Caps},
  booktitle = {Proceedings of the 26th ACM Conference on Economics and Computation (EC)},
  pages = {129},
  year = {2025},
  doi = {10.1145/3736252.3742510},
  archiveprefix = {arXiv},
  eprint = {2503.16414},
}

@inproceedings{Peters25core,
  author = {Dominik Peters},
  title = {{The Core of Approval-Based Committee Elections with Few Seats}},
  booktitle = {Proceedings of the 34th International Joint Conference on Artificial Intelligence (IJCAI)},
  pages = {4014--4022},
  year = {2025},
  doi = {10.24963/ijcai.2025/447},
}

@article{EFF20pricepareto,
  author = {Edith Elkind and Angelo Fanelli and Michele Flammini},
  title = {{Price of Pareto Optimality in Hedonic Games}},
  journal = {Artificial Intelligence},
  volume = {288},
  pages = {103357},
  year = {2020},
  doi = {10.1016/j.artint.2020.103357}
}

@article{BGPSW24approvalapportionment,
  author = {Markus Brill and Paul G{\"{o}}lz and Dominik Peters and Ulrike Schmidt{-}Kraepelin and Kai Wilker},
  title = {{Approval-Based Apportionment}},
  journal = {Mathematical Programming},
  volume = {203},
  number = {1},
  pages = {77--105},
  year = {2024},
  doi = {10.1007/s10107-022-01852-1},
}

@article{YW23parameterized,
  author = {Yang, Yongjie and Wang, Jianxin},
  title = {{Parameterized Complexity of Multiwinner Determination: More Effort Towards Fixed-Parameter Tractability}},
  journal = {Autonomous Agents and Multi-Agent Systems},
  volume = {37},
  number = {2},
  pages = {28},
  year = {2023},
  publisher = {Springer},
  doi = {10.1007/s10458-023-09610-z},
}

@inproceedings{AGG+15computationalaspects,
  author = {Aziz, Haris and Gaspers, Serge and Gudmundsson, Joachim and Mackenzie, Simon and Mattei, Nicholas and Walsh, Toby},
  title = {Computational Aspects of Multi-Winner Approval
                   Voting},
  booktitle = {Proceedings of the 14th International Conference on
                   Autonomous Agents and Multiagent Systems (AAMAS)},
  pages = {107--115},
  year = {2015},
  publisher = {IFAAMAS},
  url = {https://www.ifaamas.org/Proceedings/aamas2015/aamas/p107.pdf},
}

@article{Thie1895,
  author = {Thorvald N. Thiele},
  title = {Om Flerfoldsvalg},
  journal = {Oversigt over det Kongelige Danske Videnskabernes
             Selskabs Forhandlinger},
  pages = {415--441},
  year = {1895},
  url_hidden = {https://dominik-peters.de/archive/thiele1895.pdf},
}

@article{GM24fPO,
  title={{Computing Pareto-Optimal and Almost Envy-Free Allocations of Indivisible Goods}},
  author={Garg, Jugal and Murhekar, Aniket},
  journal={Journal of Artificial Intelligence Research},
  volume={80},
  pages={1--25},
  year={2024},
  doi={10.1613/jair.1.15414}
}

@article{BMS05,
  author = {Anna Bogomolnaia and Herv\'e Moulin and Richard Stong},
  title = {{Collective Choice under Dichotomous Preferences}},
  journal = {Journal of Economic Theory},
  volume = {122},
  number = {2},
  pages = {165--184},
  year = {2005},
  doi = {10.1016/j.jet.2004.05.005},
}

@inproceedings{BKV18ef1,
  author = {Barman, Siddharth and Krishnamurthy, Sanath Kumar and Vaish, Rohit},
  title = {{Finding Fair and Efficient Allocations}},
  booktitle = {Proceedings of the 2018 ACM Conference on Economics and Computation (EC)},
  pages = {557--574},
  year = {2018},
  doi = {10.1145/3219166.3219176}
}

@article{ABC+17justified,
  title={{Justified Representation in Approval-Based Committee Voting}},
  author={Aziz, Haris and Brill, Markus and Conitzer, Vincent and Elkind, Edith and Freeman, Rupert and Walsh, Toby},
  journal={Social Choice and Welfare},
  year={2017},
  volume={48},
  number={2},
  pages={461--485},
  publisher={Springer},
  doi = {10.1007/s00355-016-1019-3},
}

@inproceedings{BP23robust,
author = {Brill, Markus and Peters, Jannik},
title = {{Robust and Verifiable Proportionality Axioms for Multiwinner Voting}},
year = {2023},
booktitle = {Proceedings of the 24th ACM Conference on Economics and Computation (EC)},
pages = {301},
publisher = {Association for Computing Machinery},
doi = {10.1145/3580507.3597785},
}

@book{LS23book,
    author = {Lackner, Martin and Skowron, Piotr},
    title = {{Multi-Winner Voting with Approval Preferences}},
    year={2023},
    publisher={Springer},
    doi = {10.1007/978-3-031-09016-5},
}

@inproceedings{SV24maximumflowfairnetwork,
    author = {Suzuki, Mashbat and Vollen, Jeremy},
    title = {{Maximum Flow is Fair: A Network Flow Approach to Committee Voting}},
    year = {2024},
    booktitle = {Proceedings of the 25th ACM Conference on Economics and Computation (EC)},
    pages = {964--983},
    doi = {10.1145/3670865.3673603},
}

@article{AM20computing,
  title={{Computing and Testing Pareto Optimal Committees}},
  author={Aziz, Haris and Monnot, J{\'e}r{\^o}me},
  journal={Autonomous Agents and Multi-Agent Systems},
  volume={34},
  pages={1--20},
  year={2020},
  publisher={Springer},
  doi={10.1007/s10458-020-09445-y}
}

@article{LS20utilitarian,
  title={{Utilitarian Welfare and Representation Guarantees of Approval-Based Multiwinner Rules}},
  author={Lackner, Martin and Skowron, Piotr},
  journal={Artificial Intelligence},
  volume={288},
  pages={103366},
  year={2020},
  publisher={Elsevier},
  doi = {10.1016/j.artint.2020.103366},
}

@inproceedings{BBP+21distribution,
  title={{Distribution Rules under Dichotomous Preferences: Two Out of Three Ain't Bad}},
  author={Brandl, Florian and Brandt, Felix and Peters, Dominik and Stricker, Christian},
  booktitle = {Proceedings of the 22nd ACM Conference on Economics and Computation (EC)},
  pages={158--179},
  year={2021},
  doi = {10.1145/3465456.3467653},
}

@inproceedings{PS20proportionality,
  title={{Proportionality and the Limits of Welfarism}},
  author={Peters, Dominik and Skowron, Piotr},
  booktitle={Proceedings of the 21st ACM Conference on Economics and Computation},
  pages={793--794},
  year={2020},
  doi = {10.1145/3391403.3399465},
  eprint = {1911.11747},
  archiveprefix = {arXiv},
}

@article{PL20spoc,
  author = {Peters, Dominik and Lackner, Martin},
  title = {Preferences Single-Peaked on a Circle},
  journal = {Journal of Artificial Intelligence Research},
  volume = {68},
  pages = {463--502},
  year = {2020},
  doi = {10.1613/jair.1.11732},
}

@inproceedings{DMM+21tight,
  title={{Tight Approximation for Proportional Approval Voting}},
  author={Dudycz, Szymon and Manurangsi, Pasin and Marcinkowski, Jan and Sornat, Krzysztof},
  booktitle={Proceedings of the 29th International Joint Conference on Artificial Intelligence (IJCAI)},
  pages={276--282},
  year={2020},
  doi={10.24963/ijcai.2020/39}
}

@inproceedings{KE24lower,
  title={{A Lower Bound for Local Search Proportional Approval Voting}},
  author={Kraiczy, Sonja and Elkind, Edith},
  booktitle = {Proceedings of the 32nd Annual European Symposium on Algorithms (ESA)},
  pages = {82:1--82:14},
  year={2024},
  doi = {10.4230/LIPIcs.ESA.2024.82},
  organization={Schloss Dagstuhl--Leibniz-Zentrum f{\"u}r Informatik}
}

@inproceedings{AEH+18complexity,
  title={{On the Complexity of Extended and Proportional Justified Representation}},
  author={Aziz, Haris and Elkind, Edith and Huang, Shenwei and Lackner, Martin and S{\'a}nchez-Fern{\'a}ndez, Luis and Skowron, Piotr},
  booktitle = {Proceedings of the 32nd AAAI Conference on Artificial Intelligence (AAAI)},
  pages = {902--909},
  year={2018},
  doi = {10.1609/aaai.v32i1.11478},
}

@misc{S26pareto,
  title={{Pareto Optimality in Approval-Based Multiwinner Voting}},
  author={Sch{\"u}nke, Joshua},
  eprint={2605.30490},
  archivePrefix={arXiv},
  primaryClass={cs.GT},
  year={2026}
}

@misc{BBF+26fractional,
  title={{Fractional Pareto-Optimality in Multiwinner Voting}},
  author={Becker, Patrick and Boehmer, Niclas and Frank, Fabian and Glessen, Lara},
  eprint={2606.11160},
  archivePrefix={arXiv},
  primaryClass={cs.GT},
  year={2026}
}

@misc{ALS+26computing,
  title={{Computing Thiele Rules on Interval Elections and Their Generalizations}},
  author={Avramidis, Dimitris and Lassota, Alexandra and Schmidt-Kraepelin, Ulrike and Vetta, Adrian},
  eprint={2605.03067},
  archivePrefix={arXiv},
  primaryClass={cs.AI},
  year={2026}
}

@misc{MS26polynomial,
  title={{Polynomial-Time Algorithm for Thiele Voting Rules with Voter Interval Preferences}},
  author={Manurangsi, Pasin and Sornat, Krzysztof},
  eprint={2604.05953},
  archivePrefix={arXiv},
  primaryClass={cs.GT},
  year={2026}
}

@article{AS04pipage,
  title={{Pipage Rounding: A New Method of Constructing Algorithms with Proven Performance Guarantee}},
  author={Ageev, Alexander A. and Sviridenko, Maxim I.},
  journal={Journal of Combinatorial Optimization},
  volume={8},
  number={3},
  pages={307--328},
  year={2004},
  publisher={Springer},
  doi={10.1023/B:JOCO.0000038913.96607.c2}
}

@article{CCP+11maximizing,
  title={{Maximizing a Monotone Submodular Function Subject to a Matroid Constraint}},
  author={Calinescu, Gruia and Chekuri, Chandra and Pal, Martin and Vondr{\'a}k, Jan},
  journal={SIAM Journal on Computing},
  volume={40},
  number={6},
  pages={1740--1766},
  year={2011},
  publisher={SIAM},
  doi={10.1137/080733991}
}

@article{SFL16finding,
  title={{Finding a Collective Set of Items: From Proportional Multirepresentation to Group Recommendation}},
  author={Skowron, Piotr and Faliszewski, Piotr and Lang, J{\'e}r{\^o}me},
  journal={Artificial Intelligence},
  volume={241},
  pages={191--216},
  year={2016},
  publisher={Elsevier},
  doi = {10.1016/j.artint.2016.09.003},
}

@inproceedings{BFF21tight,
  title={{Tight Approximation Guarantees for Concave Coverage Problems}},
  author={Barman, Siddharth and Fawzi, Omar and Ferm{\'e}, Paul},
  booktitle={Proceedings of the 38th International Symposium on Theoretical Aspects of Computer Science (STACS)},
  pages={9:1--9:17},
  year={2021},
  doi={10.4230/LIPIcs.STACS.2021.9}
}

@misc{DLMF, 
    author = {{DLMF}},
    title = {{NIST Digital Library of Mathematical Functions}},
    howpublished = {\url{https://dlmf.nist.gov/6}},
    note = {Release 1.2.7 of 2026-06-15. Chapter 6: Exponential, Logarithmic, Sine, and Cosine Integrals},
    editor = {Olver, F. W. J. and Olde Daalhuis, A. B. and Lozier, D. W. and Schneider, B. I. and Boisvert, R. F. and Clark, C. W. and Miller, B. R. and Saunders, B. V. and Cohl, H. S. and McClain, M. A.},
    year={2026}
}

@inproceedings{ILW+17approximate,
  title={{Approximate Efficiency in Matching Markets}},
  author={Immorlica, Nicole and Lucier, Brendan and Weyl, Glen and Mollner, Joshua},
  booktitle={Proceedings of the 13th International Conference on Web and Internet Economics (WINE)},
  pages={252--265},
  year={2017},
  organization={Springer},
  doi={10.1007/978-3-319-71924-5_18}
}

@inproceedings{PY00approximability,
  title={{On the Approximability of Trade-Offs and Optimal Access of Web Sources}},
  author={Papadimitriou, Christos H and Yannakakis, Mihalis},
  booktitle={Proceedings of the 41st Annual Symposium on Foundations of Computer Science},
  pages={86--92},
  year={2000},
  organization={IEEE},
  doi={10.1109/SFCS.2000.892068}
}

@inproceedings{M25polynomial,
  title={{A Polynomial-Time Algorithm for Fair and Efficient Allocation with a Fixed Number of Agents}},
  author={Mahara, Ryoga},
  booktitle={Proceedings of the 21st International Conference on Web and Internet Economics},
  pages={413--430},
  year={2025},
  organization={Springer},
  doi={10.1007/978-3-032-18660-7_22}
}

@article{ALM+19strategyproof,
  author  = {Aziz, Haris and Lev, Omer and Mattei, Nicholas and Rosenschein, Jeffrey S. and Walsh, Toby},
  title   = {{Strategyproof Peer Selection Using Randomization, Partitioning, and Apportionment}},
  journal = {Artificial Intelligence},
  volume  = {275},
  pages   = {295--309},
  year    = {2019},
  doi     = {10.1016/j.artint.2019.06.004}
}

\clearpage

\appendix
\addtocontents{toc}{\protect\setcounter{tocdepth}{1}}

\begin{center}
    \Large{Appendix}    
\end{center}

\section{Omitted Material from Section~\ref{sec:Preliminaries}}
\label{appendix:Omitted_Preliminaries}

\subsection{Connecting $h(\cdot)$ and Harmonic Numbers (Lemmas~\ref{lem:Poisson-smoothening},~\ref{lem:h_Lower_and_Upper_Bound_Harmonics}, and~\ref{lem:Multiplicative-Shift-ineq})}
\label{appendix:h_and_harmonics}

\PoissonSmoothening*
\begin{proof}
We know that, for every integer $s\geq 0$ and every $t \in [0,1)$,
$$\frac{1-t^s}{1-t}=1+t+\cdots+t^{s-1}.$$ 
By integrating both sides over $[0,1)$, we get that 
$$H_s = \int_0^1 \frac{1-t^s}{1-t}\,dt.$$
If $Z\sim\operatorname{Poisson}(\mu)$,
then $\E[t^Z]=e^{-\mu(1-t)}$, and hence
\[
    \E[H_Z]
    =\int_0^1 \frac{1-\E[t^Z]}{1-t}\,dt
    =\int_0^1 \frac{1-e^{-\mu(1-t)}}{1-t}\,dt
    =\int_0^\mu \frac{1-e^{-y}}{y}\,dy
    =h(\mu). \qedhere
\]
\end{proof}

\hLowerAndUpperBoundHarmonics*
\begin{proof}
We will start by showing that for every integer $t \geq 1$, $h(t) < H_t$. Let $X \sim \Poi(t)$ be a Poisson random variable with mean $t$. From \Cref{lem:Poisson-smoothening}, we know that $h(t) = \mathbb{E}[H_{X}]$.

Let $\overline H:[0,\infty)\to\mathbb{R}$ be the piecewise-linear interpolation of the harmonic numbers, i.e., $\overline H(m)=H_m$ for every integer $m\ge 0$, and $\overline H$ is linear on each interval $[m,m+1]$. The slope of $\overline H$ on $[m,m+1]$ is $1/(m+1)$, so these slopes are nonincreasing. Thus, $\overline H$ is concave.

Since the right slope of $\overline H$ at $t$ is $1/(t+1)$, concavity gives
\[
\overline H(x)\le H_t+\frac{x-t}{t+1}
\]
for all $x\ge 0$. The inequality is strict whenever $x<t$, because all slopes of $\overline H$ on intervals to the left of $t$ are strictly larger than $1/(t+1)$. Since $\mathbb{P}[X=0]=e^{-t}>0$ and $0<t$, taking expectations gives
\[
\mathbb{E}[H_X]
=\mathbb{E}[\overline H(X)]
< H_t+\frac{\mathbb{E}[X]-t}{t+1}
=H_t.
\]
Using $h(t)=\mathbb{E}[H_X]$, we conclude that $h(t)<H_t$.

We will now show that for every integer $t \geq 1$, $h(t) \geq \gamma^\star H_t$. For $t\ge 1$, write
\[
  h(t)=\sum_{r=1}^t\int_{r-1}^{r}\frac{1-e^{-y}}{y}\,dy.
\]
The first interval contributes exactly $\gamma^\star$. The function
\[
  \phi(y)\coloneqq\frac{1-e^{-y}}{y},\qquad \phi(0)\coloneqq1,
\]
is decreasing on $[0,\infty)$. Thus, for every $r\ge 1$,
\[
  \int_{r-1}^{r}\frac{1-e^{-y}}{y}\,dy
  \ge \frac{1-e^{-r}}{r}.
\]
For the summands beyond the first one, i.e., for $r\ge 2$, we have
\[
  1-e^{-r}\ge 1-e^{-2}>\gamma^\star.
\]
Hence, for every $r\ge 2$,
\[
  \int_{r-1}^{r}\frac{1-e^{-y}}{y}\,dy
  \ge \frac{\gamma^\star}{r}.
\]
Combining this with the exact first-interval contribution $\gamma^\star=\gamma^\star/1$, and summing over $r \in \{1,\dots,t\}$, gives $h(t)\ge \gamma^\star H_t$.
\end{proof}

\MultiplicativeShiftLemma*

\begin{proof}
For $t=0$, both sides are zero, and for $t=1$, the claim is exactly $h(\alpha^\star) = 1 = H_1$, which holds by definition of $\alpha^\star$. Therefore, for the remainder of the proof, we will assume that $t \geq 2$. 

First, observe that $\alpha^\star > 4/3$. This is because the alternating Taylor expansion of $e^{-y}$ gives
\[\frac{1-e^{-y}}{y} < 1 - \frac{y}{2} + \frac{y^2}{6} - \frac{y^3}{24} + \frac{y^4}{120}, \qquad y > 0.\]
Hence,
\[ h \! \left(\frac{4}{3} \right) = \int_{0}^{4/3} \frac{1-e^{-y}}{y} < \frac{4}{3} - \frac{4}{9} +\frac{32}{243} - \frac{8}{243} + \frac{128}{18225} < 1.\]
Since $h$ is strictly increasing, the solution of $h(z) = 1$ must satisfy
$z > 4/3$.

For every integer $r\ge 1$,
\begin{align}
  h(\alpha^\star(r+1)) - h(\alpha^\star r) & = \int_r^{r+1} \frac{1-e^{-\alpha^\star s}}{s} \, ds \nonumber \\
  & \geq \int_r^{r+1} \frac{1-e^{-4s/3}}{s} \, ds \nonumber \\
  & \geq (1-e^{-4r/3}) \ln \left( 1+\frac{1}{r} \right).
  \label{temp:lower-bound}
\end{align}

We will now show that for every integer $r \geq 1$,
\begin{equation}
    h(\alpha^\star(r+1)) - h(\alpha^\star r) \geq \frac{1}{r+1}.
    \label{eqn:temp-telescoping}
\end{equation}
Note that by summing the inequalities in \Cref{eqn:temp-telescoping} from $r=1$ to $t-1$, and using $h(\alpha^\star)=1$, we will obtain the desired guarantee, that is,
\[h(\alpha^\star t) \geq 1 + \sum_{r=1}^{t-1}\frac1{r+1} = H_t.\]

To prove \Cref{eqn:temp-telescoping}, first observe that the right-hand side in \Cref{temp:lower-bound} for $r=1$ is at least $1/2$. Indeed, the elementary estimates $e^{-4/3}<11/40$ and $\ln 2>69/100$ give $(1-e^{-4/3}) \ln 2 > (29/40) (69/100) > 1/2$. For $r \geq 2$, we will use
\[\ln \left( 1 + \frac{1}{r} \right) \geq \frac{2}{2r+1}.\]
Therefore, it suffices to verify
\[(1-e^{-4r/3}) \frac{2}{2r+1} \geq \frac1{r+1},\]
which is equivalent to
$e^{4r/3} \geq 2r+2$.
This holds for all $r \geq 2$, since it holds at $r=2$ and the derivative of $e^{4r/3}-2r-2$ is positive for $r \geq 2$. This proves the lemma.
\end{proof}

\section{Properties of Local PAV (Theorem~\ref{thm:local_PAV_properties})}
\label{appendix:Proof_local_PAV_properties}

In this section, we will prove \Cref{thm:local_PAV_properties}, which states that the properties of local PAV rules outlined in \Cref{tab:Local-PAV-Results} hold.

We begin by proving the helper lemma about harmonic numbers that we used in the proof of the $2/3$-PAV approximation guarantee in \Cref{thm:local_PAV_properties} in the main body.
\HarmonicHelperLemma*

\begin{proof}
    If $u = 0$ (and thus $y = 0$), the statement follows trivially since $H(u + x - y) = H(x) \le x$ and $\Gamma = rx$. So, we assume $u > 0$. We get via simple algebraic manipulation that:
\[\Gamma = \frac{r(x - y)}{u + 1} - \frac{y(r - x)}{u(u+1)} = \frac{r(x - y)}{u} - \frac{x(r - y)}{u (u + 1)}.\]
If $x \geq y$, we have:
\begin{align*}
    H(u + x - y) &= H(u) + \sum_{t=u + 1}^{u + x - y} \frac{1}{t}\\
    &\leq H(u) + \frac{x - y}{u + 1}\\
    &= H(u) + \frac{y(r - x)}{r \cdot u(u + 1)} + \frac{\Gamma}{r}\\
    &\leq H(u) + H(u) \frac{r - 1}{2r} + \frac{\Gamma}{r}\\
    &= \frac{3r-1}{2r} H(u) + \frac{\Gamma}{r},
\end{align*}
where the second inequality follows trivially if $x = 0$ (as then $y \leq x = 0$). Otherwise, it follows since $x \geq 1, y \leq u, u \ge 1$ and $H(u) \geq 1$.

If $x < y$, we have:
\begin{align*}
    H(u + x - y) &= H(u) - \sum_{t=u+x-y + 1}^{u} \frac{1}{t}\\
    &\leq H(u) - \frac{y - x}{u}\\
    &= H(u) + \frac{x(r - y)}{r \cdot u(u + 1)} + \frac{\Gamma}{r}\\
    &\leq H(u) + H(u) \frac{r-1}{2r} + \frac{\Gamma}{r}\\
    &= \frac{3r-1}{2r} H(u) + \frac{\Gamma}{r},
\end{align*}
where the second inequality follows similarly to the previous case.
\end{proof}

\LocalPAVProperties*
\begin{proof}
We prove the claims in the order in which they appear in the table, skipping the ones already proved in the main body.

\medskip
\noindent\emph{Pareto optimality.}
For every $\tau \geq 0$, we will show that some \localPAV{$\tau$} committee fails \PO{}. 

Let $k \geq 2$. The set of candidates includes $k$ baseline candidates $w_1,\dots,w_k$ and $k$ auxiliary candidates $a_1,\dots,a_k$. There are two types of voters: (a) for every $j,s \in [k]$, there are $M$ voters approving $\{w_j,a_s\}$, where $M \geq 2$, and (b) for every $j\in[k]$ and every 2-element subset $T \subseteq \{a_1,\dots,a_k\}$, there is one voter approving $\{w_j\}\cup T$.

Let
$W \coloneqq \{w_1,\dots,w_k\}$ and $W' \coloneqq \{a_1,\dots,a_k\}$.
Under $W$, every voter has utility $1$. Under $W'$, voters of the first type still have utility $1$, while voters of the second type have utility $2$. Hence, $W'$ Pareto dominates $W$, so $W$ is not Pareto optimal.

Now consider any swap replacing $w_j$ by $a_t$ in $W$. Voters of the first type contribute
\[
-M(k-1)+\frac{M(k-1)}{2}=-\frac{M(k-1)}{2}.
\]
Indeed, the voters approving $\{w_j,a_s\}$ with $s\ne t$ lose one unit of utility, while the voters approving $\{w_{j'},a_t\}$ with $j'\ne j$ gain from utility $1$ to utility $2$. For voters of the second type, the swap loses one approved candidate for the $\binom{k-1}{2}$ voters in the removed baseline group whose auxiliary pair excludes $a_t$, but gives a $1/2$ marginal PAV gain to $(k-1)^2$ voters in the other groups whose auxiliary pair contains $a_t$, for a net contribution of $-\binom{k-1}{2}+(k-1)^2/2=(k-1)/2$. Thus, the total swap gain is
\[
-\frac{M(k-1)}{2}+\frac{k-1}{2}
=
-\frac{(M-1)(k-1)}{2}<0.
\]
Therefore, $W$ is exact local PAV, yet it is Pareto dominated.

\medskip
\noindent\emph{Weak Pareto optimality.}
We will now argue that the bound on $\tau$ is tight. Consider an instance with $k$ baseline candidates $w_1,\dots,w_k$ and $k$ auxiliary candidates $a_1,\dots,a_k$. For every $j\in[k]$ and every 2-element subset $T\subseteq\{a_1,\dots,a_k\}$, create one voter approving $\{w_j\}\cup T$. Let $W\coloneqq\{w_1,\dots,w_k\}$. Then $n=k\binom{k}{2}$, every voter has utility $1$ under $W$, and the auxiliary committee $W'\coloneqq\{a_1,\dots,a_k\}$ gives every voter utility $2$. Thus, $W$ is not weakly Pareto optimal. For a swap removing $w_j$ and adding $a_t$, the gain in the PAV score is
\[
-\binom{k-1}{2}+\frac{(k-1)^2}{2}
=
\frac{k-1}{2}
=
\frac{n}{k^2}.
\]
Therefore, $W$ is $\tau$-local PAV for every $\tau>n/k^2$, while it is not weakly Pareto optimal.

\medskip
\noindent\emph{\EJRplus{}.}
Suppose $W$ violates \EJRplus{}. Then there exist an unelected candidate $c\notin W$, an integer $\ell\in[k]$, and a group $N'\subseteq N$ such that $|N'| \geq \ell\frac{n}{k}$, every voter in $N'$ approves $c$, and $u_i(W)< \ell$ for every $i \in N'$. Thus, $u_i(W)\le \ell-1$ for every $i\in N'$. 

Consider the $k$ swaps $W \setminus \{a\} \cup \{c\}$, one for each $a\in W$. We will lower-bound the total gain in the PAV score due to these $k$ swaps.

A voter $i\in N'$ contributes at least $(k-\ell+1)/\ell$ over these $k$ swaps. Indeed, if $a\notin A_i$, then the swap increases its utility by one and gives PAV gain $1/(u_i(W)+1)\ge 1/\ell$; this happens for at least $k-(\ell-1)=k-\ell+1$ choices of $a$. Voters who approve $c$ but are not in $N'$ contribute nonnegatively. A voter who does not approve $c$ contributes at least $-1$ in total over the $k$ swaps, because if it has $t>0$ approved candidates in $W$, then removing each of those $t$ candidates causes a loss of $1/t$, and the total loss is exactly $1$. Therefore,
\[
\begin{aligned}
\sum_{a\in W}\bigl(\PAV(W \setminus \{a\} \cup \{c\})-\PAV(W)\bigr)
&\ge |N'|\cdot \frac{k-\ell+1}{\ell}-(n-|N'|) \\
&= |N'|\cdot \frac{k+1}{\ell}-n \\
&\ge \ell\frac{n}{k}\cdot \frac{k+1}{\ell}-n \\
&= \frac{n}{k}.
\end{aligned}
\]
Hence, some swap has gain at least $n/k^2$. Thus, if $\tau\le n/k^2$, this contradicts $\tau$-locality. Therefore, every $\tau$-local PAV committee satisfies \EJRplus{} whenever $\tau\le n/k^2$.

The tightness example is the same as the one for weak \PO{}. In that instance, every auxiliary candidate $a_t$ is approved by $k(k-1)=2n/k$ voters, all of whom have only one approved winner in $W = \{w_1,\dots,w_k\}$. Hence, $W$ violates \EJRplus{} with $\ell=2$. Since every swap gain is exactly $n/k^2$, $W$ is $\tau$-local PAV for every $\tau>n/k^2$.

\medskip
\noindent\emph{$(1+\varepsilon)$-Pareto optimality.}
Let $W$ be a $\tau$-local PAV committee with $\tau\le 1/k^2$. Suppose, for contradiction, that for some $\varepsilon>0$ there is an integral committee $W'$ such that $u_i(W')\ge (1+\varepsilon)u_i(W)$ for every voter $i$, with strict inequality for at least one voter. As before, define $X \coloneqq W'\setminus W$, $Y \coloneqq W\setminus W'$, and $r \coloneqq |X|=|Y|$. Additionally, for any fixed voter $i$, let $u_i\coloneqq |A_i\cap W|$, $x_i\coloneqq |X\cap A_i|$, and $y_i\coloneqq |Y\cap A_i|$.

For every voter with $u_i(W)\ge 1$, integrality implies $x_i\ge y_i+1$, and the same calculation as in the weak \PO{} proof gives $\Gamma_i\ge r/k$. Voters with $u_i(W)=0$ contribute nonnegatively; if such a voter is strictly improved, then it contributes at least $r$. Therefore, the total swap-gain sum is at least $r/k$. On the other hand, $\tau$-locality gives
\[
\sum_{y\in Y}\sum_{x\in X}\Delta_{y,x}<r^2\tau\le \frac{r^2}{k^2}\le \frac{r}{k},
\]
a contradiction. Hence, $W$ is $(1+\varepsilon)$-Pareto optimal for every $\varepsilon>0$.

The following example shows that one cannot extend this guarantee beyond $\tau>1/k$ in general. Let $k\ge 2$. There are candidates $w_1,\dots,w_k$ and $a_1,\dots,a_k$. For each $j\in[k]$, create one voter approving all candidates in $
\{w_1,\dots,w_k\}\setminus\{w_j\}$ and all candidates in $\{a_1,\dots,a_k\}$.

Let $W\coloneqq\{w_1,\dots,w_k\}$ and $W'\coloneqq\{a_1,\dots,a_k\}$. Every voter gets utility $k-1$ from $W$ and utility $k$ from $W'$. Hence, $W'$ $(1+\varepsilon)$-Pareto dominates $W$ for any $0<\varepsilon<1/(k-1)$. For a swap removing $w_j$ and adding $a_t$, only voter $j$ changes utility, increasing from $k-1$ to $k$, so the swap gain is $1/k$. Thus, $W$ is $\tau$-local PAV for every $\tau>1/k$, but it is not $(1+\varepsilon)$-Pareto optimal for $0<\varepsilon<1/(k-1)$.

\par\medskip
\noindent\emph{$2$-\fPO{} counterexamples.}
The factor $2$ can be shown to be asymptotically tight already for exact local PAV due to the following construction: Fix $k\ge 2$. Consider an instance with $k$ baseline candidates $w_1,\dots,w_k$ and $k+1$ auxiliary candidates $a_1,\dots,a_{k+1}$. For every $j\in[k]$ and every 2-element subset $T\subseteq\{a_1,\dots,a_{k+1}\}$, create one voter approving $\{w_j\}\cup T$. Let $W\coloneqq\{w_1,\dots,w_k\}$. Every voter gets utility $1$ from $W$. A swap that removes $w_j$ and adds an auxiliary candidate $a_t$ causes a loss of $\binom{k}{2}$ from group $j$ and a gain of $(k-1)k/2$ from the other groups, so the total swap gain is $0$. Thus, $W$ is exact local PAV. Assigning fractional mass $k/(k+1)$ to each auxiliary candidate gives every voter utility $2k/(k+1)$, while $W$ gives every voter utility $1$. Hence, $W$ is fractionally $\alpha$-Pareto dominated for every $\alpha<2k/(k+1)$, and this tends to $2$.

We will now show that whenever $\tau>1/\lceil k/2\rceil$, some \localPAV{$\tau$} committee may fail to satisfy $2$-\fPO{}. Let $s\coloneqq \left\lfloor\frac{k-1}{2}\right\rfloor$. If $s=0$, consider an instance with a single voter who approves the candidates $a_1,\dots,a_k$, and let $W$ consist only of $k$ unapproved dummy candidates. Every positive swap gain is exactly $1$, so $W$ is $\tau$-local PAV for every $\tau>1=1/\lceil k/2\rceil$, but $W$ is not $2$-\fPO{} because the voter has utility $0$ under $W$ and positive utility under $\{a_1,\dots,a_k\}$.

Now suppose $s\ge 1$. Again consider an instance with a single voter who approves $a_1,\dots,a_k$. Let $W$ contain $s$ approved candidates and $k-s$ unapproved dummy candidates. Any swap that removes a dummy and adds an approved candidate increases the PAV score by $1/(s+1)$, and all other swaps have gain at most this. Hence, $W$ is $\tau$-local PAV for every $\tau>\frac{1}{s+1}=\frac{1}{\lceil k/2\rceil}$.

The committee $W'\coloneqq\{a_1,\dots,a_k\}$ gives the voter utility $k$, while $W$ gives utility $s$. Since $k>2s$, the committee $W'$ (fractionally) $2$-Pareto dominates $W$. Thus, $W$ fails $2$-\fPO{}.

\par\medskip
\noindent\emph{$2/3$-PAV approximation counterexamples.}
The factor $2/3$ is asymptotically tight already for exact local PAV due to the following construction: There are $k$ baseline candidates $w_1,\dots,w_k$ and $k$ auxiliary candidates $a_1,\dots,a_k$. For every pair $(j,t)\in[k]\times[k]$, create two voters who both approve $\{w_j,a_t\}$. Additionally, for every $t\in[k]$, create $k-1$ voters approving exactly $\{a_t\}$. Let $W\coloneqq\{w_1,\dots,w_k\}$. Then $\PAV(W)=2k^2$. A swap replacing $w_j$ by $a_t$ has zero gain: the loss $2(k-1)$ is offset by a gain $k-1$ from pair voters in other groups and a gain $k-1$ from voters approving only $a_t$. Thus, $W$ is exact local PAV. The auxiliary committee $O\coloneqq\{a_1,\dots,a_k\}$ has score
\[
\PAV(O)=2k^2+k(k-1)=3k^2-k.
\]
Hence,
\[
\frac{\PAV(W)}{\PAV(O)}=\frac{2k}{3k-1}\to \frac{2}{3}.
\]
Thus, asymptotically, no multiplicative factor larger than $2/3$ is guaranteed, even for exact local PAV.

Let $\sigma_k \coloneqq \max\left\{s\in\{0,1,\dots,k-1\}: H_s < \frac{2}{3}H_k\right\}$. We will now show that if $\tau > 1/(\sigma_k+1)$, some \localPAV{$\tau$} committee may fail to provide a $2/3$-approximation to the optimal PAV score.

Let $s=\sigma_k$, so $H_s<(2/3)H_k$. Consider a one-voter instance in which the voter approves $a_1,\dots,a_k$. Let $W$ contain $s$ approved candidates and $k-s$ unapproved dummy candidates. Any swap that removes a dummy and adds an approved candidate increases the PAV score by $1/(s+1)$, and all other swaps have gain at most this. Thus, $W$ is $\tau$-local PAV for every $\tau>1/(s+1)=1/(\sigma_k+1)$. However, $\PAV(W) = H_s$, which is strictly less than $2/3$ of the maximum possible PAV score of $H_k$, achieved by the committee $a_1,\dots,a_k$. This completes the proof of \Cref{thm:local_PAV_properties}.
\end{proof}

\subsection{Justifying Parameter-Dependent Bounds}
\label{appendix:Justifying_Instance-Dependent_Bounds}

In \Cref{rem:Instance-Dependence-Local-PAV} regarding \localPAV{$\tau$}, we stated that parameter-independent bounds on $\tau$ are often overly pessimistic. We will provide justification for this statement in the proposition below.

\begin{proposition}
Fix any constant $\tau>0$. If the committee size $k$ is allowed to grow, then \localPAV{$\tau$} admits no $k$-independent approximation factor for fractional Pareto optimality, and admits no positive $k$-independent multiplicative approximation to the optimal PAV score.
\label{prop:One_Voter_Example_Instance-Dependent_Justification}
\end{proposition}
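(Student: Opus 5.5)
The plan is to reuse the one-voter construction from the counterexamples in the proof of \Cref{thm:local_PAV_properties}, now choosing the number of approved winners as a function of $\tau$ rather than of $k$. Fix $\tau>0$ and let $s\coloneqq \lceil 1/\tau\rceil$, a constant independent of $k$. For any $k\ge s+1$, take a single voter approving $a_1,\dots,a_k$, together with at least $k$ unapproved dummy candidates. Let $W$ consist of $s$ approved candidates and $k-s$ dummies.

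The first step is to verify that $W$ is a \localPAV{$\tau$} committee. A swap replacing a dummy by an approved candidate raises the voter's utility from $s$ to $s+1$, so it gains $1/(s+1)<\tau$. Every other swap has gain $0$ or negative: approved for approved leaves the utility unchanged, dummy for dummy likewise, and approved out with dummy in lowers it. Hence no swap improves the score by at least $\tau$. When $s=0$, i.e.\ $\tau>1$, the gain is $1<\tau$ and the same argument applies.

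The second step handles fractional Pareto optimality. The committee $\{a_1,\dots,a_k\}$ gives the voter utility $k$, whereas $W$ gives $s$. So for any constant $\alpha$, once $k>\alpha s$, this committee (in particular a fractional one) $\alpha$-dominates $W$. Hence no $\alpha$ independent of $k$ works. If $s=0$, any positive utility dominates, so the conclusion holds trivially.

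The third step handles the PAV score. We have $\PAV(W)=H_s$ while $\OPT=H_k$. For any constant $\gamma>0$, since $H_k\to\infty$, we get $H_s<\gamma H_k$ for large $k$, so no positive $k$-independent factor is guaranteed.

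There is no real obstacle. The only care needed is to ensure that all swap gains are strictly below $\tau$, which requires the choice $s\ge 1/\tau$, and to handle the degenerate case $s=0$.
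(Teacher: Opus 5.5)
Your proposal is correct and follows the paper's proof essentially verbatim: the same one-voter instance with $s=\lceil 1/\tau\rceil$ approved winners and $k-s$ dummies, the same swap analysis giving a maximum gain of $1/(s+1)<\tau$, and the same $k>\alpha s$ and $H_s/H_k\to 0$ conclusions. One small correction: since $\tau>0$ we always have $\lceil 1/\tau\rceil\ge 1$ (for $\tau>1$ you get $s=1$, not $s=0$), so the degenerate case $s=0$ never arises and that part of your discussion can simply be dropped.
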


\begin{proof}
Let $s\coloneqq \left\lceil \frac{1}{\tau}\right\rceil$. Then $s$ is a constant depending only on $\tau$, and $\frac{1}{s+1}<\tau$. Consider any committee size $k>s$. There is a single voter. The candidate set consists of approved candidates $a_1,\ldots,a_k$ and dummy candidates $d_1,\ldots,d_k$. The voter approves exactly the candidates $a_1,\ldots,a_k$ and approves no dummy candidate. Consider the committee
\[
W\coloneqq \{a_1,\ldots,a_s\}\cup \{d_1,\ldots,d_{k-s}\}.
\]
Thus $|W|=k$ and the voter's utility under $W$ is $s$.

We first verify that $W$ is \localPAV{$\tau$}. Consider any swap that removes a candidate from $W$ and adds a candidate outside $W$. Since there is only one voter, the PAV score is simply $H_u$, where $u$ is the number of approved candidates in the committee. If the swap removes a dummy and adds an unselected approved candidate, the utility increases from $s$ to $s+1$, so the PAV gain is
\[
H_{s+1}-H_s=\frac{1}{s+1}<\tau.
\]
If the swap removes an approved candidate and adds another approved candidate, or removes a dummy and adds another dummy, the utility is unchanged and the gain is $0<\tau$. Finally, if the swap removes an approved candidate and adds a dummy, the utility decreases from $s$ to $s-1$, so the gain is negative. Hence every single-swap PAV gain is strictly smaller than $\tau$, and $W$ is \localPAV{$\tau$}.

Now consider the all-approved committee $O\coloneqq \{a_1,\ldots,a_k\}$. This committee gives the voter utility $k$. Hence, for any fixed $\alpha$, if $k>\alpha s$, then $O$ itself, viewed as a fractional committee, gives utility $k>\alpha s=\alpha\,u(W)$. Thus, $W$ is $\alpha$-fractionally Pareto dominated. Since $s$ is fixed while $k$ can be arbitrarily large, no finite $k$-independent $\fPO$ approximation factor can be guaranteed by \localPAV{$\tau$}.

The same instance also rules out any positive $k$-independent PAV-score approximation. The PAV score of $W$ is $\PAV(W)=H_s$, whereas the optimal PAV score is achieved by $O$ and equals $\PAV(O)=H_k$. Therefore, the approximation ratio of $W$ is
\[
\frac{\PAV(W)}{\PAV(O)}=\frac{H_s}{H_k}.
\]
Since $s$ is fixed as a function of $\tau$ and $H_k\to\infty$ as $k\to\infty$, this ratio tends to $0$. Consequently, for every constant $\rho>0$, choosing $k$ sufficiently large gives a \localPAV{$\tau$} committee with
\[
\frac{\PAV(W)}{\max_{|S|=k}\PAV(S)}<\rho.
\]
Thus, no positive $k$-independent multiplicative approximation to the optimal PAV score can be guaranteed.
\end{proof}

\section{Proof of Asymptotic Tightness of $\alpha^\star$-\fPO{} for PAV}
\label{appendix:PAV_fPO_LowerBound}

We will show that the $\alpha^\star$-\fPO{} guarantee for PAV is asymptotically tight. Towards proving this, we will first establish a useful inequality in \Cref{lem:Calculus_Inequality}.

\begin{lemma}
Let $\phi(z)\coloneqq \frac{1-e^{-z}}{z}$, with $\phi(0)=1$. For every $0<\gamma<\alpha^\star$ and every
$\lambda\in(0,1]$,
\[
    g_\gamma(\lambda)
    \coloneqq h(\gamma\lambda)+(1-\lambda)\phi(\gamma\lambda)
    <1.
\]
\label{lem:Calculus_Inequality}
\end{lemma}
\begin{proof}
We will first prove the weak inequality at $\gamma=\alpha^\star$. Let $a=\alpha^\star$ and,
for $\mu\in[0,a]$, define
\[
    q(\mu)\coloneqq 1-h(\mu)-\left(1-\frac{\mu}{a}\right)\phi(\mu).
\]
We show $q(\mu)\geq 0$. Clearly $q(0)=q(a)=0$. For $\mu\in(0,a)$, direct
differentiation gives
\[
    q'(\mu)
    =\frac{\mu^2e^{-\mu}-a(\mu-1+e^{-\mu})}{a\mu^2}.
\]
Since $\mu-1+e^{-\mu}>0$ for $\mu>0$, the sign of $q'(\mu)$ is the sign of
$R(\mu)-a$, where
\[
    R(\mu)\coloneqq \frac{\mu^2e^{-\mu}}{\mu-1+e^{-\mu}}.
\]
The function $R$ is strictly decreasing on $(0,\infty)$, because
\[
    R'(\mu)
    =-\frac{\mu\left(e^\mu(\mu^2-2\mu+2)-2\right)}{
      \left(\mu e^\mu-e^\mu+1\right)^2}<0;
\]
the numerator is positive for $\mu>0$ since
$e^\mu(\mu^2-2\mu+2)-2$ has derivative $e^\mu\mu^2>0$ and value $0$ at
$\mu=0$. Moreover, $\lim_{\mu\downarrow 0}R(\mu)=2>a$, while $R(a)<a$ because
$a>1$, $R(1)=1$, and $R$ is decreasing. Hence $q'$ is positive up to one point and
negative after that point. Since $q(0)=q(a)=0$, this implies $q(\mu)\geq 0$
for all $\mu\in[0,a]$. Equivalently,
\[
    h(\mu)+\left(1-\frac{\mu}{a}\right)\phi(\mu)\leq 1
    \qquad\text{for every } \mu\in[0,a].
\]
Now fix $\gamma<a$ and $\lambda\in(0,1]$, and set $\mu=\gamma\lambda$. Then
$\lambda=\mu/\gamma>\mu/a$, and $\phi(\mu)>0$, so
\[
    g_\gamma(\lambda)
    =h(\mu)+(1-\lambda)\phi(\mu)
    < h(\mu)+\left(1-\frac{\mu}{a}\right)\phi(\mu)
    \leq 1.\qedhere
\]
\end{proof}

We will now establish the tightness guarantee in \Cref{thm:global_PAV_properties}. Specifically, we will construct, for every $\alpha<\alpha^\star$, an instance where a PAV committee is fractionally $\alpha$-Pareto dominated.

\begin{proof} (Tightness for \Cref{thm:global_PAV_properties})
Fix $\alpha<\alpha^\star$ and choose a rational number $\gamma$ such that $\alpha<\gamma<\alpha^\star$. Write $\gamma=d/c$ for positive integers $c,d$. By replacing $c,d$ by a sufficiently large common multiple, we may assume that they are as large as needed.

For sufficiently large $k$, construct the following instance. There are $k$ baseline candidates $w_1,\dots,w_k$ and $ck$ auxiliary candidates $a_1,\dots,a_{ck}$. For every $j\in[k]$ and every $d$-element subset $T\subseteq\{a_1,\dots,a_{ck}\}$, create one voter approving exactly $\{w_j\}\cup T$. 

Let $W\coloneqq \{w_1,\dots,w_k\}$. Every voter gets utility $1$ from $W$, so $\PAV(W)=n$. Now define a fractional committee $x$ by assigning mass $1/c$ to each auxiliary candidate and mass $0$ to each baseline candidate. This is feasible because there are $ck$ auxiliary candidates. Every voter approves exactly $d$ auxiliary candidates, so every voter receives fractional utility
\[
u_i(x)=\frac{d}{c}=\gamma>\alpha.
\]
Thus, $x$ fractionally $\alpha$-Pareto dominates $W$. It remains to show that, for the parameters chosen above, $W$ is a global PAV committee.

Consider an arbitrary committee $S$ of size $k$. Suppose $S$ contains $\ell\ge 1$ auxiliary candidates, and write $\lambda\coloneqq \frac{\ell}{k}$. Then $S$ contains $k-\ell$ baseline candidates. By symmetry, the normalized PAV score of $S$ depends only on $\ell$; here, the normalized score of $S$ refers to its PAV score as a fraction of that of a PAV committee. 

Let $Y\sim \operatorname{Hypergeom}(ck,\ell,d)$, namely, $Y$ is the number of chosen auxiliary candidates contained in a uniformly random $d$-subset of the $ck$ auxiliary candidates. A $1-\lambda$ fraction of the baseline voter groups keep their baseline candidate and a $\lambda$ fraction lose it. Therefore,
\[
\frac{\PAV(S)}{n}
=
\varphi_{k,c,d}(\lambda)
\coloneqq
(1-\lambda)\mathbb{E}[H_{1+Y}]
+
\lambda\mathbb{E}[H_Y].
\]

We claim that $\varphi_{k,c,d}(\lambda)<1$ for every $\lambda\in\{1/k,2/k,\dots,1\}$, once $c,d$ and then $k$ are chosen sufficiently large.

First, let's consider small values of $\lambda$. Since $\mathbb{E}[Y]=\gamma\lambda$, and since $H_y\le y$ and $H_{1+y}\le 1+y/2$ for every integer $y\ge 0$,
\[
\begin{aligned}
\varphi_{k,c,d}(\lambda)
&\le (1-\lambda)\left(1+\frac{\gamma\lambda}{2}\right)+\lambda\gamma\lambda \\
&=1-\left(1-\frac{\gamma}{2}\right)\lambda+\frac{\gamma}{2}\lambda^2.
\end{aligned}
\]
Since $\gamma<\alpha^\star<2$, there is $\delta>0$ such that $\varphi_{k,c,d}(\lambda)<1$ for all $0<\lambda\le\delta$, uniformly over $k,c,d$ with $d/c=\gamma$.

It remains to handle $\lambda\in[\delta,1]$. For fixed $c,d$, as $k\to\infty$, the hypergeometric random variable $Y$ converges uniformly in $\lambda\in[\delta,1]$ to $Z\sim \operatorname{Bin}\left(d,\frac{\lambda}{c}\right)$. Then, as $c,d\to\infty$ while keeping $d/c=\gamma$, this binomial random variable converges uniformly in $\lambda\in[\delta,1]$ to $P_\lambda\sim \operatorname{Pois}(\gamma\lambda)$. For $t=\gamma\lambda$, $\mathbb{E}[H_{P_\lambda}]=h(t)$, and
$\mathbb{E}[H_{1+P_\lambda}]
= h(t)+\mathbb{E}\left[\frac{1}{1+P_\lambda}\right]
= h(t)+\frac{1-e^{-t}}{t}$. Thus, the limiting normalized score is
\[
g_\gamma(\lambda)
\coloneqq
h(\gamma\lambda)
+
(1-\lambda)\frac{1-e^{-\gamma\lambda}}{\gamma\lambda}.
\]
From \Cref{lem:Calculus_Inequality}, we know that if $\gamma<\alpha^\star$, then $g_\gamma(\lambda)<1$ for every $\lambda\in(0,1]$. By compactness, $g_\gamma$ is bounded away from $1$ on $[\delta,1]$. The uniform convergences above therefore imply that, after choosing $c,d$ sufficiently large and then $k$ sufficiently large, we have $\varphi_{k,c,d}(\lambda)<1$ for every $\lambda\in[\delta,1]\cap\{1/k,2/k,\dots,1\}$. Together with the small-$\lambda$ bound, this shows that every committee using at least one auxiliary candidate has PAV score strictly smaller than $n=\PAV(W)$. The only size-$k$ committee using no auxiliary candidates is $W$ itself. Hence, $W$ is the unique global PAV committee.

We have constructed, for every $\alpha<\alpha^\star$, an instance with a global PAV committee $W$ that is fractionally $\alpha$-Pareto dominated. Thus, the factor $\alpha^\star$ is asymptotically tight.
\end{proof}

\section{Properties of the DMMS algorithm}
\label{appendix:DMMS_Fails_JR}
The DMMS  algorithm~\citep{DMM+21tight} considers an LP relaxation obtained by linearly interpolating the harmonic scores. Formally, let $\widetilde{H}(t) = H(\lfloor t \rfloor) + \frac{\{t\}}{\lfloor t \rfloor + 1}$. This function $\widetilde{H}$ is a continuous, concave piecewise-linear approximation of the harmonic numbers $H_t$. They maximize $\sum_{i \in N} \widetilde{H}(u_i(x))$ over the fractional committee polytope $P_k \coloneqq \{x \in [0,1]^C : \sum_{c \in C} x_c = k\}$. Then, they round the fractional solution using randomized pipage rounding. We first prove the output of their algorithm satisfies \fPO{}.

\begin{lemma} \label{lem:dmms_fPO}
    The DMMS algorithm satisfies \fPO{}.
\end{lemma}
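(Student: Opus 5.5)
We must show that any committee $W$ output by the DMMS algorithm is \fPO{}. The key structural fact we will exploit is that $\widetilde{H}$ is integer-exact: $\widetilde{H}(t)=H_t$ for integers $t$. In addition, $\widetilde{H}$ is concave and piecewise linear with breakpoints at integers. The plan is to show that the fractional optimum $x^\ast$ computed by DMMS lies on a face of $P_k$ on which the objective $L(x)\coloneqq\sum_i \widetilde{H}(u_i(x))$ is linear. We then show that pipage rounding stays inside this face, so the output $W$ is itself an optimum of the concave program. Once $W$ maximizes $L$ over $P_k$, \fPO{} follows by strict monotonicity, exactly as in the proof of \Cref{lem:alpha-fpo}. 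Suppose a fractional $y\in P_k$ satisfies $u_i(y)\ge u_i(W)$ for all $i$, strictly for some $i$. Since $\widetilde{H}$ is strictly increasing, $L(y)>L(\mathbf 1_W)$, which contradicts optimality of $\mathbf 1_W$.

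The central step is to show $L(\mathbf 1_W)=L(x^\ast)$. For each voter, let $t_i\coloneqq\lfloor u_i(x^\ast)\rfloor$. On the region where every $u_i(x)\in[t_i,t_i+1]$ (or the appropriate closed interval if $u_i(x^\ast)$ is an integer), $L$ is the linear function $\sum_i\bigl(H_{t_i}+(u_i(x)-t_i)/(t_i+1)\bigr)$. Pipage rounding moves along directions $\mathbf 1_a-\mathbf 1_b$. In the DMMS version, each step goes to whichever endpoint does not decrease the (linearized or multilinear) objective.

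We would therefore argue as follows. Take $x^\ast$ to be an optimal vertex-like solution, for instance a basic optimal solution of the equivalent LP in which each $\widetilde H$ is written as a minimum of linear pieces. The LP is $\max\sum_i z_i$ subject to $z_i\le H_t+(u_i(x)-t)/(t+1)$ for all $t$, and $x\in P_k$. Each pipage step changes some $u_i$ by moving two coordinates. Along that line, $L$ is concave, and at an optimum it is maximized at $x^\ast$ itself. So if the step can move in either direction without loss, $L$ must be constant along the segment, and both endpoints remain optimal. Concavity of $L$ along the line together with optimality at the interior point $x^\ast$ forces $L$ to be constant on any segment through $x^\ast$ only when the segment lies within a single linear region. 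This is the delicate point.

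\textbf{Main obstacle.} Along a direction $\mathbf 1_a-\mathbf 1_b$, $L$ is concave and maximized at $x^\ast$, so it is typically \emph{not} constant. The rounded $W$ might therefore have strictly smaller $L$-value, and then the argument above does not apply directly. If that happens, we would fall back on a weaker but sufficient claim. It suffices that $\mathbf 1_W$ maximizes some \emph{strictly positive weighted} utilitarian welfare $\sum_i w_iu_i$ over $P_k$, since such a maximizer is automatically \fPO{}. We would take $w_i$ to be a supergradient of $\widetilde H$ at $u_i(x^\ast)$, namely a slope in $[1/(t_i+1),1/t_i]$, which is always positive. By KKT conditions for the concave program, $x^\ast$ maximizes the linear function $\sum_i w_iu_i(x)$ over $P_k$. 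The remaining task is to check that pipage rounding of $x^\ast$ only shifts mass between candidates with equal weighted score $\sum_{i\in\supp(c)}w_i$. Candidates with $0<x^\ast_c<1$ must all attain the maximal score by complementary slackness, and pipage only moves fractional coordinates. Hence $\sum_i w_iu_i(\mathbf 1_W)=\sum_i w_iu_i(x^\ast)$ is maximal, and $W$ is \fPO{}. This supergradient/complementary-slackness route is the one we expect to actually carry the proof. The main care needed is choosing the $w_i$ strictly positive, which holds because $\widetilde H$ has positive slopes everywhere.
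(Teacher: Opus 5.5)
Your fallback route is correct and is essentially the paper's proof. It takes strictly positive supergradient weights $w_i\in\partial\widetilde{H}(u_i(x^\star))$ from the KKT conditions and uses complementary slackness: every candidate with $0<x^\star_c<1$ attains the common weighted score $\lambda$, candidates with $x^\star_c=1$ score at least $\lambda$, and those with $x^\star_c=0$ score at most $\lambda$. Since pipage rounding only moves mass among fractional coordinates, $W$ then maximizes a positively weighted utilitarian welfare.

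The differences are cosmetic. You verify directly the easy direction (a maximizer of positively weighted welfare over $P_k$ is \fPO{}) instead of citing the characterization of \citet{BBF+26fractional}. Your first attempt via $L(\mathbf 1_W)=L(x^\ast)$ is, as you suspected, unnecessary. It also cannot hold in general, because $L(\mathbf 1_W)=\PAV(W)$ and $L(x^\ast)\ge\OPT$, so it would make $W$ PAV-optimal.
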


\begin{proof}
~\citet{BBF+26fractional} show that a committee $W$ is \fPO{} if and only if it is possible to assign a positive weight $w_i > 0$ to each voter $i \in N$ such that $W$ maximizes the sum of weighted utilities of the voters, among all committees of size $k$. We will first prove that the DMMS algorithm satisfies \fPO{} by producing positive voter weights that satisfy this condition.

Consider the fractional committee $x^\star$ obtained in their first step, that is the optimal solution of the following concave maximization problem:
\[
\max_{x \in P_k} \sum_{i \in N} \widetilde{H}(u_i(x)),
\]
where $P_k = \{x \in [0,1]^C : \sum_{c \in C} x_c = k\}$ is the fractional committee polytope. By the Karush--Kuhn--Tucker (KKT) first-order optimality conditions for concave maximization over $P_k$, there exist subgradient weights $w_i \in \partial \widetilde{H}(u_i(x^\star))$ for each voter $i \in N$ and a Lagrange multiplier $\lambda \in \mathbb{R}$ corresponding to the budget constraint $\sum_{c \in C} x_c = k$. Since $\widetilde{H}$ is strictly increasing with positive slopes everywhere (the left- and right-derivatives satisfy $\widetilde{H}'(u) > 0$), every subgradient weight is strictly positive: $w_i > 0$ for all $i \in N$. For each candidate $c \in C$, define the weighted candidate score
\[
q_c \coloneqq \sum_{i \in N :\, c \in A_i} w_i.
\]
The KKT conditions imply that $x^\star$ maximizes the linear objective $\sum_{c \in C} q_c x_c$ over $P_k$. In particular, the complementary slackness conditions ensure that:
\[
q_c \ge \lambda \ \text{ if } x^\star_c = 1, \qquad\qquad
q_c = \lambda \ \text{ if } x^\star_c \in (0,1), \qquad\qquad
q_c \le \lambda \ \text{ if } x^\star_c = 0.
\]
In the rounding phase, pipage rounding operates exclusively on pairs of fractional coordinates $x_c \in (0,1)$, shifting mass between them until coordinates become integral ($0$ or $1$). Crucially, any coordinate with $x^\star_c \in \{0, 1\}$ is never modified. Thus, pipage rounding ensures that the final integral committee $W_0$ satisfies $\{c \in W_0: x^\star_c = 1\} \subseteq W_0 \subseteq \{c \in C: x^\star_c > 0\}$. Thus, $q_c \geq \lambda$ for every $c \in W_0$, and $q_c \leq \lambda$ for every $c \notin W_0$. Hence, the output of the DMMS algorithm satisfies \fPO{}.
\end{proof}

We now show that the DMMS algorithm might not satisfy \JR{}. One natural attempt to ensure \JR{} is to perform local search (PAV score improving swaps) starting from their output committee. While this modified algorithm satisfies \JR{}, we show that it might not satisfy better than $2$-\fPO{}.

\begin{lemma}\label{lem:dmms_not_JR}
    The DMMS algorithm can fail to satisfy \JR{}. Furthermore, if the algorithm is modified by performing local-search to obtain \JR{} in the end, the output might not satisfy better than $2$-\fPO{}.
\end{lemma}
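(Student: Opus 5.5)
The plan is to prove both parts with explicit constructions. The common mechanism is that $\widetilde H$ is piecewise linear, so the DMMS concave program has many optimal solutions and pipage rounding has many tie-breaking choices. We will exhibit one optimal fractional solution $x^\star$ and one rounding run whose outcome is bad.

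\emph{Failure of \JR{}.} Since $\widetilde H$ has slope $1/2$ on $[1,2]$, a candidate that raises $2n/k$ already-represented voters from utility $1$ to $2$ has the same marginal LP value as a candidate that raises $n/k$ unrepresented voters from $0$ to $1$.
\begin{enumerate}
\item Build a small instance with candidates $b_1,\dots,b_{k-1}$, each approved by a large bloc of voters. Add a candidate $c$ approved exactly by a group $G$ of $\lceil n/k\rceil$ voters who approve nothing else. Add a candidate $d$ approved by $2\lceil n/k\rceil$ voters who also approve some $b_j$.
\item Take $k=2$ first, with one $b$, so that the arithmetic is explicit.
\item Compute the KKT multipliers from the proof of \Cref{lem:dmms_fPO}: $w_i=1/2$ for voters at utility $1$, and $w_i\in[1/2,1]$ for voters of $G$, whose utility lies in $(0,1)$. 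Check that $q_c=q_d=\lambda$.
\item Conclude that $x^\star$ with $x_b=1$ and $x_c=x_d=1/2$ is optimal, and tune group sizes so that exact ties hold and the optimum is not unique.
\item Pipage rounding moves mass along the direction $\mathbf 1_c-\mathbf 1_d$, and $F_{\PAV}$ is convex on this line. If $F_{\PAV}$ is in fact linear there, or tilted toward $d$ by a suitable choice of sizes, the run may output $\{b,d\}$, which leaves $G$ unrepresented even though $|G|\ge n/k$ and $G$ commonly approves $c$.
\end{enumerate}
Choosing group sizes so that both ties hold simultaneously is a small integer computation.

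\emph{Local search damages \fPO{}.} Reuse the asymptotic $2$-\fPO{} tightness instance from the proof of \Cref{thm:local_PAV_properties}. It has baseline candidates $w_1,\dots,w_k$, auxiliary candidates $a_1,\dots,a_{k+1}$, and one voter for each pair $(j,T)$, where $T$ is a $2$-subset of the auxiliaries; that voter approves $\{w_j\}\cup T$.
\begin{enumerate}
\item Recall two facts from that proof: $W=\{w_1,\dots,w_k\}$ is exact local PAV, and $W$ is fractionally $\frac{2k}{k+1}$-dominated.
\item Modify the instance so that the DMMS output $W_0$ is a committee from which a sequence of PAV-improving swaps can lead to $W$. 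To do this, attach a gadget (a few extra voters and candidates) that makes $W_0$ violate \JR{} or be strictly improvable by swaps that move toward $W$.
\item Make sure $W_0$ itself is \fPO{}, as \Cref{lem:dmms_fPO} requires; this already holds since $W_0$ is a DMMS output.
\item Show that the final committee still contains $W$ and is still fractionally $\alpha$-dominated for every $\alpha<2-o(1)$. The dominating fractional committee spreads mass over the auxiliaries and keeps the gadget part unchanged.
\end{enumerate}
Since local search need not find a global optimum, a run ending at $W$ suffices.

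\emph{Expected main obstacle.} The hardest step is the second part: arranging the LP optimum and the pipage run so that the rounded committee flows, by improving swaps, into the bad local optimum. The gadget must not disturb the local optimality of $W$, and it must not destroy the $2$-domination by shifting the \JR{} threshold $n/k$. I would first try a weighted copy of the \JR{} counterexample whose voters are disjoint from the baseline instance, with weights chosen so that the swap gains in the baseline part stay exactly $0$.
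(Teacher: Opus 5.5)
\textbf{Part 1 cannot work for any choice of group sizes.} The obstruction is already in the KKT conditions you invoke. Take the weights from the proof of \Cref{lem:dmms_fPO}. Every voter has $w_i u_i(x^\star)\le 1$, and every candidate with $x^\star_c>0$ has $q_c\ge\lambda$. Hence
\[
k\lambda \;\le\; \sum_{c\in C} q_c x^\star_c \;=\; \sum_{i\in N} w_i\, u_i(x^\star) \;\le\; n ,
\]
with strict inequality as soon as some voter has $w_iu_i(x^\star)<1$. Your voters in $G$ approve only $c$ and sit at utility $x^\star_c=1/2$. Their weight is therefore exactly $1$: the slope of $\widetilde H$ on $(0,1)$ is $1$, not some value in $[1/2,1]$. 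Since $x^\star_c\in(0,1)$, this gives $|G|=q_c=\lambda<n/k$, so $G$ never witnesses a \JR{} violation. For $k=2$ the failure is even more direct: $G$ and the supporters of $d$ are disjoint, yet $|G|+2|G|\ge 3n/2>n$.

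The same argument shows that, in any DMMS run, a \JR{}-violating group cannot consist only of voters with fractional utility below $1$. Their common candidate would have $q_c\ge n/k>\lambda$, hence $x^\star_c=1$ and $c\in W_0$. The missing idea is that the eventually unrepresented voters must have fractional utility above $1$, so that their weight is $1/2$. In other words, they must approve several fractionally selected candidates, and pipage must then push all of that mass elsewhere. The paper does exactly this. Each type-1 voter approves two candidates of $R$ with $x^\star_r=\frac{3k+1}{4(k+1)}$. Then $q_{r_j}=4k^2\cdot\frac12=2k^2=\lambda$ even though $4k^2\ge n/k$, and a pipage run that moves all $R$-mass onto $S$ leaves these voters unrepresented.

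\textbf{Part 2 leaves the gadget unspecified, and that is where the difficulty lies.} The skeleton is right: the paper's $L\cup R$ voters are your $2$-\fPO{} tightness instance with every voter quadruplicated. But your plan to use a copy of the first construction as the gadget inherits the flaw above. The gadget candidates face two competing requirements:
\begin{itemize}
\item they must beat the baseline candidates in the LP, so that $x^\star$ puts no mass on $L$ and a pipage run can end inside the gadget;
\item they must lose to the baseline candidates in integral swaps, so that local search replaces every gadget candidate by some $\ell_i$ and $L$ stays locally optimal.
\end{itemize}
The paper meets both with $k^2$ voters per cyclic pair $\{s_j,s_{j+1}\}$. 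In the LP these voters sit below utility $1$ with weight $1$, giving $q_{s_j}=2k^2$. The supporters of $\ell_i$ sit above utility $1$ with weight $1/2$, giving $q_{\ell_i}=k(k+1)<2k^2$. Integrally, adding $\ell_i$ lifts $2k(k+1)$ voters from $0$ to $1$, while dropping $s_j$ costs at most $2k^2$.

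In this single instance, $W_0=S$ is simultaneously the \JR{} failure, so the two parts need not be built separately. You would also need to check that each swap gains more than the local-search threshold; the paper duplicates the instance to reach $n/k^2$.
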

\begin{proof}
Fix $k \ge 3$. We consider an election instance with candidate set $C = L \cup S \cup R$, where
\[
L = \{\ell_1, \dots, \ell_k\}, \qquad
S = \{s_1, \dots, s_k\}, \qquad
R = \{r_1, \dots, r_{k+1}\}.
\]
There are two types of voters:
\begin{enumerate}
    \item For each $i \in [k]$ and each pair $\{p, q\} \in \binom{[k+1]}{2}$, there are $4$ voters who approve $\{\ell_i, r_p, r_q\}$. The number of such voters is $4k\binom{k+1}{2} = 2k^2(k+1)$.
    \item For each $j \in [k]$ (with indices modulo $k$), there are $k^2$ voters who approve $\{s_j, s_{j+1}\}$. The number of such voters is $k \cdot k^2 = k^3$.
\end{enumerate}
The total number of voters is $n = 4k\binom{k+1}{2} + k^3 = k^2(3k+2)$. In this instance:
\begin{itemize}
    \item Each candidate $\ell_i \in L$ is approved by $4\binom{k+1}{2} = 2k(k+1)$ first-type voters.
    \item Each candidate $r_p \in R$ is approved by $4 \cdot k \cdot k = 4k^2$ first-type voters.
    \item Each candidate $s_j \in S$ is approved by $2k^2$ second-type voters.
\end{itemize}

Consider the fractional committee $x^\star \in [0,1]^C$ given by
\[
    x^\star_{\ell_i} = 0 \quad (\forall i \in [k]), \qquad
x^\star_{r_j} = \frac{3k+1}{4(k+1)} \quad (\forall j \in [k+1]), \qquad
x^\star_{s_p} = \frac{k-1}{4k} \quad (\forall p \in [k]).
\]
Note that $\sum_{c \in C} x^\star_c = (k+1)\frac{3k+1}{4(k+1)} + k\frac{k-1}{4k} = k$, so $x^\star$ is feasible.

Under $x^\star$, each first-type voter approving $\{\ell_i, r_p, r_q\}$ has fractional utility $2x^\star_{r_j} = \frac{3k+1}{2(k+1)} \in (1,2)$, so the slope of the objective for every first-type voter is $1/2$. Each second-type voter approving $\{s_j, s_{j+1}\}$ has fractional utility $2x^\star_{s_j} = \frac{k-1}{2k} < 1$, so their slope is $1$. The candidate marginal scores are:
\[
    q_{\ell_i} = 2k(k+1) \cdot \frac{1}{2} = k(k+1) < 2k^2, \qquad
q_{r_j} = 4k^2 \cdot \frac{1}{2} = 2k^2, \qquad
q_{s_j} = 2k^2 \cdot 1 = 2k^2.
\]
Because every candidate in $R \cup S$ achieves the maximum score $2k^2$ and every candidate in $L$ has strictly smaller score, $x^\star$ is an optimal solution to the DMMS LP.

There exists a valid execution of randomized pipage rounding on $x^\star$ that repeatedly pairs a fractional coordinate in $S$ with a fractional coordinate in $R$ and shifts mass to increase the coordinate in $S$. At the end of this process, all coordinates in $S$ reach $1$ and all coordinates in $R$ reach $0$, yielding the integral committee
\[
W_0 = S = \{s_1, \dots, s_k\}.
\]
Under $W_0 = S$, none of the first-type voters are represented (each receives utility $0$). For any candidate $r_j \in R$, the number of first-type voters approving $r_j$ is $4k^2$. The \JR{} representation quota is
\[
\frac{n}{k} = \frac{k^2(3k+2)}{k} = 3k^2 + 2k.
\]
Since $4k^2 \ge 3k^2 + 2k$ for every $k \ge 3$, the supporters of $r_j$ form a cohesive group of size at least $n/k$ with zero representation in $W_0$. Hence, $W_0$ violates \JR{}.

Now consider modifying the DMMS algorithm by performing local search via PAV-improving swaps until no further improvement is possible. Starting from $W_0 = S$, consider replacing the candidates in $S$ one by one with candidates from $L$. When a candidate $\ell_i \in L$ is added and a candidate $s_j \in S$ is removed:
\begin{itemize}
    \item The $4\binom{k+1}{2} = 2k(k+1)$ first-type voters approving $\ell_i$ previously had utility $0$ and now have utility $1$, contributing a gain of $2k(k+1)$ to the PAV score.
    \item At most $2k^2$ second-type voters approve $s_j$, and each loses at most $1$ in PAV score.
\end{itemize}
Thus, the net change in PAV score for each such swap satisfies
\[
\Delta \PAV \ge 2k(k+1) - 2k^2 = 2k > 0.
\]
(Note that $2k > \frac{3k+2}{2} = \frac{n}{2k^2} \ge \frac{1}{2k^2}$ for $k \ge 3$, so these swaps are strictly improving even under thresholded local search. Furthermore, by creating two copies of the instance and doubling $k$, we can make sure that the swaps increase PAV score by at least $\frac{n}{k^2}$, a threshold necessary to ensure that the final committee satisfies \JR{}.) After $k$ such swaps, the algorithm reaches the committee
\[
W = L = \{\ell_1, \dots, \ell_k\}.
\]

We verify that $W = L$ is locally optimal under single swaps:
\begin{itemize}
    \item For a swap replacing $\ell_i \in L$ with $r_j \in R$: Removing $\ell_i$ causes a loss of $1$ from each of the $4\binom{k}{2} = 2k(k-1)$ voters approving $\{\ell_i, r_p, r_q\}$ with $p, q \ne j$ (whose utility drops from $1$ to $0$). Adding $r_j$ increases the utility of the $4k$ voters approving $\{\ell_{i'}, r_j, r_p\}$ ($p \ne j$) from $1$ to $2$ for each $i' \ne i$, contributing a gain of $(k-1) \cdot \left(4k \cdot \frac{1}{2}\right) = 2k(k-1)$. The remaining voters either do not approve $\ell_i$ or $r_j$, or approve both (keeping utility $1$). Hence, the net PAV change is $-2k(k-1) + 2k(k-1) = 0$.
    \item For a swap replacing $\ell_i \in L$ with $s_j \in S$: Adding $s_j$ gains at most $2k^2$ from second-type voters, while removing $\ell_i$ loses $2k(k+1)$. Thus, $\Delta \PAV \le 2k^2 - 2k(k+1) = -2k < 0$.
\end{itemize}
Thus, $W = L$ is a local PAV committee. 

Finally, we evaluate the fractional Pareto optimality of $W = L$. Consider the fractional committee $y \in [0,1]^C$ defined by
\[
y_{r_j} = \frac{k}{k+1} \quad (\forall j \in [k+1]), \qquad
y_c = 0 \quad (\forall c \notin R),
\]
which is feasible since $\sum_{j=1}^{k+1} y_{r_j} = k$. Under $W = L$ and $y$:
\begin{itemize}
    \item Every first-type voter approves $\ell_i \in W$, so $u_v(W) = 1$. Under $y$, they approve $r_p$ and $r_q$, so $u_v(y) = y_{r_p} + y_{r_q} = \frac{2k}{k+1}$.
    \item Every second-type voter receives utility $0$ under both $W$ and $y$.
\end{itemize}
Consequently, for every $\beta < \frac{2k}{k+1}$, the fractional committee $y$ $\beta$-dominates $W$. Thus, $W$ is not $\beta$-\fPO{} for any $\beta < \frac{2k}{k+1}$. Since $\lim_{k \to \infty} \frac{2k}{k+1} = 2$, the output committee does not satisfy better than $2$-\fPO{}.
\end{proof}

\section{Approaches That Fail To Show Existence of \JR{} and \fPO{}}
\label{appendix:Limitations}

Demonstrating the existence of committees that simultaneously satisfy \JR{} and \fPO{} was a folklore open problem for some time. It was very recently resolved negatively: \citet[Theorem 1]{jin2026limitationsbobw} present an example where every \JR{} committee is dominated by a fractional committee (indeed, even \emph{strictly} dominated; every voter is strictly better off under the fractional committee). Before this example was discovered, we attempted to prove existence using a variety of natural proof strategies. As we now know, all of them were doomed to fail. We list explicit counterexamples for each of them, in the hope that they will be useful to guide future research, for example by providing test examples and barriers to algorithms that work in special cases or for small parameter values, or that relax \JR{} in some way.

First, let us better understand the conditions under which a \JR{}+\fPO{} committee must always exist.~\citet{BBF+26fractional} show that a committee $W$ is \fPO{} if and only if it is possible to assign a positive weight $w_i > 0$ to each voter $i \in N$ such that $W$ maximizes the sum of weighted utilities of the voters, among all committees of size $k$. Let us define the \emph{weighted support} of a candidate as the sum of the weights of the voters who approve it. Thus, for each \fPO{} committee $W$, there exist positive voter weights $(w_i)_{i\in N}$ such that $W$ consists of the $k$ candidates with the highest weighted support, breaking ties arbitrarily. Note that in general the weighted support of the top $k$ candidates might not be the same. However, we first argue that, for a \JR{}+\fPO{} committee to always exist, it suffices to look at the ``top set'', that is the set of candidates who have the highest weighted support (and every other candidate has strictly smaller weighted support).

\begin{definition}[Top set]
    A set $W$ of candidates is said to be a \emph{top set}, if it is possible to assign positive weights to the voters, such that $W$ is the set of the candidates with the highest weighted support.
\end{definition}

Note that the definition of the top set is independent of $k$. In particular, a top set might have size smaller than $k$. Nevertheless, we will argue that if \JR{}+\fPO{} exists for all instances, it is always sufficient to select a subset of some top set, and then add candidates with the next highest weighted supports if needed, to make the size exactly equal to $k$. Crucially, we argue that the \JR{} property is already satisfied even before adding the extra candidates.
\begin{lemma}
    A $\JR{}+\fPO{}$ committee always exists if and only if, for each instance $\I = \langle N, C, \A, k \rangle$, there exists a committee $W$ of size at most $k$ which is a subset of some top set and for each candidate $c$, the number of voters who approve $c$ but do not approve anyone in $W$ is strictly less than $n/k$.
\end{lemma}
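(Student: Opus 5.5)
We prove the two directions separately. Both rely on the characterization of \citet{BBF+26fractional}: $W$ is \fPO{} if and only if there are positive voter weights $(w_i)_{i\in N}$ such that $W$ maximizes weighted utilitarian welfare among size-$k$ committees. Equivalently, $W$ consists of $k$ candidates of highest weighted support $q_c \coloneqq \sum_{i \in \supp(c)} w_i$, with ties broken arbitrarily.

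\emph{($\Rightarrow$).} Let $W$ be a \JR{}+\fPO{} committee with certifying weights $w$. Its $k$ members have the top $k$ weighted supports, so the top set $T_w$ under $w$ (all candidates of maximum support) may be much smaller than $W$. We therefore reshape the weights so that $W$ itself becomes a top set.

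\begin{enumerate}
    \item Let $\lambda \coloneqq \min_{c\in W} q_c$, which is at least every support outside $W$.
    \item Consider the modified weights $w'_i \coloneqq \min\bigl(w_i, \theta\bigr)$, or more simply the utilitarian weighting of $W$'s members via the LP dual. The cleaner route is to invoke the LP duality underlying the \citet{BBF+26fractional} characterization directly. \fPO{} of $W$ means that $\mathbf{1}_W$ is an optimal vertex of some positive-weight welfare LP over $P_k$. Perturbing to weights in the relative interior of the dual face should make every candidate of $W$ tie at the top, while candidates outside $W$ are at most tied.
    \item If that perturbation is unavailable, fall back on a subset argument. Take $W' \coloneqq W \cap T_w$ for the top set $T_w$. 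This is nonempty, but it may not satisfy the \JR{} condition.
\end{enumerate}

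The honest fallback for this direction is to exploit that the condition in the lemma only asks for a subset of some top set with size at most $k$. Taking $W$ itself and showing $W$ is contained in a top set is exactly the dual-face claim in step 2, and I expect this to be the main obstacle. What I would try first is this. Among all positive weight vectors certifying $W$, pick one that maximizes the number of ties between $\lambda$ and supports in $W$. Then argue that if some $c\in W$ has $q_c>\lambda$, we can scale up weights of voters not approving $c$, while preserving optimality of $W$, to reduce the gap. Here we use that $W$ stays welfare-optimal on a convex cone of weights, so moving within that cone is free. The \JR{} condition for $W$ carries over verbatim, since it is the same set.

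\emph{($\Leftarrow$).} Suppose $W$ with $|W|\le k$ is a subset of a top set $T$ under positive weights $w$, and every candidate has fewer than $n/k$ supporters unrepresented by $W$. Extend $W$ to size $k$ in two stages:
\begin{enumerate}
    \item add further members of $T$ (ties at the maximum support);
    \item if $T$ is exhausted, add candidates in decreasing order of weighted support, breaking ties arbitrarily.
\end{enumerate}
The resulting $W^+$ consists of $k$ candidates of highest weighted support. It therefore maximizes $\sum_i w_i u_i(\cdot)$ over size-$k$ committees and is \fPO{} by the characterization.

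\JR{} is monotone under adding candidates: enlarging the committee can only shrink the set of unrepresented voters, so every candidate still has fewer than $n/k$ unrepresented supporters. Hence $W^+$ is \JR{}+\fPO{}.

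This direction is routine; the only care needed is checking that the greedy extension is consistent with top-$k$ selection, which holds because $W\subseteq T$ already sits at the maximum support.
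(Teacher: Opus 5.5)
Your ($\Leftarrow$) direction is correct and coincides with the paper's. The ($\Rightarrow$) direction, however, has a genuine gap.

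\textbf{The central claim fails.} You claim that the certifying weights of a \JR{}+\fPO{} committee $W$ can be reshaped so that all of $W$ ties at the maximum weighted support. This is false. Take candidates $a,b,c$ with $k=2$, voter $1$ approving $\{a\}$, and voter $2$ approving $\{a,b\}$. Then $W=\{a,b\}$ represents everyone, so it is \JR{}. It also maximizes approval score, so it is \fPO{}. Yet for every positive weight vector, $q_a=w_1+w_2>w_2=q_b$, so $W$ lies in no top set.

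Your repair, raising the weights of voters who do not approve the higher-support candidate, is vacuous here because every voter approves $a$. More generally, whenever $\supp(b)\subsetneq\supp(c)$ for $b,c\in W$, no positive reweighting can create the tie. The dual-face picture does not help either. Moving into the relative interior of the cone of certifying weights makes the non-forced inequalities $q_c\ge q_d$ ($c\in W$, $d\notin W$) strict. It says nothing about equalities among members of $W$. You already concede that the fallback $W\cap T_w$ may fail the \JR{} condition, so this direction remains unproved.

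\textbf{The missing idea.} The hypothesis ``a \JR{}+\fPO{} committee always exists'' quantifies over \emph{all} instances, so you may apply it to an auxiliary instance rather than to $\I$ itself.
\begin{enumerate}
\item The paper forms $\I'$ by copying each candidate $k$ times, keeping the same voters and the same $k$.
\item It takes a \JR{}+\fPO{} committee $W'$ of $\I'$ with certifying weights $w$.
\item Suppose some $c_1\in W'$ had strictly larger weighted support than some $c_2\in W'$. Then all $k$ copies of $c_1$ would strictly outrank $c_2$, contradicting that $c_2$ is among the top $k$. Hence all members of $W'$ tie, and necessarily at the maximum.
\item Let $W$ be the set of original candidates with at least one copy in $W'$. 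Then $|W|\le k$. Since cloning does not change supports, $W$ is contained in the top set of $\I$ under the same $w$.
\item $W$ satisfies the \JR{} condition in $\I$, because its unrepresented voters and the quota $n/k$ are exactly those of $W'$ in $\I'$.
\end{enumerate}
Your argument instead tries to prove an instance-wise implication: \JR{}+\fPO{} in $\I$ implies the property in $\I$. That is stronger than what the lemma requires, and your method does not deliver it.
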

\begin{proof}
    If there exists such a committee $W$, then clearly \JR{} + \fPO{} exists, because we can begin with the committee $W$, and add the $k-|W|$ candidates with the next highest scores, breaking ties arbitrarily. Now, suppose \JR{} + \fPO{} exists for all instances. Consider any instance $\I = \langle N, C, \A, k \rangle$. Consider a new instance $\I'$, where the voters remain the same but each candidate from $C$ is copied $k$ times. Consider any \JR{}+\fPO{} committee $W'$ in this instance. Clearly, since $W'$ is \fPO{}, there exist voter weights $w$, such that $W'$ consists of the $k$ candidates with the highest weighted support, breaking ties arbitrarily. Additionally, the weighted support of these $k$ candidates must be exactly the same, since there were $k$ copies of each candidate. This is because if a candidate $c_1 \in W'$ had a strictly larger weighted support than some $c_2 \in W'$, then all the $k$ copies of $c_1$ have a higher weighted support than $c_2$, and $c_2$ could not have been in the top $k$ candidates. Now, let $W$ denote the set of candidates in the original instance who have at least one copy in $W'$. It is easy to see that $W$ satisfies the required properties for the lemma statement, using the same voter weights $w$.
\end{proof}

\subsection{Sequential Weight-Raising Algorithm} 
A major success story of using \fPO{} was the discovery of a pseudo-polynomial time algorithm for finding an allocation of indivisible goods that is \PO{} and EF1 \citep{BKV18ef1}. While that algorithm was originally phrased using Fisher markets terminology, one can also view it as an algorithm that begins with the weight vector $w = (1, \dots, 1)$ and an allocation that maximizes weighted welfare, and that then repeatedly increases the weight of the worst-off agent and updates the allocation until a fair outcome is achieved. 

The same idea can be ported to voting settings. For example, in the setting of ``fair mixing'', which finds a distribution over candidates based on approval preferences, \citet{BBP+21distribution} developed the ``sequential utilitarian rule'' that repeatedly increases voter weights until an outcome is obtained where each voter approves at least an $\frac1n$ portion of the outcome (the same rule had been discussed earlier in a working paper version of \citet{BMS05}). A similar idea is used by the ``redistributive utilitarian rule'' proposed by \citet{SV24maximumflowfairnetwork} for selecting a fractional committee.

For the setting of (non-fractional) approval-based committee elections, the natural analog would be the following algorithm: Set $w = (1, \dots, 1)$, and consider the top set $T$ of candidates with the highest score. Find all voters who don't approve any candidate in $T$, and continuously increase their weights at the same rate, until a new candidate joins $T$. Then continue increasing weights for the voters who still do not approve any candidate in $T$, and so on. This procedure terminates with a top set $T$ such that $T \cap A_i \neq \emptyset$ for every $i \in N$. Thus, our task now reduces to finding a committee $W \subseteq T$ that satisfies \JR{}. Unfortunately, there are profiles where no such \JR{} committee exists.

\begin{example}[Weight-Raising Algorithm Fails \JR{}]
	
    Take $n=39$ voters, $m = 24$ candidates, $k = 3$. The voters are partitioned into $A = \{a_1, \ldots, a_{17}\}$, $B = \{b_1, \ldots, b_{16}\}$, and $D = \{d_1, \ldots, d_6\}$. There is one candidate approved by $A$, one approved by $B$, one approved by $A \setminus \{a_i\} \cup \{b_i\}$ for each $i \in [16]$, and one approved by $A \setminus \{a_i\} \cup \{d_i\}$ for each $i \in [6]$. Let the candidate approved by $B$ be candidate $c$. Let $C$ be the set of all the candidates.

    Note that, when all the voter weights are $1$, $c$ has a weighted support of $16$, while every other candidate has a weighted support of $17$. Hence, the top set initially is $C \setminus \{c\}$ and each voter approves at least one candidate in $C \setminus \{c\}$. However, any subset of size $3$ of $C \setminus \{c\}$ can cover at most $3$ voters from $B$, leaving $16 - 3 = 13 = \frac{n}{k}$ voters in $B$ unrepresented. All these voters approve $c$, and hence there is no \JR{} committee within $C \setminus \{c\}$. 
\end{example}

Nevertheless, the above approach proves the existence of \JR{}+\fPO{} whenever $k \geq n - 1$. If $k \ge n$, we can ask each voter to select one candidate that they approve from the final top set (recall that each voter approves at least one candidate from the final top set) and fill up any remaining seats with arbitrary highest-score candidates (according to the final voter weights). If $k = n - 1$, we can select any $n - 1$ voters to select one candidate each that they approve. This makes sure that at least $n - 1$ voters are represented, and hence the resulting committee satisfies \JR{} since $n/k > 1$.

\subsection{Sufficient Conditions on the Weight Vector}

The algorithm for finding an EF1 and \fPO{} allocation of indivisible goods \citep{BKV18ef1,GM24fPO,M25polynomial} uses the concept of ``price EF1'' (pEF1), which can be seen as a sufficient condition on the weight vector $w$ and the allocation $A$ maximizing $w$-weighted welfare such that $A$ will satisfy EF1. In particular, without introducing the specific notation of the indivisible goods setting, let us write down the definition of pEF1 in its weight form rather than the standard form in terms of Fisher market prices (for notational clarity we assume that $A_i \neq \emptyset$ and $u_i(A_i) > 0$ but the condition can be extended to cover those cases):
\[
	\frac{1}{u_i(A_i)} \le w_i \le \frac{1}{u_i(A_i \setminus \{o_i\})} \quad\text{for each $i\in N$ and some $o_i \in A_i$.}
\]
This condition says that the weight $w_i$ of each agent should be approximately equal to $1/u_i(A_i)$, where $u_i(A_i)$ is the utility that agent $i$ obtains in outcome $A$. Interestingly, ``$w_i = \frac{1}{u_i}$'' also appears in the first-order condition of maximizing Nash welfare for allocating divisible private goods.

This discussion motivates the search for a similar sufficient condition which will imply \JR{}, i.e., a kind of ``price-JR''. A natural place to look for such a condition is the proof that PAV satisfies \EJRplus{} \citep{ABC+17justified}, which involves the familiar-looking terms ``$1/u_i(W)$''. Indeed, a sufficient condition for satisfying \EJRplus{} based on this idea can be formulated as follows.

\begin{lemma}
	\label{lem:sufficient-condition-reciprocal-utility}
	Consider positive voter weights $w$ and a committee $W$ maximizing $w$-weighted welfare such that 
	\[
		\frac{1}{u_i(W)  + 1} \le w_i < \frac{1}{u_i(W)} \quad\text{for each $i\in N$.}
	\]
	Here, $1/u_i(W)$ is interpreted as $+\infty$ when $u_i(W)=0$. Then $W$ satisfies \EJRplus{}.
\end{lemma}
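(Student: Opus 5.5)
Suppose, for contradiction, that $W$ violates \EJRplus{}. Then there are $\ell\in[k]$, a candidate $c\notin W$, and a group $S\subseteq N$ with $|S|\ge \ell n/k$, every member approving $c$, and $u_i(W)\le \ell-1$ for all $i\in S$. The plan is to mirror the swap-averaging argument used in \Cref{thm:local_PAV_properties} for \EJRplus{}, but with the PAV marginal gains replaced by the weights $w_i$.

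Since $W$ maximizes $w$-weighted welfare, every swap $W\setminus\{a\}\cup\{c\}$ with $a\in W$ must satisfy
\[
\sum_{i\in\supp(c)} w_i \le \sum_{i\in\supp(a)} w_i .
\]
Summing over all $k$ choices of $a\in W$ gives
\[
k\sum_{i\in\supp(c)} w_i \le \sum_{a\in W}\sum_{i\in\supp(a)} w_i = \sum_{i\in N} w_i u_i(W).
\]

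The left side is at least $k\sum_{i\in S} w_i$. For $i\in S$, the lower bound $w_i\ge 1/(u_i(W)+1)\ge 1/\ell$ gives
\[
k\sum_{i\in S}w_i \ge k|S|/\ell \ge n.
\]
On the right, the strict upper bound $w_i<1/u_i(W)$ gives $w_iu_i(W)<1$ whenever $u_i(W)\ge1$. When $u_i(W)=0$ the term $w_iu_i(W)$ is $0$. Hence $\sum_i w_iu_i(W)<n$, provided some voter has positive utility.

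Chaining the two bounds yields $n\le \sum_i w_iu_i(W)<n$, a contradiction.

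The only care point, and the main obstacle, is the strictness of the final inequality. If every voter has $u_i(W)=0$, the right side equals $0$. But $S$ is nonempty and $w_i>0$, so the left side is positive, which is again a contradiction. This edge case is handled separately, and the strict upper bound on $w_i$ is exactly what yields the strict inequality in the generic case.
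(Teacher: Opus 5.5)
Your proof is correct and is essentially the paper's argument. Summing the $k$ swap inequalities is the same as the paper's averaging step, which produces some $c^\dagger\in W$ with weighted support below $n/k$; this is then played against the same bound $\sum_{i\in\supp(c)}w_i\ge |S|/\ell\ge n/k$. Your separate edge case is unnecessary: every term $w_iu_i(W)$ is strictly less than $1$, including the terms that are zero, so $\sum_i w_iu_i(W)<n$ holds outright.
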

\begin{proof}
	Consider the total weighted support of $W$, which satisfies
	\[
		\sum_{c \in W} \sum_{i \in \supp(c)} w_i
		< \sum_{c \in W} \sum_{i \in \supp(c)} \frac{1}{u_i(W)}
		= \sum_{i \in N : u_i(W) > 0} \sum_{c \in A_i \cap W} \frac{1}{u_i(W)}
		\le n.
	\]
	Hence, there exists some $c^\dagger \in W$ with weighted support strictly less than $\frac{n}{k}$.
	
	Suppose that $W$ fails \EJRplus{}, because some group $S \subseteq N$ with $|S| \ge  \ell \frac{n}{k}$ all have utility at most $\ell - 1$ and all approve $d \in C \setminus W$. Then the $w$-score of $d$ satisfies
	\[
		\sum_{i \in \supp(d)} w_i 
		\ge 
		\sum_{i \in S} w_i 
		\ge
		\sum_{i \in S} \frac{1}{u_i(W)  + 1}
		\ge
		\sum_{i \in S} \frac{1}{(\ell - 1) + 1}
		=
		\frac{|S|}{\ell}
		\ge
		\frac{n}{k}.
	\]
	Hence, $d$ has strictly higher $w$-score than $c^\dagger$, but $d \not\in W$ while $c^\dagger \in W$. This contradicts that $W$ maximizes $w$-weighted welfare.
\end{proof}

If we aim for just \JR{} instead of \EJRplus{}, we can find a weaker sufficient condition that still makes the proof of \Cref{lem:sufficient-condition-reciprocal-utility} go through:
\begin{lemma}
	\label{lem:weaker-sufficient-condition}
	Consider positive voter weights $w$ and a committee $W$ maximizing $w$-weighted welfare such that every voter $i \in N$ with $u_i(W) = 0$ has $w_i \ge 1$ and the total weighted support of $W$ satisfies $\sum_{c \in W} \sum_{i \in \supp(c)} w_i < n$. Then $W$ satisfies \JR{}.
\end{lemma}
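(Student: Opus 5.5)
The plan is to reuse the counting argument from the proof of \Cref{lem:sufficient-condition-reciprocal-utility}, replacing its two uses of the reciprocal-utility bounds by the two hypotheses of the present lemma.

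The first step is to find a cheap member of $W$. By hypothesis the total weighted support of $W$ satisfies
\[
\sum_{c \in W} \sum_{i \in \supp(c)} w_i < n .
\]
Averaging over the $k$ members of $W$ then gives some candidate $c^\dagger \in W$ whose weighted support $\sum_{i \in \supp(c^\dagger)} w_i$ is strictly less than $n/k$.

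The second step is to argue by contradiction. Suppose $W$ fails \JR{}. Then some candidate $d$ has a set $S \subseteq \supp(d)$ of at least $n/k$ voters with $u_i(W) = 0$ for every $i \in S$. This forces $d \notin W$, since otherwise every voter in $S$ would have positive utility. Every $i \in S$ is unrepresented, so the hypothesis gives $w_i \ge 1$. Because all weights are positive, the weighted support of $d$ satisfies
\[
\sum_{i \in \supp(d)} w_i \ge \sum_{i \in S} w_i \ge |S| \ge \frac{n}{k}.
\]

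The third step produces the contradiction. Swapping $c^\dagger$ out of $W$ and $d$ in changes the $w$-weighted welfare by the weighted support of $d$ minus that of $c^\dagger$. This equality holds because the weighted welfare of a committee is exactly the sum of the weighted supports of its members. The change is strictly positive, which contradicts the assumption that $W$ maximizes $w$-weighted welfare among size-$k$ committees. Hence $W$ satisfies \JR{}.

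There is no real obstacle here. The only points needing care are confirming that $d \notin W$, so the swap is legal, and noting that the averaging step uses $|W| = k$. Positivity of the weights is what allows restricting the sum from $\supp(d)$ to $S$ without increasing it.
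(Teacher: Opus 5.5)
Your proof is correct and is exactly the argument the paper intends: the paper gives no separate proof, only noting that the proof of \Cref{lem:sufficient-condition-reciprocal-utility} goes through. That argument averages to find $c^\dagger \in W$ with weighted support below $n/k$, uses $w_i \ge 1$ on unrepresented voters to give the JR-witnessing candidate $d \notin W$ weighted support at least $n/k$, and contradicts the optimality of $W$; your explicit checks that $d \notin W$ and that the swap changes welfare by the difference of weighted supports just fill in details the paper leaves implicit.
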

This sufficient condition suggests the following rule: minimize, over all positive weight vectors $w$ and committees $W$ that maximize $w$-weighted welfare, the total weighted support of $W$, subject to $w_i \ge 1$ for all voters $i$ with $u_i(W) = 0$. If we could prove that the optimum objective value of this optimization is always strictly below $n$, then we would have found a rule producing \JR{} committees. Unfortunately, the rule fails.

\begin{example}[Sufficient condition on weights fails]
	Let $n = 6$, $m = 6$, $k = 3$. The approval sets are $A_1 = \{a,b\}$, $A_2 = \{a,c\}$, $A_3 = \{b,c\}$, $A_4 = \{d,e\}$, $A_5 = \{d,f\}$, $A_6 = \{e,f\}$. 
    
    The profile consists of two disjoint triangles. Consider any positive weight vector $w$ and any $w$-weighted-welfare maximizing committee $W$ satisfying $w_i\ge 1$ for every unrepresented voter. If $W$ chooses all three candidates from one triangle, then all three voters in the other triangle are unrepresented, so each unelected candidate has weighted support at least $2$. Since every elected candidate must have weighted support at least that of every unelected candidate, the total weighted support of $W$ is at least $3\cdot 2=6=n$.
	
	The remaining case is that $W$ chooses two candidates from one triangle and one from the other. By symmetry, suppose $W=\{a,b,d\}$, so voter $6$ is unrepresented and $w_6\ge 1$. Since $d$ is elected while $e$ and $f$ are not, we must have $w_4+w_5\ge w_4+w_6$ and $w_4+w_5\ge w_5+w_6$, hence $w_4,w_5\ge w_6\ge 1$. Thus, $d$, $e$, and $f$ all have weighted support at least $2$. Since $a$ and $b$ are elected, they also have weighted support at least $2$. Hence, the total weighted support of $W$ is again at least $6=n$. Therefore, the proposed rule cannot obtain an objective value strictly below $n$. \qed
\end{example}

\subsection{Support of Efficient Rules for Fractional Committee Selection}

\citet{SV24maximumflowfairnetwork} study fractional committees and the lotteries implementing them: a fractional committee $x \in [0,1]^C$ specifies marginal selection probabilities, and a lottery (i.e., a probability distribution over committees) implements $x$ if the probability that a candidate $c \in C$ is in the sampled committee is exactly the marginal $x_c$.
It is easy to see that if a fractional committee $x$ is \fPO{}, then for every implementing lottery, all the committees in its support must themselves be \fPO{} \citep[cf.][Proposition 1]{AFSV23bobw}.
Thus, a natural strategy to prove the existence of fair and efficient committees is to look into the support of a fair and efficient \emph{fractional} committee.

Now, \citet{SV24maximumflowfairnetwork} show that, unlike in the single-winner setting, maximizing Nash welfare over fractional committees does not guarantee their ex-ante fairness notions of group resource proportionality or fractional core. Here we record a complementary ex-post obstruction. Call a feasible fractional committee $x$ \emph{fractional Nash optimal} if it maximizes $\prod_{i\in N} u_i(x)$, or equivalently $\sum_{i\in N}\log u_i(x)$ when all utilities are positive.

\begin{example}[Fractional Nash fails ex-post \JR{}]
\label{prop:Nash_Fails_ExPost_JR}
There exists an instance where every committee in any support of any lottery implementing a fractional Nash optimal committee fails justified representation.

Let $k=4$. Start with $19$ candidates and $\binom{19}{5} = 11628$ voters, each approving a distinct size-$5$ subset of candidates. Add a new candidate for every subset of voters of size $\binom{15}{5} = 3003$, approved exactly by those voters. By symmetry and concavity of the logarithm, the fractional Nash optimal committee assigns mass $4/19$ to each of the original $19$ candidates and mass $0$ to every newly added candidate. Hence, any lottery implementing this fractional committee is supported only on size-$4$ committees of original candidates.

For any such integral committee, exactly $\binom{19-4}{5} = 3003$ voters approve none of its members: these are precisely the voters whose approved $5$-subset of original candidates is disjoint from the selected committee. By construction, there is a newly added candidate approved by exactly this group of voters. Since $3003 \ge n/k = 11628 / 4 = 2907$, every committee in the support violates \JR{}. \qed
\end{example}

There do exist fractional committee rules that are both \fPO{} and satisfy JR-like fairness properties, in particular the \emph{redistributive utilitarian rule} (RUT) of \citet{SV24maximumflowfairnetwork}, as well as the capped Lindahl equilibrium based on the convex program of \citet{KP25lindahl}.
However, \Cref{prop:Nash_Fails_ExPost_JR} also shows that both of those rules fail ex-post \JR{} since in both cases, their implementing lotteries are supported only on size-$4$ committees of original candidates.

\subsection{Repeated Search for Dominating Committee}

A natural procedure for finding a committee that satisfies \JR{} and \PO{} is to start with a committee that satisfies \JR{} and then repeatedly look for a Pareto improvement until we reach a Pareto optimal committee. Because \JR{} is closed under Pareto improvements (and because each Pareto improvement increases utilitarian social welfare by at least 1 utility point), this process will terminate at a committee that is \JR{} and \PO{}. The problem with this strategy is that checking whether a Pareto improvement exists is \NPH{}.

Since fractional Pareto improvements can be found in polynomial time, one might try to run the same idea with fractional improvements. Suppose the current committee is $W$ and a fractional committee $x$ dominates it. If $x$ could be written as a convex combination of \JR{} committees, then the average utilitarian welfare of the committees in this convex combination would be strictly larger than that of $W$. Hence, at least one \JR{} committee in the support would have strictly larger utilitarian welfare than $W$, and we could move to it. Repeating this step would yield a \JR{} committee with no fractional Pareto improvement, i.e., a \JR{}+\fPO{} committee. The following example shows where this approach breaks.

\begin{example}[A \JR{} committee that fails \fPO{} might not be fractionally dominated by any convex combination of \JR{} committees] Take $k = 2$. Let there be $5$ candidates $\{c_1, c_2, c_3, c_4, c_5\}$. The ballots are: $1 \times \{c_1, c_2\}$, $1 \times \{c_1, c_3\}$, $1 \times \{c_1, c_4\}$, $1 \times \{c_2, c_3, c_5\}$, $1 \times \{c_3, c_4, c_5\}$, $6 \times \{c_2, c_4, c_5\}$, and $9 \times \{c_3\}$. The committee $\{c_1, c_5\}$ satisfies \JR{}, but violates \fPO{} (dominated by a committee assigning $0.5$ to each of $c_1, c_2, c_3, c_4$). However, $\{c_1, c_5\}$ is not fractionally dominated by any convex combination of \JR{} committees.
\end{example}

\subsection{Maximizing approval score subject to \JR{}}
The preceding approach only asks whether a fractionally dominated \JR{} committee can be replaced by some \JR{} committee with higher approval score. A stronger approach is to skip the iterative search and directly maximize approval score over all \JR{} committees.
\citet[Corollary 1]{BBF+26fractional} characterize the Thiele rules that always return \fPO{} committees: for an objective of the form $\sum_{i\in N} h(u_i(W))$, this holds exactly when $h$ is strictly increasing and convex, where convexity means that $h(u+1)-h(u)\ge h(u)-h(u-1)$ for every $u\in\mathbb{N}$. Convex Thiele objectives reward concentrating utility, so they are not natural candidates for \JR{}-style fairness; the linear boundary case, Approval Voting, is the most fairness-friendly member of this family. Since Approval Voting itself can fail \JR{}, one might instead maximize approval score subject to \JR{}. The selected committee is \JR{} by construction and is also \PO{}, because any Pareto improvement would preserve \JR{} and strictly increase approval score. The following counterexample shows that this constrained rule can still fail \fPO{}.

\begin{example}[Maximizing approval score subject to \JR{} can violate \fPO{}]
    Take $k = 3$. Consider $n = 100$ voters and $6$ candidates $\{c_1, c_2, \ldots, c_6\}$. The voters are as follows:
    \begin{itemize}
        \item $33$ voters approve $\{c_5\}$.
        \item $1$ voter each approves $\{c_1, c_4, c_5\}$, $\{c_2, c_4, c_5\}$, and $\{c_3, c_4, c_5\}$.
    \item $51$ voters approve $\{c_1, c_2, c_3, c_4, c_6\}$.
    \item $4$ voters each approve $\{c_1, c_2, c_6\}$, $\{c_1, c_3, c_6\}$, and $\{c_2, c_3, c_6\}$.
    \item $1$ voter approves $\{c_6\}$.
    \end{itemize}
    Here, $c_1, c_2$ and $c_3$ are approved by $60$ voters each, $c_4$ is approved by $54$ voters, $c_5$ is approved by $36$ voters and $c_6$ is approved by $64$ voters. The unique committee that maximizes the approval score subject to being \JR{} is $W^\star = \{c_1, c_2, c_3\}$. $W^\star$ is \JR{} since the only unrepresented voters are the $33 < 100/3$ voters who approve only $\{c_5\}$ and the one voter who approves only $\{c_6\}$. Additionally, the only committees with a (weakly) higher approval score are $\{c_1, c_2, c_6\}, \{c_1, c_3, c_6\}$ and $\{c_2, c_3, c_6\}$. However, in each of these committees $c_5$ is a candidate that is approved by $33+1 = 34 \geq 100/3$ unrepresented voters. 

    Now, consider the fractional committee $y$, with $y_{c_1} = y_{c_2} = y_{c_3} = 0.9, y_{c_4} = 0.1, y_{c_5} = 0, y_{c_6} = 0.2$. Note that $u_i(y) \geq u_i(W^\star)$ for all voters $i$ and at least one inequality is strict:
    \begin{itemize}
        \item The voters who approve $\{c_5\}$ derive a utility of $0$ each from either $W^\star$ or $y$.
        \item The voters who approve exactly $\{c_1, c_4, c_5\}$, $\{c_2, c_4, c_5\}$, or $\{c_3, c_4, c_5\}$ derive a utility of $1$ each from either $W^\star$ or $y$.
        \item The voters who approve $\{c_1, c_2, c_3, c_4, c_6\}$ derive a utility of $3$ each from either $W^\star$ or $y$.
        \item The voters who approve $\{c_1, c_2, c_6\}$, $\{c_1, c_3, c_6\}$, or $\{c_2, c_3, c_6\}$ derive a utility of $2$ each from either $W^\star$ or $y$.
        \item The voter who approves only $\{c_6\}$ derives a utility of $0$ from $W^\star$ and $0.2$ from $y$. \qed
    \end{itemize}
\end{example}

\subsection{Maximizing a Thiele Objective Subject To \fPO{}}

All Thiele methods that satisfy \JR{} fail \fPO{} \citep[Observation 2]{BBF+26fractional}. Still, one might hope that maximizing, for example, the PAV or CC score across all \fPO{} committees might still allow the proof for \JR{} to go through. But this is not the case.

\begin{example}[Maximizing PAV score subject to \fPO{} can violate \JR{}]
	Take $k=8$. Let $D=\{c_1,c_2,c_3,c_4\}$ and $P=\{p_1,\dots,p_7\}$. The candidate set is $\{a,b\}\cup D\cup P$, so $m=13$. The voters are as follows:
	\begin{itemize}
		\item $9$ voters with approval set $\{a\}\cup S$ for each $S\in\binom{D}{2}$,
		\item $8$ voters with approval set $\{b\}\cup S$ for each $S\in\binom{D}{2}$,
		\item 1 voter with approval set $\{a\}\cup D$,
		\item $281$ voters with approval set $P$.
	\end{itemize}
	Thus, $n=6\cdot 9+6\cdot 8+1+281=384$. Consider the committee $W = \{a\} \cup P$. Note that $W$ is \fPO{} since it is the unique committee maximizing the approval score. This is because each $p \in P$ is approved by $281$ voters, $a$ is approved by $9 \cdot 6 + 1 = 55$ voters, $b$ is approved by $8 \cdot 6 = 48$ voters, and each $d \in D$ is approved by $9 \cdot 3 + 8 \cdot 3 + 1 = 52$ voters. Additionally, $W$ is the unique PAV-maximizing committee among \fPO{} committees. First, note that for each $p \in P$, the committee $U_p \coloneqq \{a, b\} \cup (P \setminus \{p\})$ is not \fPO{}. Indeed, it is fractionally dominated by the fractional committee assigning weight $1$ to each candidate in $P\setminus\{p\}$ and weight $\frac12$ to each candidate in $D$. The voters approving $P$ keep utility $6$, all voters approving $\{a\}\cup S$ or $\{b\}\cup S$ keep utility $1$, and the voter approving $\{a\}\cup D$ improves from utility $1$ to utility $2$.
    
    It remains to argue that every other committee has a strictly lower $\PAV$ score. We have $\PAV(W) = 281 H_7 + 55$, $\PAV(\{b\} \cup P) = 281 H_7 + 48$ and $\PAV(\{d\} \cup P) = 281 H_7 + 52$. All remaining committees of size $8$ must have at least $2$ candidates from $\{a, b\} \cup D$. Formally, consider any remaining committee $T = R \cup Q$, where $R \subseteq \{a, b\} \cup D$ and $Q \subseteq P$. Then, $R \neq \{a, b\}$ and $r = |R| \geq 2$ (the cases $R = \{a, b\}, \{a\}, \{b\}$ and $\{d\}$ for some $d \in D$ have already been considered). Let $M(R)$ denote the contribution of the $103$ voters not approving exactly $P$. The following table gives, for each $r$, the maximum possible value of $M(R)$ and the resulting margin by which $W$ beats the best such committee:
	\[
	\begin{array}{c|c|c}
		r & \max M(R) & \operatorname{PAV}(W)-\max_T \operatorname{PAV}(T)\\
		\hline
		2 & 95 & 1/7\\
		3 & 388/3 & 177/14\\
		4 & 1861/12 & 6033/140\\
		5 & 10397/60 & 666/7\\
		6 & 11357/60 & 3629/21
	\end{array}
	\]
	All margins are strictly positive. Thus, the only committees with PAV score above $W$ are the committees $U_p$, and all of them fail \fPO{}. However, $W$ violates \JR{}. The $48$ voters approving some set $\{b\}\cup S$, $S\in\binom{D}{2}$, all approve $b$ and approve no member of $W$. Since $48=\frac{384}{8}=\frac{n}{k}$,  these voters witness a \JR{} violation.
\qed
\end{example}

\begin{example}[Maximizing CC score subject to \fPO{} can violate \JR{}]
	Take $k=4$. Let $D=\{c_1,c_2,c_3,c_4,c_5,c_6\}$ and $P=\{p_1,p_2,p_3\}$. The candidate set is $\{a,b\}\cup D\cup P$, so $m=11$. The voters are as follows:
	\begin{itemize}
		\item $4$ voters with approval set $\{a\}\cup S$ for each $S\in\binom{D}{3}$,
		\item $3$ voters with approval set $\{b\}\cup S$ for each $S\in\binom{D}{3}$,
		\item $1$ voter with approval set $\{a\}\cup D$,
		\item $33$ voters with approval set $\{p_j\}$ for each $p_j \in P$.
	\end{itemize}
	Thus, $n=4\cdot 20+3\cdot 20+1+ 33 \cdot 3=240$. Consider the committee $W = \{a\} \cup P$. Give weight 3 to the singleton voters, and weight 1 to other voters. Then each $p_j$ has score $99$ and $a$ has score $81 = 4 \cdot 20 + 1$, with other candidates having lower score. Thus $W$ is \fPO{}. Additionally, $W$ is the unique CC-maximizing committee among \fPO{} committees. The CC score of $W$ itself is $180 = 4\cdot 20 + 1 + 33 \cdot 3$. Let us compute the CC score of other committees of size $k = 4$. First note that $b$ is a worse choice than $a$, so committees that contain $b$ but not $a$ can be improved by swapping the two.
	\begin{itemize}
		\item $W' = \{a, b, d_i, d_j\}$ has score $141$.
		\item $W' = \{a, b, d_i, p_j\}$ has score $174$.
		\item $W' = \{a, b, p_i, p_j\}$ has score $207$.
		\item $W' = \{a, d_i, p_i, p_j\}$ has score $81 + 3 \cdot \binom{|D|-1}{2} + 66 = 177$.
		\item $W' = \{a, d_i, d_j, p_k\}$ has score $81 + 3 \cdot (\binom{6}{3} - \binom{4}{3}) + 33 = 162$.
		\item $W' = \{a, d_i, d_j, d_k\}$ has score $81 + 3 \cdot (\binom{6}{3} - \binom{3}{3})  = 138$.
		\item Mixtures of $D$ and $P$ have score at most $179$.
	\end{itemize}
	Hence, the only committees that beat $W$ on CC score are of the form $\{a, b, p_i, p_j\}$. But these committees fail \fPO{} because they are dominated by the fractional committee with weight 1 on $p_i$ and $p_j$, and with weight $\frac13$ on each candidate from $D$ (every voter keeps their previous utility, except the voter with approval set $\{a\} \cup D$ who improves from 1 to 2).
	
	However, $W$ fails \JR{} because $\frac{n}{k} = 60$ voters approve $b$ while they are unrepresented in $W$.
	\qed
\end{example}

\end{document}